\newcommand{\R}{\ensuremath{\mathbb R}}
\newcommand{\capacity}{\text{cap}}
\newcommand{\ind}[1]{\mathbb{I}\!\left[ #1 \right]}
\newtheorem*{theorem*}{Theorem}
\newtheorem{fact}{Fact}[section]
\newtheorem{lemma}{Lemma}[section]
\newtheorem{theorem}[lemma]{Theorem}
\newtheorem{corollary}[lemma]{Corollary}
\newtheorem{claim}[lemma]{Claim}
\renewcommand{\R}{\ensuremath{\mathbb R}}
\newcommand{\E}[1]{{\mathbb{E}}\left[#1\right]}
\newcommand{\EE}[2]{{\mathbb{E}}_{#1}\left[#2\right]}
\renewcommand{\P}[1]{{\mathbb{P}}\left[#1\right]}
\newcommand{\PP}[2]{{\mathbb{P}}_{#1}\left[#2\right]}
\theoremstyle{definition}
\newtheorem{definition}[lemma]{Definition}
\begin{document}

\title{Dual Charging for Half-Integral TSP}

\author{  Nathan Klein\thanks{Boston University. Email: \textsf{nklei1@bu.edu}} \quad  Mehrshad Taziki\thanks{Email: \textsf{mtaziki18@gmail.com}} }



\maketitle

\begin{abstract}
We show that the max entropy algorithm is a randomized 1.49776 approximation for half-integral TSP, improving upon the previous known bound of 1.49993 from Karlin et al. This also improves upon the best-known approximation for half-integral TSP due to Gupta et al. 
Our improvement results from using the dual, instead of the primal, to analyze the expected cost of the matching. 
We believe this method of analysis could lead to a simpler proof that max entropy is a better-than-3/2 approximation in the general case. 

We also give a 1.4671 approximation for half integral LP solutions with no proper minimum cuts and an even number of vertices, improving upon the bound of Haddadan and Newman of 1.476. We then extend the analysis to the case when there are an odd number of vertices $n$ at the cost of an additional $O(\frac{1}{n})$ factor.
\end{abstract}

\section{Introduction}

In the metric traveling salesperson problem (TSP), we are given a weighted graph $G=(V,E)$ and aim to find the shortest closed walk that visits every vertex. Metric TSP is NP-Hard to approximate better than $\frac{123}{122}$ \cite{KLS15}. In the 1970s, Christofides and Serdyukov \cite{Chr76,Ser78} famously gave a $\frac{3}{2}$ approximation for the problem. This was not improved until recently, when Karlin, Klein and Oveis Gharan showed a $\frac{3}{2}-\epsilon$ approximation for $\epsilon=10^{-36}$ \cite{KKO21,KKO23} in 2021. Gurvits, Klein, and Leake later showed that one can set $\epsilon=10^{-34}$ \cite{GLK24}.  

These recent improvements originate in work from 2010 and 2011 by Asadpour et al. on asymmetric TSP \cite{AGMOS17} and Oveis Gharan, Saberi and Singh on graphic TSP \cite{OSS11}. These two papers introduced the so-called \textit{max entropy} algorithm for TSP. In this algorithm, one first solves the subtour elimination \cite{DFJ54} (or Held-Karp \cite{HK70}) linear program for TSP to obtain a fractional point $x \in \R_{\ge 0}^E$. Then, the max entropy distribution $\mu$ over spanning trees is computed whose marginals match $x$ (up to error $\epsilon$, which can be made exponentially small in $n$). Finally, a tree $T$ is sampled from $\mu$ and the minimum cost matching is added to the odd vertices of $T$. We know this algorithm is at worst a $\frac{3}{2}-10^{-34}$ approximation in general, and we also know there are instances where it performs no better than 1.375 \cite{JKW24}. A fascinating open question is to determine what the worst case approximation ratio of the algorithm is. 

Prior to the first improvement in the general case, Karlin, Klein and Oveis Gharan showed a randomized 1.49993 approximation for a special case of TSP. In particular, they showed that given a solution $x$ to the subtour LP with $x_e \in \{0,\frac{1}{2},1\}$ for all $e \in E$, the max entropy algorithm outputs a solution of expected cost at most $1.49993 \cdot c(x)$ \cite{KKO20}. These ``half-integral" points are of special interest due to a conjecture of Schalekamp, Williamson, and van
Zuylen \cite{SWvZ13} that the integrality gap of the subtour LP is obtained by half integral points. Notably, the lower bound of 1.375 for max entropy in \cite{JKW24} is a half integral instance, as is the classical envelope graph which demonstrates an integrality gap of at least $\frac{4}{3}$ for the subtour LP (see e.g. \cite{Williamson90}). In 2021, Gupta, Lee, Li, Mucha, Newman, and Sarkar improved the bound for half integral points to 1.49842 \cite{GLLM21} using a mix of the max entropy algorithm and a combinatorial one proposed by Haddadan and Newman \cite{HN19}.

In this work, we show that the max entropy algorithm (with no adaptations) is a 1.49776 approximation for half-integral TSP. This improves over the state of the art for half integral TSP with a significantly simpler algorithm, as well as gives a large relative improvement for the analysis of max entropy. In particular, we show:
\begin{restatable}{theorem}{maintheorem}
    \label{thm:main}
    Given an optimal solution $x$ to the subtour LP with $x_e \in \{0,\frac{1}{2},1\}$ for all $e\in E$, the max entropy algorithm produces a solution of cost at most $1.49776 \cdot c(x)$ in expectation.  
\end{restatable}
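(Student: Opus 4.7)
The proof will follow the now-standard two-term decomposition for max-entropy analyses: $\mathbb{E}[\mathrm{ALG}] \le \mathbb{E}[c(T)] + \mathbb{E}[c(M)]$, where $T \sim \mu$ is the sampled spanning tree and $M$ is the minimum-cost $T$-join on $\mathrm{odd}(T)$. The tree term $\mathbb{E}[c(T)] \le c(x)$ is immediate: the max-entropy distribution has marginals matching $x$, and (after subtracting one edge to land in the spanning tree polytope via the Held--Karp constraints) the expected tree cost is at most $c(x)$. The whole content of the theorem is therefore to show $\mathbb{E}[c(M)] \le (\tfrac{1}{2}-\varepsilon)\,c(x)$ for $\varepsilon$ giving the $1.49776$ factor, which is where the new dual-based approach enters.

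Rather than exhibiting an explicit fractional $T$-join for every sample (as in \cite{KKO20,GLLM21}), I would bound $\mathbb{E}[c(M_T)]$ through the LP dual of the $T$-join polytope:
\[
c(M_T) \;=\; \max\Bigl\{\textstyle\sum_S z_S \;:\; z\ge 0,\ \ \sum_{S:\,e\in\delta(S)} z_S \le c_e\ \forall e,\ \ z_S=0\ \text{if }|S\cap\mathrm{odd}(T)|\text{ even}\Bigr\},
\]
which equals the matching cost by LP duality and integrality of the $T$-join polytope. The ``dual charging'' strategy is to assign, for each cut $S$ in a carefully chosen family $\mathcal{F}$, a nonnegative potential $\phi_S(T)$ with the property $\sum_S z_S \le \sum_{S\in\mathcal{F}} \phi_S(T)$ for every dual-feasible $z$; taking expectations yields $\mathbb{E}[c(M)] \le \sum_{S\in\mathcal{F}} \mathbb{E}[\phi_S(T)]$. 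The cleanliness of this bound, compared to the primal approach, is that the potentials can be analyzed cut-by-cut without incurring the overlap losses that force primal corrections on nearby slack edges to ``double count.''

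The family $\mathcal{F}$ is dictated by the half-integral structure: the 4-regular multigraph obtained by contracting $x_e=1$ edges and doubling $x_e=\tfrac{1}{2}$ edges has a canonical decomposition into singleton cuts and ``2-cuts'' (pairs of sides joined by exactly four $\tfrac{1}{2}$-edges), and $c(x)$ can be written as $\tfrac{1}{2}\sum_{S\in\mathcal{F}} c(\delta(S))$ on a suitable laminar subfamily. For each such $S$, the plan is to (i) bound the odd-parity probability $\Pr[|S\cap\mathrm{odd}(T)|\text{ odd}]$ strictly below $1$ using the strongly-Rayleigh / negative-correlation properties of $\mu$ established in \cite{AGMOS17,OSS11,KKO21}, and (ii) show that the dual charge assignable to $S$ is at most $\tfrac{1}{2}c(\delta(S))$ times this parity probability. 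Summing over $\mathcal{F}$, step (ii) converts the probabilistic slack on each cut into a uniform constant-factor saving, yielding $(\tfrac{1}{2}-\varepsilon)\,c(x)$.

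The hardest step will be step (ii), where one must verify that the potentials $\phi_S(T)$ are simultaneously dual-upper-bounds and globally additive to $c(x)$ with the correct prefactor. Concretely, establishing dual feasibility of the implicit ``charging'' requires showing $\sum_{S\in\mathcal{F}:\,e\in\delta(S)} (\text{contribution from } S) \le c_e$ for each edge $e$, which in turn relies on the half-integral fact that any edge lies in at most two cuts of $\mathcal{F}$. The quantitative part is to push the odd-parity probability bound from the weak bound that already yields $1.49993$ down to what is needed for $1.49776$; this requires a finer case analysis of max-entropy parity correlations on 2-cuts in $4$-regular half-integral graphs, which is exactly the regime where the dual charging avoids the wasteful primal overcorrections inherent to the approach of \cite{KKO20}.
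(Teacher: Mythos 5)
Your outer frame (tree cost at most $c(x)$ plus an expected matching cost of at most $(\tfrac12-\varepsilon)c(x)$, analyzed cut-by-cut using parity probabilities of the strongly Rayleigh tree distribution) matches the paper's, but the mechanism you propose for the matching bound has genuine gaps, and it is not what the paper means by ``dual.'' The paper still constructs an explicit randomized feasible point $y$ in the $O$-Join polyhedron for every sampled tree (initialize $y_e=\tfrac14$, decrease $y_e$ by $\tau$ on ``even at last'' edges, then increase edges to repair odd min cuts). The dual it uses is the \emph{subtour LP} dual, purely as an accounting device: by complementary slackness $c_e=\sum_{S:e\in\delta(S)}z_S$ for support edges, so $c(y)=\sum_S z_S\,y(\delta(S))$, and the single main lemma $\mathbb{E}[y(\delta(S))]\le 0.99552$ for every min cut, combined with strong duality $2\sum_S z_S=c(x)$, gives $\mathbb{E}[c(y)]\le 0.49776\,c(x)$. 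Your proposal instead works with the $T$-join dual and potentials $\phi_S(T)$ dominating every dual-feasible $z$; by LP duality such a domination is equivalent to exhibiting a feasible fractional $O$-join of cost $\sum_S\phi_S(T)$ --- precisely the object you announce you will avoid constructing --- so the central construction is simply missing from the sketch.

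Two further steps would fail as stated. First, the claimed cost decomposition $c(x)=\tfrac12\sum_{S\in\mathcal{F}}c(\delta(S))$ over a laminar family is false in general: it would require every half-edge to lie in exactly one cut of $\mathcal{F}$ and every $1$-edge in exactly two, independently of the costs (e.g.\ taking all singletons gives $\sum_e c_e$, not $\sum_e c_ex_e$). The correct and needed statement is the weighted one via the optimal dual $z$, whose support consists of tight cuts by complementary slackness; this is exactly why the paper's per-cut bound only has to be proved for min cuts and why arbitrary nonnegative weights $z_S$ suffice. Second, the quantitative claim in your step (ii) --- that the charge on $S$ is at most $\tfrac12 c(\delta(S))$ times the odd-parity probability --- does not reflect how the saving arises: even when $S$ is even, $y(\delta(S))$ is still close to $1$ (the gain is only the small $\tau$-reduction on even-at-last edges), and that gain is partially cancelled by the increases needed to repair lower odd cuts sharing edges with $S$. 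Controlling this interaction (edges going higher, top versus bottom edges and cuts, the $K_5$ degree-cut case, and the new probabilistic bounds such as the $7/32$ lemma and the capacity-based $1/4$ bound for bottom edges, with the tuned constants $\alpha,\beta,\tau$) is where the constant $1.49776$ actually comes from, and none of it is supplied or replaced by an alternative argument in your proposal.
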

Therefore our result also bounds the integrality gap of the subtour LP in the half integral case by 1.49776 (and, should the conjecture of \cite{SWvZ13} hold, the general case as well). As discussed in more detail in \cref{sec:overview}, the primary reason for the improvement is a new dual-based analysis style. Due to the large relative improvement over the previous analysis of max entropy in the half integral case (in terms of the distance from 1.5, this is an improvement of a factor of about 30) and the fact we no longer need to analyze certain pathological cases, we believe these techniques may lead to a significant simplification of the analysis in the general case.

In \cref{sec:degreecut}, we show that a variant of the max entropy algorithm achieves the following for a special case of interest, sometimes called the half-integral degree cut case.
\begin{restatable}{theorem}{maindegreetheorem}
    \label{thm:degree-main}
    Given a half-integral solution $x$ to the subtour LP with no proper tight sets and $n$ vertices, there exists a randomized algorithm that  produces a solution of expected cost at most $(1.4671+O(\frac{1}{n})) \cdot c(x)$ when $n$ is odd and $1.4671 \cdot c(x)$ when $n$ is even.
\end{restatable}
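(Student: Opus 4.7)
The plan is to combine max-entropy sampling with a fractional $T$-join that exploits the absence of proper tight sets. The starting observation is structural: since $x$ is half-integral and has no proper tight sets, the support of $x$ must be $4$-regular with $x_e = 1/2$ on every edge (any edge with $x_e = 1$ would force its two endpoints to be a tight pair). Consequently $2c(x) = \sum_e c_e$ and every non-singleton cut satisfies $x(\delta(S)) \geq 5/2$, so the baseline fractional $T$-join $y_e^0 := x_e/2 = 1/4$ has cost $c(x)/2$ and carries a uniform slack of at least $1/4$ on every non-singleton cut.

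For even $n$, I would sample a spanning tree $T$ from the max-entropy distribution with marginals $\frac{n-1}{n} x$ and output $T$ together with a minimum-cost $O(T)$-join, where $O(T)$ denotes the odd-degree vertices. The expected tree cost is $\frac{n-1}{n}c(x)$ by linearity. To upper bound the matching cost, I construct a random feasible fractional $O(T)$-join $y = y^0 - d$, where $d_e \geq 0$ is a ``discount'' supported on those edges $e = (u,v)$ whose two endpoints are both even in $T$, with a uniform magnitude $\delta > 0$ on such edges. Feasibility at odd singletons is automatic because discount edges touch only even vertices; feasibility at non-singleton odd cuts follows from the $\geq 1/4$ slack combined with a crude upper bound on the support-degree $|\delta(S)|$. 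The expected savings are
\[
\mathbb{E}[c(d)] = \delta \sum_e c_e \cdot \mathbb{P}[\text{both endpoints of } e \text{ even in } T],
\]
and the target is to push these savings above $(1/2 - 0.4671)c(x) + c(x)/n$.

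For odd $n$, I would reduce to the even case by locally modifying the instance around a single vertex $v$ (e.g.\ splitting $v$ and re-solving on $n+1$ vertices so as to keep half-integrality and the no-proper-tight-sets property), running the even-$n$ algorithm on the new instance, and contracting back to a tour on the original graph. Routine bookkeeping shows the distortion introduced by the local modification is $O(c(x)/n)$, accounting for the extra additive factor.

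The main obstacle is the quantitative step: lower bounding $\mathbb{P}[\text{both endpoints of } e \text{ are even in } T]$ uniformly over edges $e$, and jointly choosing $\delta$ so that every non-singleton odd cut constraint is satisfied while the expected savings reach the target. Here I would exploit the strong Rayleigh property of the max-entropy distribution to obtain negative correlation between incident tree edges, and use the dual-based matching analysis developed for \cref{thm:main} to bundle the savings from overlapping even-vertex pairs into a single coherent accounting. This dual viewpoint, rather than Haddadan and Newman's primal $T$-join construction, is what should allow the improvement from their $1.476$ bound to $1.4671$.
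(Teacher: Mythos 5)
Your proposal runs \emph{plain} max entropy at marginals $\frac{n-1}{n}x$ and then discounts edges whose endpoints are both even. The paper does something structurally different: it first samples a random (near-)perfect matching $M$ from the matching-polytope point $\nu=\frac{n-1}{2n}x$, then \emph{perturbs} the marginals to $z_e=1$ on $M$, $z_e=\tfrac{5}{12}$ at the unsaturated root, $z_e=\tfrac{1}{3}$ elsewhere, and only then samples a tree with these marginals. The discount is applied only to the matching edge at each vertex (drop from $\tfrac12$ to $\tfrac16$), conditioned on both endpoints being even. This perturbation is not cosmetic --- it is the load-bearing step. Because matching edges are placed in the tree with probability $1$, the event ``both endpoints of a normal matching edge are even'' becomes ``exactly one of the three remaining edges at each endpoint is in $T$,'' which is a \emph{single coefficient} of the symmetrized generating polynomial $p'(t_a,t_b)$ of degree $(3,3)$, so the Gurvits capacity bound (\cref{cor:capacity}) directly gives $\mathbb{P}\geq(4/9)^2=16/81$.

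Your version keeps the shared edge $e=uv$ random, so the target event decomposes into a sum of coefficients (conditioned on $e\in T$ vs.\ $e\notin T$), and the capacity machinery no longer applies cleanly. You would need a genuinely new argument that $\mathbb{P}[\text{both endpoints even}]\geq 16/81$ holds uniformly for spanning-tree marginals near $\tfrac12$, and the blind guess $\delta=\tfrac{1}{12}$ pins you to exactly this threshold (since expected savings $=\tfrac{p}{3}$ must reach $\tfrac{16}{243}$). Nothing in the proposal makes that bound plausible; in fact, the natural $d=4$ version of \cref{cor:capacity} gives the weaker $(27/64)^2<16/81$, suggesting the bound may genuinely fail without the perturbation. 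The appeal to ``negative correlation'' and to the dual viewpoint of \cref{thm:main} does not close this gap: the dual charging in \cref{thm:main} is about how to bundle costs across a \emph{cut} once per-edge probability lower bounds are in hand; it does not by itself supply the missing probability estimate.

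There is also a concrete problem in your odd-$n$ reduction. Splitting a vertex $v$ in the standard way introduces an edge of $x$-value $1$, which creates the proper tight set $\{v_1,v_2\}$ and therefore leaves the degree-cut regime, so the even-$n$ theorem no longer applies to the modified instance. The paper avoids this entirely: for odd $n$ it samples a \emph{maximum} (not perfect) matching from $\nu$, leaving a single unsaturated root $r$ whose incident edges get special marginals $\tfrac{5}{12}$ and a fixed $y$-value $\tfrac14$, and the resulting $O(\tfrac1n)$ loss is accounted for explicitly in \cref{lem:degree-y-expected} via the probability $O(\tfrac1n)$ that a given vertex is the root or within distance two of it. If you insist on a splitting reduction, you would need to show both that half-integrality and the no-proper-tight-sets property can be preserved (they generally cannot by naive splitting) and that the tour on the modified instance contracts back with only $O(c(x)/n)$ loss; neither is routine here.
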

A tight set is a set $S \subseteq V$ so that $x(\delta(S)) = 2$, and a cut $S$ is proper if $2 \le |S| \le n-2$. This improves over the ratio of 1.476 given by Haddadan and Newman \cite{HN19} for the degree cut case for $n$ even, and extends it to the case when $n$ may be odd using a method similar to \cite{GLLM21}. In this variant, inspired by \cite{HN19} and also studied in \cite{GLLM21}, before running the max entropy algorithm a random matching is sampled to perturb the marginals.

\subsection{Other Related Work}

The half-integral cycle cut case is in some sense the opposite of the degree cut case studied in \cref{thm:degree-main}, in that there are many tight sets. Cycle cut instances have the property that every tight set can be written as the union of two other tight sets. Jin, Klein, and Williamson showed a $\frac{4}{3}$ approximation algorithm and integrality gap for the half-integral cycle cut case \cite{JKW25}.

There has been exciting recent progress on two important variants of TSP, \textit{graphic} TSP, in which the input graph is unweighted, and \textit{path} TSP in which the goal is to find the shortest $s$-$t$ walk visiting every vertex. For graphic TSP, Oveis Gharan, Saberi, and Singh \cite{OSS11} first demonstrated that max entropy was a $\frac{3}{2}-\epsilon$ approximation for a small constant $\epsilon > 0$. Using different methods, M{\"o}mke and Svensson \cite{MS11} then obtained a 1.461 approximation. This was further improved by Mucha \cite{Muc12} to $\frac{13}{9}$ and then to 1.4 by Seb{\"o} and Vygen \cite{SV14}. For path TSP, Zenklusen showed a $\frac{3}{2}$ approximation \cite{Zen19} using a dynamic programming approach. Traub, Vygen, and Zenklusen then showed that given an $\alpha$ approximation for TSP there is an $\alpha+\epsilon$ approximation for Path TSP for any $\epsilon > 0$. 

\subsection{Overview}\label{sec:overview}

As discussed, in the max entropy algorithm we first solve the subtour LP to obtain a solution $x \in \R_{\ge 0}^E$. The subtour LP is as follows, where $\delta(S)$ for $S \subsetneq V$ is the set of edges with exactly one endpoint in $S$ and for $F \subseteq E$, $x(F) = \sum_{e \in F} x_e$.
\begin{equation}
\begin{aligned}
\min \hspace{2mm}& \sum\limits_{e \in E} c_e x_e \\
  \mathrm{s.t.} \hspace{2mm} &  x(\delta(S))  \ge 2 & \forall \, S\subsetneq V\\
  & x(\delta(\{v\}))  = 2 & \forall \, v\in V\\
 &   x_{e}\geq  0  &  \forall \, e \in E
\end{aligned}
\label{eq:tsplp}
\end{equation}

Then, using $x$, we find a maximum entropy distribution $\mu$ over spanning trees\footnote{Technically, over spanning trees plus an edge.} subject to the constraint that $\PP{T \sim \mu}{e \in T} = x_e$ for all $e \in E$, up to a small multiplicative error (see \cref{subsec:maxent} for more discussion on this sampling procedure). Finally, we sample a tree $T$ from $\mu$ and add a minimum cost matching $M$ on the odd vertices of the tree $T$. $T \uplus M$ is an Eulerian graph and thus contains an Eulerian walk, which is a solution to the metric TSP problem.\footnote{Using the fact that the costs form a metric, one can then shortcut the Eulerian tour to a Hamiltonian cycle of no greater cost if desired.} The main goal, then, is to \textit{bound the expected cost of the matching} over the randomness of the sampled tree $T$. To do so, we find a function that given a tree $T$ constructs a vector $y \in \R_{\ge 0}^E$ so that $c(M) \le c(y)$, and then bound the expected cost of $y$. In particular, $y$ will be a feasible point in the $O$-Join polytope (where $O$ is the set of odd vertices in $T$); see \cref{sec:prelim} for further details.  

This is the approach taken by all previous papers analyzing max entropy \cite{OSS11,KKO20,KKO21} or variants of this algorithm \cite{GLLM21}, and we do not deviate from this here. However, here we construct $y$ in a new way that streamlines the analysis and allows for a sharper guarantee.
While previous works bounded $\mathbb{E}[c(y)]$ by bounding the contribution to $y$ of \textit{each edge individually}, we instead bound the contribution of each dual variable to $\mathbb{E}[c(y)]$. By complementary slackness, the dual variables correspond to tight cuts with $x(\delta(S)) = 2$, which contain multiple edges $e$ with $x_e \in \{\frac{1}{2},1\}$. By bounding the cost of \textit{groups} of edges in terms of $\sum_{e \in \delta(S)} y_e$, the argument becomes much more flexible. Previously, analyses had to deal with pathological cases where single edges had a high $\mathbb{E}[y_e]$ and create workarounds. By looking at groups of edges, issues due to these pathological edges are  averaged out. Previously it was necessary to bound $\mathbb{E}[y_e] < \frac{x_e}{2}$ for all $e \in E$ (as this would correspond to an expected cost of less than $\frac{1}{2}c(x)$ for the matching), in our construction there may be edges with $\mathbb{E}[y_e] > \frac{x_e}{2}$. 

There is one other big advantage of this approach, which is as follows.\footnote{Understanding this advantage is a bit technical, so we recommend readers unfamiliar with work on the max entropy algorithm come back to this point after having read the body of the paper.} In the analysis of max entropy, typically one begins with $y = \frac{x_e}{2}$ for all edges. Then, depending on the parity of certain cuts in the tree, some edges have $y$ decreased and other edges have $y$ increased. In the per-edge argument, it was important to bound the expected increase of each edge carefully. However, when considering a cut, these increases and decreases often cancel out, simplifying things quite a bit. (Despite this, our analysis is not simpler than \cite{KKO20}. This is because we take care to improve the analysis in several places which require additional casework.)

One other aspect of our improvement is to incrementally improve some of the important probabilistic bounds from \cite{KKO20} using polynomial capacity (see e.g. \cite{GLK24} for usage of this tool in TSP) or more precise arguments. However the impact of this is relatively minor compared to that of the move to a dual-based analysis. Polynomial capacity, on the other hand, is the main tool used to improve the analysis of the degree cut case in \cref{sec:degreecut}.

\section{Preliminaries}\label{sec:prelim}

\subsection{Notation}

\paragraph{Sets and Cuts.} Given a set $S$, let $\delta(S)$ be the set of edges with exactly one endpoint in $S$ and $E(S)$ be the set of edges with both endpoints in $S$. Given $x \in \R^E$ and $F \subseteq E$, let $x(F) = \sum_{e \in F} x_e$. 
 A set $S\subseteq V$ is \textit{tight} if $x(\delta(S)) = 2$. A set $S$ is \textit{proper} if $2 \le |S| \le n-2$. Two sets $A,B$ \textit{cross} if $A \cap B, A \smallsetminus B, B \smallsetminus A, \overline{A \cup B}$ are all non-empty.

\paragraph{Support Graph.} We will use $x$ to construct a 4-regular and 4-edge-connected graph which we will call $G$. $G$ will have the same vertex set as our input. Then for every edge $e$ with $x_e = \frac{1}{2}$, we add the corresponding edge to the support graph. For every edge with $x_e = 1$, we will add two parallel copies of $e$ to $G$.

\paragraph{Minimum Cuts.} Tight sets are therefore minimum cuts of the support graph $G$, as they have 4 edges. It is helpful to note that since $G$ is Eulerian, every cut which is not a minimum cut has at least 6 edges. For an overview of the structure of minimum cuts, we recommend an unfamiliar reader to look at the cactus representation of Dinits, Karzanov, and Lomonozov \cite{DKL76}, or a succinct explanation of it by Frank and Fleiner \cite{FF09}. This structure is very important to our analysis.

\paragraph{Trees.} Given a tree $T$ and a set of edges $F$, we let $F_T = |T \cap F|$ denote the number of edges of $F$ contained in $T$. We will use $\delta_T(S)$ to denote the number of edges in $\delta(S)$ contained in $T$.

\paragraph{Parity.} For a sampled tree $T$, we say a set $S \subseteq V$ is \textit{even} if $\delta_T(S)$ is even and \textit{odd} otherwise.

\subsection{Polyhedral Background}

The subtour LP is given in \cref{eq:tsplp}. We will also crucially use the dual linear program in our analysis. By the parsimonious property \cite{GB93}, the subtour bound does not change after dropping the equality constraints. Thus, the dual can be seen to have the following formulation:
\begin{equation}
\begin{aligned}
\max \hspace{2mm}& 2\sum\limits_{S \subsetneq V} z_S \\
  \mathrm{s.t.} \hspace{2mm} & \sum_{S\subsetneq V \mid e \in \delta(S)} z_S \le c_e & \forall e \in E\\
 &   z_S\geq  0  &  \forall \, S \subsetneq V
\end{aligned}
\label{eq:duallp}
\end{equation}

A second important polytope is the spanning tree polytope. For any graph $G=(V,E)$, it is defined as:
\begin{equation}
\begin{aligned}
& x(E) = n-1 & \\
& x(E(S)) \leq |S|-1 &  \forall S\subseteq V\\
& x_e \geq 0 & \hspace{6ex} \forall e\in E.
\end{aligned}
\label{eq:spanningtreelp}
\end{equation}
Edmonds \cite{Edm70} proved that the extreme point solutions of this polytope are the characteristic vectors of the spanning trees of $G$. The following is a key and well-known fact relating solutions to the subtour LP and the spanning tree polytope which was exploited to give the below algorithm for half-integral TSP in \cite{KKO20}.
\begin{fact} If $x$ is a solution to the subtour LP \eqref{eq:tsplp} and $S$ 
  is a tight set, then the restriction of $x$ to edges in $E(S)$, that is, the fractional solution $\{x_e\}_{e\in E (S)}$,
is in the spanning tree polytope on  the graph $G'=(S,E(S))$. 
\end{fact}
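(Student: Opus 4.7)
The plan is to verify the three defining inequalities of the spanning tree polytope \eqref{eq:spanningtreelp} for the restriction $\{x_e\}_{e \in E(S)}$ on the graph $G' = (S, E(S))$. Nonnegativity is immediate from feasibility of $x$ in the subtour LP, so the work reduces to checking the equality $x(E(S)) = |S|-1$ and the subset constraints $x(E(S')) \le |S'|-1$ for every $S' \subseteq S$.

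The key identity I will use throughout is the standard double counting: for any $U \subseteq V$, summing the degree constraints $x(\delta(\{v\})) = 2$ over $v \in U$ counts each edge of $E(U)$ twice and each edge of $\delta(U)$ once, giving
\[
2|U| \;=\; \sum_{v \in U} x(\delta(\{v\})) \;=\; 2\,x(E(U)) + x(\delta(U)),
\]
so $x(E(U)) = |U| - \tfrac{1}{2} x(\delta(U))$.

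First I would apply this with $U = S$. Because $S$ is tight, $x(\delta(S)) = 2$, and the identity yields $x(E(S)) = |S| - 1$, which is the required equality defining membership in the spanning tree polytope on $G'$. Next, for any $S' \subseteq S$, applying the same identity with $U = S'$ gives $x(E(S')) = |S'| - \tfrac{1}{2} x(\delta(S'))$. Since $x$ is feasible for the subtour LP, the cut constraint forces $x(\delta(S')) \ge 2$ (taking $S' = V$ or $S' = \emptyset$ is vacuous and does not arise here since we are restricting inside $S$), so $x(E(S')) \le |S'| - 1$, as needed.

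There is no real obstacle here: the whole argument is an exchange between degree sums and cut sums, and the tightness of $S$ is used exactly once, to upgrade the cut inequality to an equality for $U = S$ so that the total edge-weight hits the tree polytope's hyperplane $x(E) = |S|-1$. The only minor care needed is to observe that the bound $x(\delta(S')) \ge 2$ is a constraint on cuts in the original graph on $V$, not in the subgraph $G'$, which is exactly why $S$ being a subset of $V$ (and not an arbitrary abstract vertex set) is essential.
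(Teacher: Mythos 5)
Your proof is correct and follows essentially the same route as the paper: both arguments sum the degree constraints over vertices of $S$ (or $S' \subseteq S$), use tightness $x(\delta(S)) = 2$ to get $x(E(S)) = |S|-1$, and use the cut constraint $x(\delta(S')) \ge 2$ for the subset inequalities. Your write-up just makes the double-counting identity and the role of tightness slightly more explicit than the paper's brief version.
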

\begin{proof}
 By counting the degree sum inside of $E(S)$, we obtain:
$$\sum_{e \in E(S)} x_e = \frac{2|S|-2}{2} = |S|-1.$$
For sets $S' \subseteq S$ which are not tight, the constraint holds similarly, giving $\sum_{e \in E(S')} x_e \le |S|-1$. 
 \end{proof}

We will also make significant use of the following characterization of the cost of the matching. 
It is well known (see \cite{EJ73}, for example) that given a metric, the following LP bounds upper bounds the cost of an integral perfect matching on a set of vertices $O$:
\begin{equation}
\begin{aligned}
\min \hspace{2mm}& \sum\limits_{S \subsetneq V} c_ey_e \\
  \mathrm{s.t.} \hspace{2mm} & y(\delta(S)) \ge 1 & \forall S \subsetneq V, |S \cap O| \text{ odd}\\
 &   y_e\geq  0  &  \forall \, e \in E
\end{aligned}
\label{eq:Ojoinlp}
\end{equation}
This is known as the $O$-Join polyhedron, and we call a feasible point $y$ in this polyhedron an \textbf{$O$-Join solution}, where $O$ is the set of vertices in the sampled tree $T$ with odd degree. In this context, note that the set of cuts $S$ which have constraints in \eqref{eq:Ojoinlp} are exactly those for which $\delta_T(S)$ is odd.

\subsection{Max Entropy Distribution}\label{subsec:maxent}

A distribution $\mu$ over spanning trees is $\lambda$-uniform if $\lambda \in \R_{\ge 0}^E$ and for every spanning tree $T$ of the graph,
$$\P{T} \propto \prod_{e \in T} \lambda_e$$
Given $\lambda$, we will let $\mu_\lambda$ denote the resulting $\lambda$-uniform distribution. 
Given a point $z$ in the spanning tree polytope, \cite{AGMOS17} show that one can efficiently find a $\lambda$-uniform tree with marginals arbitrarily close to $z$:
\begin{theorem}[\cite{AGMOS17}] Let $z$ be a point in the spanning tree polytope of a graph $G=(V,E)$. Then, for any $\epsilon > 0$, there is an algorithm running in time polynomial in the size of the graph and $\log(\frac{1}{\epsilon})$ that outputs a vector $\lambda \in \R_{\ge 0}^E$ so that the $\lambda$-uniform distribution $\mu_\lambda$ has the property:
$$\PP{T \sim \mu_{\lambda}}{e \in T} \le (1+\epsilon)z_e$$
\end{theorem}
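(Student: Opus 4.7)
The plan is to cast the problem as a convex optimization and solve it via duality. Specifically, I would formulate the max entropy program: maximize $-\sum_T p_T \log p_T$ over distributions $\{p_T\}$ on spanning trees subject to $\sum_T p_T \ind{e \in T} = z_e$ for each edge $e$ and $\sum_T p_T = 1$. Standard Lagrangian analysis (KKT conditions on this strictly concave program) shows that the optimal solution has the Gibbs form $p_T \propto \exp\!\left(\sum_{e \in T} \gamma_e\right)$; setting $\lambda_e = e^{\gamma_e}$, this is exactly the $\lambda$-uniform form required by the theorem.

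Next, I would pass to the dual, whose objective reduces (after eliminating the normalization multiplier) to minimizing $g(\gamma) = \log Z(\gamma) - \sum_e \gamma_e z_e$, where $Z(\gamma) = \sum_T \prod_{e \in T} e^{\gamma_e}$ is the partition function over spanning trees. A direct calculation shows $g$ is convex in $\gamma$ with $\tfrac{\partial g}{\partial \gamma_e} = \PP{T \sim \mu_\lambda}{e \in T} - z_e$, so at a minimizer the marginals of $\mu_\lambda$ equal $z$. The essential algorithmic observation is that by Kirchhoff's matrix-tree theorem, both $Z(\gamma)$ and each marginal $\PP{T \sim \mu_\lambda}{e \in T}$ can be evaluated in polynomial time as determinants of weighted Laplacians with edge weights $e^{\gamma_e}$; hence we have a polynomial-time first-order oracle for $g$.

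Given this oracle, I would apply the ellipsoid method (or an equivalent polynomial-time convex optimization procedure) to approximately minimize $g$. The quantitative claim needed is that there exists a near-optimal $\gamma$ with $\|\gamma\|_\infty$ bounded by a polynomial in the bit complexity of $z$ and in $\log(1/\epsilon)$, together with a smoothness/strong-convexity-style estimate showing that $g(\gamma) - \min g \le \delta$ implies the marginals of $\mu_\lambda$ lie within $(1+\epsilon)z_e$ for $\delta = \mathrm{poly}(\epsilon/n)$. Both bounds follow from analyzing the Hessian of $\log Z$ and standard sensitivity arguments for determinantal measures.

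The main obstacle is handling points $z$ on the boundary of the spanning tree polytope, where an optimal $\gamma^*$ can have unbounded entries (e.g. when an edge must appear in every tree, forcing $\lambda_e \to \infty$, or $z_e = 0$, forcing $\lambda_e \to 0$). I would address this by first preprocessing: contract edges that are tight at the upper bound and delete edges with $z_e = 0$ to reduce to the strict-interior case. Then I would replace the remaining $z$ by a slight convex combination $(1-\epsilon')z + \epsilon' \bar z$ for an interior reference point $\bar z$ chosen so that the resulting per-edge marginals remain below $(1+\epsilon)z_e$. After this perturbation $\|\gamma^*\|_\infty$ is bounded by a polynomial in $n$ and $\log(1/\epsilon)$, the ellipsoid method converges in $\mathrm{poly}(n, \log(1/\epsilon))$ iterations, and exponentiating the returned $\gamma$ gives the required $\lambda$. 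The combination of convex duality, matrix-tree-based oracles, and careful boundary handling is what yields both the polynomial running time and the one-sided $(1+\epsilon)$ marginal guarantee.
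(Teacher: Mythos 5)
This theorem is not proved in the paper at all---it is imported verbatim from Asadpour et al.\ \cite{AGMOS17}---and your sketch follows essentially the same route as their original argument: the max-entropy convex program whose optimizer has the exponential-family ($\lambda$-uniform) form, the dual $\log Z(\gamma)-\sum_e \gamma_e z_e$ with gradients computable via the matrix-tree theorem, and an ellipsoid-style minimization after perturbing toward the relative interior to control $\|\gamma\|_\infty$, which is exactly how the one-sided $(1+\epsilon)$ guarantee and the $\mathrm{poly}(n,\log(1/\epsilon))$ running time are obtained there. So your proposal is correct in outline and matches the standard proof of the cited result; no substantive gap to flag.
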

Since $\epsilon$ can be chosen to be exponentially small in $n$, following previous work on half integral TSP, we will assume for brevity that we may set $\epsilon = 0$. This error can be handled using the stability of max entropy distributions \cite{SV19} (one can see this applied in  \cite{KKO21}). 

$\lambda$-uniform spanning tree distributions have maximum entropy over all distributions with the same marginal vector. Therefore, when one can set $\epsilon=0$, they are indeed the distributions of maximal entropy respecting the constraints. 

\subsection{Algorithm and Critical Sets}

We will study the version of the max entropy algorithm used by \cite{KKO20}. Here, we first fix an edge $e^+=(u,v)$ where $u,v$ have two parallel edges between them. Such an edge always exists in any extreme point solution (see \cite{BP91}), or as noted in \cite{KKO20} one can create such an edge by splitting a vertex in two and putting an edge of value 1 and cost 0 between its two endpoints. After fixing $e^+$, we iteratively select a minimal proper tight set $S$ that is not crossed (and $e^+ \not\in E(S)$). If we restrict $x$ to the edges inside $S$, then $x$ is in the spanning tree polytope \eqref{eq:spanningtreelp}. So, we can compute the max entropy distribution $\mu$ inside $S$, sample $T_S \sim \mu$, and contract it. When no such set exists, the graph must be a cycle $v_1,\dots,v_k$, at which point we sample a random cycle. In particular, for every $v_i,v_{i+1}$ that share two edges we sample one of them independently and uniformly at random. The sampled tree $T$ at the end is the union of all trees $T_S$ sampled during the procedure. See \cref{alg} for a complete description and \cite{KKO20} for further details.

\begin{algorithm}[h]
\caption{Algorithm for half-integral TSP \cite{KKO20}}
\begin{algorithmic}[1]
\State Let $x$ be a half-integral solution to \eqref{eq:tsplp} with an edge $e^+$ with $x_{e^+}=1$
\State Set $T = \emptyset$  
\While {there exists a proper tight set of $G$ that is not crossed (by a tight set)}
\State Let $S$ be a minimal such set such that $e^+\notin E(S)$
\State Compute the maximum entropy distribution $\mu$ of $E(S)$
\State Sample a tree from $\mu$ and add its edges to $T$ 
\State Set $G = G/S$
\EndWhile
\State Sample a uniformly random cycle from $G$ (including $e^+$) and add it to $T$
\State Compute the minimum $O$-Join $M$ on the odd nodes of $T$ and return $T \uplus M$ or its shortcut 
\end{algorithmic}
\label{alg}
\end{algorithm}

Following \cite{KKO20}, we call every tight set contracted by the algorithm a \textbf{critical set}. In addition, we call vertices critical sets. For a critical set $S$, we call $\delta(S)$ a critical cut. The collection of critical sets is a laminar family ${\cal L}$, where recall a family is laminar if for all $S,T \in {\cal L}$, $S \cap T \in \{\emptyset, S ,T\}$. 

There are two types of sets $S \in {\cal L}$. If at the moment before contraction, there are no proper minimum cuts inside $S$, then call $S$ a \textbf{degree cut}. Otherwise, call $S$ a \textbf{cycle cut}, in which case at the moment before contraction $S$ is a path with two edges between every pair of adjacent vertices. See \cref{fig:degree_cycle_cut} for examples of each type of cut in $G$, and see \cref{sub:hierarchy} for more details.

\begin{definition}[$G_S$]
    For every critical cut $S$, we let $G_S$ denote the support graph $G$ after contracting every critical cut lower than $S$ in the hierarchy as well as $V \setminus S$ into vertices. We use $G[S]$ to denote the graph $G_S \setminus w$, where $w$ is the vertex representing $V \setminus S$ after contraction.
\end{definition}

We also note here that by definition, edges which are in $G_S$ and $G_{S'}$ for critical cuts $S\not=S'$ are \textbf{independent}. This fact will be used crucially throughout the proof.

\paragraph{Remark.} We defer additional preliminaries concerning strongly Rayleigh distributions and polynomial capacity to after the proof overview.

\section{Overview of our Method}

In this section, we introduce the key tools used in our analysis and provide a high-level overview of our techniques. In \cref{sub:hierarchy}, we define the necessary notation and describe the structural properties of the hierarchy of critical cuts. Understanding this hierarchy is crucial to our analysis. In \cref{sub:Ojoin}, we describe the construction of an $O$-Join solution with a small expected cost. Furthermore, we provide an explanation of our proof and describe our use of the dual formulation (\cref{eq:duallp}) in analyzing the cost of the $O$-Join solution.

\subsection{Hierarchy of Critical Cuts}
\label{sub:hierarchy}

The algorithm constructs a natural hierarchy of min-cuts. That is, the critical cuts form a laminar family of min cuts, and therefore can be arranged in a hierarchical structure. At the bottom of this hierarchy are the singleton vertex cuts.

Let $S$ be a critical cut. Based on the structure of the contracted graph $G_S$, we classify $S$ into one of two types. Note that, we assumed $G_S$ is the support graph after contracting $V \setminus S$ and all critical cuts lower than $S$ in the hierarchy.

\begin{enumerate}
    \item \textbf{Degree cut:} If $G_S$ contains no proper min-cuts, we call $S$ a degree cut.
    \item \textbf{Cycle cut:} Otherwise, $G_S$ must form a cycle with two parallel edges between each pair of consecutive vertices. In this case, we call $S$ a cycle cut. We call the parallel edges in $G[S]$ that share their endpoints \textbf{companions}. Note that a pair of companions $e,f$ has the property that exactly one is in the tree and the event of which edge is chosen is independent of all other events. The remaining two pairs of parallel edges sharing endpoints in $G_S$ are called \textbf{cycle partners}. 
\end{enumerate}

\begin{fact}
    \label{fact:cut-types}
    Every min-cut in $G$ is either a critical cut or an interval of a cycle cut.
\end{fact}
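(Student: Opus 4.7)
I will fix an arbitrary min-cut $C$ of $G$ and follow it through the execution of \cref{alg}. The central invariant, established by induction on the number of contractions, is: until $C$ is either selected by the algorithm or absorbed into a strictly larger critical cut, $C$ remains a valid cut in the current contracted graph, i.e., a union of its current vertices. The inductive step is the main engine of the argument: when a critical cut $U$ is about to be contracted, $U$ is uncrossed by every tight set of the current graph by construction; since by the inductive hypothesis $C$ is a valid cut whose four boundary edges are preserved, $C$ is itself a tight set of the current graph, so $U$ and $C$ cannot cross. Hence one of four mutually exclusive possibilities holds: $U \subseteq C$ or $U \cap C = \emptyset$ (so $C$ remains valid after contraction), $U = C$ (so $C$ is selected), or $C \subsetneq U$ (so $C$ is absorbed).

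With the invariant in hand, I split into three cases. If $C$ is ever selected, it is a critical cut and we are done. If $C$ is absorbed, let $T$ be the smallest critical cut strictly containing $C$. Just before $T$ is contracted, the image $\bar C$ of $C$ in $G_T$ is a union of maximal critical subsets of $T$ with the same four boundary edges as $C$, so it is a $4$-edge cut of $G_T$. I verify $\bar C$ is also proper: $|\bar C|\geq 2$, since otherwise $C$ would coincide with a single maximal critical subset of $T$ and thus be critical, contradicting absorption; and $|V(G_T)\smallsetminus\bar C|\geq 2$, since this set contains both $w$ and any vertex of $T\smallsetminus C$. Hence $G_T$ has a proper min-cut, so by definition $T$ is a cycle cut and $G_T$ is a doubled cycle. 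In the remaining case, $C$ survives to the end of the algorithm, where the current graph is itself a doubled cycle, which I treat as the outermost cycle cut.

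In both the absorbed and surviving sub-cases the remaining task is to show that every proper min-cut of a doubled cycle is an interval of consecutive vertices. I will prove this by a simple counting argument: if $\bar C$ consists of exactly $k$ maximal arcs around the cycle, then traversing the cycle produces $2k$ transitions between $\bar C$ and its complement, each contributing the two parallel edges between the corresponding adjacent pair of vertices, so $|\delta(\bar C)| = 4k$; a min-cut of size $4$ forces $k=1$, making $\bar C$ a single arc and $C$ an interval of the cycle cut. I expect the subtle step to be maintaining the invariant, where the fact that $C$ is a tight set in the current graph is precisely what lets the uncrossedness of $U$ apply to it; the doubled-cycle counting itself is essentially routine.
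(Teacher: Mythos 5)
The paper itself states this fact without proof (it is inherited from the structural analysis of \cite{KKO20}), so I judge your argument on its own terms. Your overall plan --- follow the cut through the contractions of \cref{alg}, use the fact that each contracted set is uncrossed by every tight set of the current graph, and finish with the $4k$-edge count in a doubled cycle --- is sound, but the case analysis in your inductive step has a genuine hole. Two sets $C$ and $U$ fail to cross when $C\cap U=\emptyset$, $C\subseteq U$, $U\subseteq C$, \emph{or} $C\cup U=V$; you list only the first three possibilities. The fourth case really occurs: take $C=V\smallsetminus S^*$ for a critical set $S^*$ that was contracted earlier and lies strictly inside the set $U$ currently being contracted (say $U$ is the degree cut having $S^*$ as a child). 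Then $C$ is a proper tight set of the current graph and does not cross $U$, yet none of your four alternatives holds, and after contracting $U$ the set $C$ is no longer a union of current vertices --- so your invariant fails as stated. In fact this example shows that the literal vertex-set version of the claim you are proving is false: $V\smallsetminus S^*$ is a min cut that is neither a critical cut nor an interval of a cycle cut. The fact must be read up to complementation, i.e., as a statement about the edge cut $\delta(C)=\delta(\overline{C})$.

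The repair is routine but necessary: track the unordered pair $\{C,\overline{C}\}$ and maintain the invariant that at least one of the two sides is a union of current vertices. In the missing fourth case one has $\overline{C}\subseteq U$, so either $\overline{C}=U$ (the cut is a critical cut) or $\overline{C}\subsetneq U$ (the cut is absorbed, and you continue your absorbed-case analysis with $\overline{C}$ in place of $C$). With that change the rest of your argument --- properness of the image in $G_T$, the conclusion that $T$ is a cycle cut, the observation that a $4$-edge cut of a doubled cycle is a single arc, and treating the final cycle as the outermost cycle cut --- goes through. One minor wording issue: if the image of $C$ in $G_T$ is a single vertex, the correct conclusion is simply that $C$ equals a critical set and you are done, rather than a ``contradiction with absorption.''
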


For an edge $e$, let $S_e$ denote the minimal critical cut such that $e \in G[S_e]$. We will distinguish edges into two types depending the structure of $S_e$.

\begin{definition}[Top Edge and Bottom Edge]
    We call an edge $e$ a \textbf{top edge} if $S_e$ is a degree cut, and a \textbf{bottom edge} if $S_e$ is a cycle cut.
\end{definition}

We define a similar notation for the critical cuts.

\begin{definition}[Top Cut and Bottom Cut]
    A min cut $S$ is said to be a \textbf{top cut} if its parent in the hierarchy is a degree cut. Otherwise, $S$ is called a \textbf{bottom cut}, i.e., if its parent in the hierarchy is a cycle cut. See \cref{fig:definitions} for an example.
\end{definition}

Moreover, for an edge $e$, we define the \textbf{Last Cuts} of $e$ as the two maximal min cuts $S$ such that $e \in \delta(S)$. More precisely, let $e$ be a bottom edge where $S_e$ has child cuts $S_1, \cdots, S_k$ with two edges between $S_i$ and $S_{i+1}$ for $1 \leq i \leq k-1$. If $e$ is between $S_i$ and $S_{i+1}$, then, the last cuts of $e$ are $S_1 \cup \cdots \cup S_i$ and $S_{i+1} \cup \cdots \cup S_k$.

Note that the last cuts of a top edge are critical cuts, meanwhile, last cuts of a bottom edge are not necessarily a critical cut.

\begin{definition}[Going Higher]
    We say an edge $e \in \delta(S)$ goes higher in $S$ if the lowest critical cut $S'$ such that $S \subsetneq S'$ satisfies $e \in \delta(S')$. Additionally, when $S$ is a critical cut, by $\delta^{\uparrow}(S)$ we denote the edges in $\delta(S)$ that go higher in $S$. Similarly, $\delta^{\rightarrow}(S) = \delta(S) \setminus \delta^{\uparrow}(S)$ denotes the edges in $\delta(S)$ that don't go higher.
\end{definition}

\begin{figure}[!htbp]\label{fig:definitions}
    \centering
    \begin{subfigure}[t]{0.48\textwidth}
    \centering
    \begin{tikzpicture}
    \Vertex[label = $C_1$, x = 0, y = 2, color = cyan!35, style = {draw = blue!50!cyan!60, line width = 0.7pt}]{S}
    \Vertex[label = $C_2$, style = {draw = blue!50!cyan!60, line width = 0.7pt}, x = 2, y = 2, color = cyan!35]{A}
    \Vertex[label = $C_4$, x = 0, y = 0, color = cyan!28, style = {draw = blue!50!cyan!60, line width = 0.7pt}]{B}
    \Vertex[label = $C_3$, x = 2, y = 0, color = cyan!28, style = {draw = blue!50!cyan!60, line width = 0.7pt}]{C}

    \Vertex[x=-1, y = 3.5,Pseudo]{P}
    \Vertex[x = 3, y = 3.5,Pseudo]{P1}
    \Vertex[x=-1, y = -1.5,Pseudo]{P2}
    \Vertex[x=3, y=-1.5,Pseudo]{P3}

    \Edge[label = $a$, position = above, lw = 1 pt](S)(A) 
    \Edge[position = below, lw = 1 pt](S)(B) 
    \Edge[position = below, lw = 1 pt](S)(C) 
    \Edge[position = below, lw = 1 pt](B)(A) 
    \Edge[position = below, lw = 1 pt](C)(A) 
    \Edge[position = below, lw = 1 pt](B)(C) 

    \Edge[label = $e$, color=magenta!80!purple!70, position = {above right}, lw = 1 pt](S)(P)
    
    \Edge[color=magenta!80!purple!70, position = {above left}, fontcolor = black, lw = 1 pt](A)(P1)
    
    \Edge[color=magenta!80!purple!70, position = {below right}, fontcolor = black, lw = 1 pt](B)(P2)
    
    \Edge[color=magenta!80!purple!70, position = {below left}, fontcolor = black, lw = 1 pt](C)(P3)

    \draw [black,line width=1.2pt, dashed] (1,1) ellipse (2.15 and 2.15);
    \node [draw=none] at (1,-1.5) () {$C$};
\end{tikzpicture}
    \caption{$C$ is a \textbf{degree cut} and $C_i$ are \textbf{top cuts}. The black edges are \textbf{top edges}. $e$ \textbf{goes higher} in $C_1$, while $a$ does not. The \textbf{last cuts} of $a$ are $C_1$ and $C_2$.}
\end{subfigure}
\hfill
\begin{subfigure}[t]{0.48\textwidth}
    \centering
    \raisebox{0.75cm}{
    \begin{tikzpicture}
            \Vertex[label = $S_1$, x = 0, y = 0, color = cyan!35, style = {draw = blue!50!cyan!60, line width = 0.7pt}]{A}
            \Vertex[label = $S_2$,style = {draw = blue!50!cyan!60, line width = 0.7pt}, x = 1.5, y = 0, color = cyan!35]{B}
            \Vertex[label = $S_3$,x = 3, y = 0, color = cyan!28, style = {draw = blue!50!cyan!60, line width = 0.7pt}]{C}

            \Vertex[x=-2, y=0.4,Pseudo]{P11}
            \Vertex[x=5, y=0.4,Pseudo]{P12}
            \Vertex[x=-2, y=-0.4,Pseudo]{P13}
            \Vertex[x=5, y=-0.4,Pseudo]{P14}

            \Edge[label = $c$, position = below, bend = 25, lw = 1 pt](B)(A)
            \Edge[label = $d$, position = above, bend = 25, lw = 1 pt](A)(B)
            \Edge[position = below, bend = 25, lw = 1 pt](C)(B)
            \Edge[position = above, bend = 25, lw = 1 pt](B)(C)

            \Edge[label = $g$, position = above, color=magenta!80!purple!70, lw = 1 pt](A)(P11)

            \Edge[position = above, color=magenta!80!purple!70, lw = 1 pt](C)(P12)

            \Edge[label = $h$, position = below, color=magenta!80!purple!70, lw = 1 pt](A)(P13)

            \Edge[position = below, color=magenta!80!purple!70, lw = 1 pt](C)(P14)

            \draw [black,line width=1.2pt, dashed] (1.5,0) ellipse (2.25 and 1.05);
            \node [draw=none] at (1.5,-1.5) () {$S$};
    \end{tikzpicture}}
    \caption{$S$ is a \textbf{cycle cut} and $S_i$ are \textbf{bottom cuts}. The black edges are \textbf{bottom edges}. $g$ and $h$ \textbf{go higher} in $S_1$, while $c$ and $d$ do not. $c$ and $d$ are \textbf{companions} and the \textbf{last cuts} of them are $S_1$ and $S_2 \cup S_3$. Moreover, $g, h$ are \textbf{cycle partners}.}\label{fig:degree_cycle_cut}
\end{subfigure}
\end{figure}

\subsection{Constructing the $O$-Join Solution}
\label{sub:Ojoin}

Given a tree $T$ sampled from the max-entropy distribution, we will describe a randomized process to construct a feasible solution for the $O$-Join formulation (\cref{eq:Ojoinlp}) where $O$ is the odd degree vertices of $T$. 

Before describing the construction, we will restate the definition of even at last edges from \cite{KKO20}.

\begin{definition}[Definition 4.3 in  \cite{KKO20}]
    For an edge $e$ we say $e$ is \textbf{even at last} if the two last cuts of $e$ are even. If $e$ is a bottom edge, this is equivalent to defining $e$ to be even at last when all the min cuts containing $e$ on the cycle defined by the graph consisting of $S_e$ with $V \setminus S_e$ contracted are even. If $e$ is a top edge, then it is even at last if its last cuts are even simultaneously.
\end{definition}

Let $x$ be the optimal solution of the subtour LP (\cref{eq:tsplp}). We will initialize $y = \frac{x}{2}$ so that $y$ satisfies the $O$-Join constraints. Now, when an edge $e$ is even at last, we will decrease $y_e$ by $\tau$, where $\tau$ is a parameter we will set later. Since an even at last edge can still cover lower level min-cuts in the hierarchy that are odd in $T$, we will increase the value of $y_e$ accordingly to make $y$ a feasible $O$-Join solution. 

We utilize the fact that when an edge $e$ is even at last, lower level cuts $S$ such that $e \in \delta(S)$ are (in most cases) still even with probability $\Omega(1)$. For an edge $e$, let $p_e$ denote the probability of $e$ being even at last. Unfortunately, some edges can have $p_e \approx 0$ (see \cite{KKO21}). This is an issue for arguments that bound the contribution of each edge individually. Therefore, deviating from prior work, we will instead look at the expected number of even at last edges in $\delta(S)$ for a min cut $S$. This value is denoted by $p(\delta(S)) = \sum_{e \in \delta(S)} p_e$. We show  lower bounds for this value for every min cut $S$, which in turn gives that $y(\delta(S))$ decreases  meaningfully for each min cut $S$ in the $O$-Join solution we construct. Moreover, when we increase edges to cover the violated $O$-Join constraint of a cut, we increase them according to (roughly) their $p_e$ value. This in turn means that edges that increase in the third step of our construction, should also decrease meaningfully in the second step. A more accurate and complete description of this process is provided at \cref{sec:analysis}.

Our main goal is to show that the $O$-Join solution of a tree sampled from the max entropy algorithm has cost at most $0.49776 \cdot c(x)$ in expectation. This immediately proves \cref{thm:main} as the $O$-Join polyhedron (\cref{eq:Ojoinlp}) has an integrality gap of 1. To do so, we will show the $O$-Join solution $y$ has expected cost at most $0.49776 \cdot c(x)$. Instead of bounding the contribution of each edge, we will bound the expectation on each minimum cut $S$ as follows:

\begin{restatable}{lemma}{mainlemma}
    \label{lem:main}
        There exists a randomized $O$-Join solution $y$ for the random tree $T$ sampled from the max entropy distribution such that for each min cut $S$ we have,
        \begin{align*}
            \mathbb{E}[y(\delta(S)] \leq 1 - 0.00448 = 0.99552
        \end{align*}
\end{restatable}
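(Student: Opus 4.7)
The plan is to construct $y$ in three stages and then verify the bound cut-by-cut. First, initialize $y_e = x_e/2$, so that on every min cut $S$ one has $y(\delta(S)) = 1$ deterministically. Second, fix a parameter $\tau > 0$ (to be tuned at the end) and subtract $\tau$ from $y_e$ for every edge $e$ that is even at last with respect to the sampled tree $T$. Third, restore $O$-Join feasibility: for every cut $S'$ with $\delta_T(S')$ odd and $y(\delta(S')) < 1$, add mass back onto $\delta(S')$, distributing this mass across the edges of $\delta(S')$ in proportion to $p_{e'}$, so that the same edges that produced large decreases in step two are also the ones that absorb the corrections. Writing $p(\delta(S)) = \sum_{e \in \delta(S)} p_e$, the target inequality becomes
\[
\mathbb{E}[y(\delta(S))] \;=\; 1 \;-\; \tau \cdot p(\delta(S)) \;+\; \mathbb{E}[\text{restoration mass placed on } \delta(S)] \;\leq\; 0.99552,
\]
so the proof reduces to showing that on every min cut $S$, the cut-wide decrease beats the cut-wide restoration by at least $0.00448$.

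The verification then splits according to the dichotomy of Section 3.1: by Fact 3.1, $S$ is either a critical degree cut, a critical cycle cut, a singleton vertex cut, or an interval of some critical cycle cut. In each case I would (i) derive a lower bound on $p(\delta(S))$ by combining per-edge even-at-last probabilities with the independence of tree-edge indicators across distinct critical graphs $G_{S'}$ and with the companion/cycle-partner anti-correlations inside each cycle cut; and (ii) upper bound the expected restoration on $\delta(S)$ by writing it as a sum over sub-cuts $S' \subsetneq S$ whose $\delta(S')$ intersects $\delta(S)$, with each such $S'$ contributing at most $\Pr[S' \text{ is odd in } T]\cdot \tau$ times the fraction of $\delta(S')$'s total $p$-mass that lies in $\delta(S)$. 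For degree cuts this reduces to a sum of essentially independent top-edge parities; for cycle cuts and their intervals it reduces to a short parity computation along the cycle because the parity of an interval is determined by the higher-going cycle-partner edges at its two boundary positions together with the companion parities in its interior.

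The main obstacle, and the reason to move from a per-edge to a per-cut accounting, is that a single edge can have $p_e \approx 0$ while simultaneously serving as a restoration edge for many lower-level odd cuts; any bound of the form $\mathbb{E}[y_e] \leq x_e/2 - \text{const}$ then fails on such edges and forced earlier work into pathological casework. Summing across $\delta(S)$ heals this: the low-$p_e$ edges contribute negligibly to both the decrease and the restoration, while the edges that absorb substantial restoration are exactly those with sizable $p_e$, so the $\tau \cdot p(\delta(S))$ term cancels most of the restoration when viewed cut-wise. The delicate piece will be the cycle cut case and its intervals, where correlations between parities are strongest; there I would use the sharpenings flagged in Section 1, namely polynomial capacity estimates and tighter parity bounds layered on top of the baseline probabilistic estimates from \cite{KKO20}, to push the worst-case gap above $0.00448$. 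Finally, $\tau$ would be optimized to equalize the worst case across cut types, and the choice of the constant $0.00448$ suggests that tightness is attained on one specific small cycle configuration that pins down the final numerical value.
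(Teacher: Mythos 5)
Your high-level scaffold — initialize $y$ at $x/2$, decrease even-at-last edges by $\tau$, restore deficient odd min cuts proportionally, and then verify $\mathbb{E}[y(\delta(S))]$ cut-by-cut over the four structural types — does match the paper's plan, and you correctly identify the reason for moving from per-edge to per-cut accounting. However, there is a concrete gap in the construction itself that would make the accounting fail: you distribute restoration mass \emph{in proportion to the raw $p_{e'}$}, and you decrease every even-at-last edge with probability exactly $p_e$. The paper does neither. It first truncates $\tilde{p}_e = \min(\alpha, p_e)$ for top edges and $\tilde{p}_e = \min(\beta, p_e)$ for bottom edges (with $\alpha = 0.1129032$, $\beta = 1/4$), introduces an independent Bernoulli $B_e$ with success probability $\tilde{p}_e/p_e$ so that a decrease occurs only when $e$ is even at last \emph{and} $B_e = 1$, and then distributes restoration proportional to $\tilde{p}_e$, not $p_e$. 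That truncation is load-bearing: it is exactly what yields the bound $r_S(e) \le 1/2$ in Lemma~5.1 (since $\tilde{p}(\delta^\rightarrow(S)) \ge 2\alpha$ while $\tilde{p}_e \le \alpha$), and every subsequent case lemma (5.2, 5.4--5.7) leans on this or its refinements. Without truncation, a single edge $e$ with $p_e$ near $1$ sitting in a cut whose other edges have $p$ near $0$ would absorb essentially the entire deficit of that cut via $r_S(e) \approx 1$; but $e$ also absorbs deficits at its other last cut, and nothing in your scheme prevents the total expected restoration on $e$ from dwarfing the expected decrease $\tau p_e$ it earns, wrecking the cut-wide inequality at the neighboring cut. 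The two-threshold $\alpha$ vs.\ $\beta$ distinction (coupled with Lemma~4.9 giving $p_e \ge 1/4$ for all bottom edges and Lemmas 4.5--4.7 giving aggregate lower bounds for top cuts) is similarly essential, and your proposal does not anticipate it.

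There is also a second, smaller discrepancy: the paper's step~3 increases each edge $e$ by the \emph{maximum} of the two restoration demands from $e$'s two last cuts, not by their sum. Your description reads as additive ("for every cut $S'$ ... add mass back onto $\delta(S')$"), which would double-count the restoration charged to an edge shared by two odd deficient last cuts. The $\max$ rule is still feasible (each odd min cut is the last cut of the edges in $\delta^\rightarrow(S)$, and those edges receive at least $r_S(e)\Delta(S)$), and it is what keeps the charges small enough in the casework. Finally, you correctly flag polynomial capacity as a sharpening for the cycle/bottom case — the paper does exactly this in Lemma~4.9 via Corollary~3.10 — but the bulk of the casework difficulty actually sits in the top-cut-with-an-edge-going-higher case (Lemma~5.7), where the $K_5$ structure (Lemmas~4.10, 4.12) and the ratio bound Lemma~5.6 are needed; your verification outline doesn't engage with that.
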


To analyze the cost of our solution, we utilize the dual formulation of the subtour LP (\cref{eq:duallp}). Now, we will use \cref{lem:main} to prove  \cref{thm:main}.

\maintheorem*

\begin{proof}
    
By complementary slackness we know for an edge $e$ in the support of $x$, $c_e = \sum\limits_{S:e\in \delta(S)} z_S$, therefore, the cost of the $O$-Join solution $y$ can be written as,

\begin{align*}
    c(y) = \sum_{e} c_e y_e = \sum_{e}  \sum\limits_{S:e\in \delta(S)} z_S y_e  = \sum\limits_{S:e\in \delta(S)} z_S \cdot y(\delta(S))
\end{align*}

Now, by \cref{lem:main}, 

\begin{align*}
    \mathbb{E}[c(y)] \leq \sum\limits_{S:e\in \delta(S)} 0.99552 \cdot  z_S = 0.49776 \cdot c(x) 
\end{align*}

where the final equality follows from strong duality. This gives a $1.49776$ approximation.
\end{proof}

To prove \cref{lem:main}, we analyze top cuts and bottom cuts separately. For each cut $S$, we show that either the value of $y_e$ decreases for every edge in $\delta(S)$, or, if there exists an edge for which $y_e$ does not decrease significantly, then the remaining edges in $\delta(S)$ have a larger decrease.

\section{Probabilistic and Structural Lemmas}
In this section, we will provide some key probabilistic and structural lemmas on the max entropy algorithm. These lemmas will provide strong probabilistic bounds as well as crucial observations about the structure of critical cuts that are essential in proving \cref{lem:main}.

Before doing so, in \cref{subsec:SR} and \cref{subsec:capacity} we introduce some key additional preliminaries that were omitted in \cref{sec:prelim}.

\subsection{Strongly Rayleigh Distributions}\label{subsec:SR}

Given a distribution $\mu: \mathbb{Z}_{\ge 0}^n \to \mathbb{R}_{\ge 0}$ over ground set $[n]$, its generating polynomial $g_\mu$ is defined
$$g_\mu(z) = \sum_{\kappa \in \mathbb{Z}_{\ge 0}^n} \mu(\kappa)z^\kappa$$
where $z^\kappa = \prod_{i=1}^n z_i^{\kappa_i}$. $\mu$ is \textbf{strongly Rayleigh} (SR) \cite{BBL09} if $g_\mu(z)$ is real stable. A polynomial $p(z)$ is real stable if $p(z) \not= 0$ for all $z \in \mathbb{C}^n$ with $\text{Im}(z_i) > 0$ for all $i \in [n]$. In other words, $p$ is strongly Rayleigh if it has no zeros in the upper half of the complex plane. $\lambda$-uniform spanning tree distributions are strongly Rayleigh (see e.g. \cite{BBL09,OSS11}).

\paragraph{Negative Association.} SR distributions are negatively associated \cite{BBL09, FM92}. In particular, given any increasing functions $f,g: 2^E \to \mathbb{R}$ that depend on disjoint coordinates:
$$\mathbb{E}_\mu[f] \cdot \mathbb{E}_\mu[g] \ge \mathbb{E}_\mu[f \cdot g]$$
An easy consequence is the following:
\begin{fact}[Fact 3.16 in \cite{KKO20}]
    For any $\lambda$-uniform spanning tree distribution $\mu$, for any $S \subseteq E$, any $k \ge \mathbb{R}$, and any $e \in E$ we have:
    \begin{enumerate}
        \item If $e \not\in S$ then $\PP{\mu}{e \in T \mid S_T \ge k} \le \PP{\mu}{e \in T}$ and $\PP{\mu}{e \in T \mid S_T \le k} \ge \PP{\mu}{e \in T}$.
        \item If $e \in S$, then $\PP{\mu}{e \in T \mid S_T \ge k} \ge \PP{\mu}{e \in T}$ 
        and $\PP{\mu}{e \in T \mid S_T \le k} \le \PP{\mu}{e \in T}$. 
    \end{enumerate}
\end{fact}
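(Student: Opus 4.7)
The plan is to reduce both parts of the fact to the negative-association statement just listed by choosing appropriate monotone indicators, with the subtlety that in part (2) ordinary negative association does not apply directly and I will need to invoke a structural property of $\lambda$-uniform spanning-tree distributions beyond what is available for a general SR measure.

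For part (1), where $e \notin S$, set $f(T) = \ind{e \in T}$ and $g(T) = \ind{S_T \ge k}$. Both indicators are increasing in $T$ and depend on the disjoint coordinate sets $\{e\}$ and $S$, so negative association directly gives $\E{fg} \le \E{f}\cdot \E{g}$, which after dividing by $\P{S_T \ge k}$ is the first inequality. For the second inequality of part (1), take $g'(T) = \ind{S_T \le k}$, which is \emph{decreasing} in the coordinates of $S$; applying negative association to the pair $(f, -g')$, both increasing on disjoint coordinates, flips the sign of the covariance and yields $\E{fg'} \ge \E{f}\E{g'}$, which rearranges to $\P{e \in T \mid S_T \le k} \ge \P{e \in T}$.

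The main obstacle is part (2), where $e \in S$, so the two indicators now depend on overlapping coordinates and negative association cannot be applied as stated. My plan is to exploit the fact that every tree sampled from a $\lambda$-uniform spanning-tree distribution has exactly $n - 1$ edges, so $S_T + (E \setminus S)_T = n-1$ almost surely. Consequently $\{S_T \ge k\}$ coincides with $\{(E \setminus S)_T \le n-1-k\}$, an event measurable with respect to the coordinates $E \setminus S$, which are disjoint from $\{e\}$. Setting $g(T) = \ind{(E \setminus S)_T \le n-1-k}$, a decreasing function on $E \setminus S$, and applying the mixed-direction form of negative association used above yields $\E{fg} \ge \E{f}\E{g}$, which is the first inequality of part (2). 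The second inequality of part (2) follows symmetrically by rewriting $\{S_T \le k\} = \{(E \setminus S)_T \ge n-1-k\}$ and applying standard negative association to the two increasing indicators $\ind{e \in T}$ and $\ind{(E\setminus S)_T \ge n-1-k}$ on disjoint coordinates. The only nontrivial step is this fixed-size conversion; everything else is routine bookkeeping of monotonicity directions and coordinate supports, with degenerate cases (e.g.\ $\P{S_T \ge k} = 0$) making the claimed conditional probability vacuous.
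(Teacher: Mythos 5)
Your proposal is correct and follows essentially the intended route: the paper gives no separate proof, presenting the fact as an easy consequence of negative association (citing \cite{KKO20}), and your argument is exactly that derivation, with the standard complementation step $S_T = n-1-(E\setminus S)_T$ handling the case $e \in S$ and the usual sign-flip handling decreasing events. Nothing is missing.
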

There is also a useful consequence of negative association when applied to a homogeneous distribution. (Recall that a polynomial is homogeneous when all terms have the same degree. A distribution is homogeneous when all outcomes have the same number of elements.)
\begin{fact}[Fact 3.17 in \cite{KKO20}]
    For any $\lambda$-uniform spanning tree distribution $\mu$, for any set of edges $S \subseteq E$ and any $e \not\in S$, we have:
    $$\EE{T \sim \mu}{S_T} \le \EE{T \sim \mu}{S_T \mid e \not\in T} \le \EE{T \sim \mu}{S_T} + x_e$$
    and similarly,
    $$\EE{T \sim \mu}{S_T}-x_e \le \EE{T \sim \mu}{S_T \mid e \in T} \le \EE{T \sim \mu}{S_T}$$
\end{fact}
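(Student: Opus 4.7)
The plan is to combine the per-edge negative-correlation inequalities from the preceding fact with the homogeneity of spanning trees ($|T| = n-1$ deterministically). The two upper bounds without an additive correction---namely $\mathbb{E}[S_T] \le \mathbb{E}[S_T \mid e \notin T]$ and $\mathbb{E}[S_T \mid e \in T] \le \mathbb{E}[S_T]$---follow at once by summing the per-edge inequalities from the preceding fact over $f \in S \subseteq E \setminus \{e\}$, since for $f \neq e$ we have $\mathbb{P}[f \in T \mid e \notin T] \ge \mathbb{P}[f \in T] \ge \mathbb{P}[f \in T \mid e \in T]$.

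For the two bounds that carry an additive $x_e$ correction, introduce the conditional marginals $a_f := \mathbb{P}[f \in T \mid e \in T]$ and $b_f := \mathbb{P}[f \in T \mid e \notin T]$. The law of total probability gives $x_f = x_e\, a_f + (1-x_e)\, b_f$, which rearranges to
$$b_f - x_f = x_e(b_f - a_f) \qquad \text{and} \qquad x_f - a_f = (1-x_e)(b_f - a_f).$$
Because the conditional distribution $\mu \mid e \in T$ (resp.\ $\mu \mid e \notin T$) is itself a $\lambda$-uniform distribution on the spanning trees of $G/e$ (resp.\ $G \setminus e$), it is supported on $(n-2)$-element (resp.\ $(n-1)$-element) subsets of $E \setminus \{e\}$. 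Hence $\sum_{f \ne e} a_f = n-2$ and $\sum_{f \ne e} b_f = n-1$, and subtracting gives the key identity $\sum_{f \ne e}(b_f - a_f) = 1$. Negative correlation forces $b_f \ge a_f$ for every $f \ne e$, so each summand is nonnegative, and hence $\sum_{f \in S}(b_f - a_f) \le 1$ for any $S \subseteq E \setminus \{e\}$. Plugging this back into the two identities above and summing over $f \in S$ produces the desired additive bounds.

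I do not anticipate a significant obstacle; the argument is essentially algebraic once the conditional marginals are named. The only subtle point is recognizing that conditioning on $e$'s membership in $T$ preserves homogeneity---this is what validates the sum identities $\sum_{f\neq e}a_f = n-2$ and $\sum_{f\neq e}b_f = n-1$, which in turn produce the $\sum_{f\neq e}(b_f - a_f)=1$ constraint that powers the entire calculation.
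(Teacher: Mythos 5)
Your overall strategy is the right one, and it is essentially the standard argument: per-edge negative association gives the two ``free'' inequalities, and homogeneity of the conditioned distributions gives the identities $\sum_{f\neq e}a_f=n-2$, $\sum_{f\neq e}b_f=n-1$, hence $\sum_{f\neq e}(b_f-a_f)=1$ with nonnegative terms. The first chain is fully correct: summing $b_f-x_f=x_e(b_f-a_f)$ over $f\in S$ gives $\mathbb{E}[S_T\mid e\notin T]\le \mathbb{E}[S_T]+x_e$. The upper bound $\mathbb{E}[S_T\mid e\in T]\le\mathbb{E}[S_T]$ is also fine.

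The gap is in the last sentence of your calculation. Summing your second identity $x_f-a_f=(1-x_e)(b_f-a_f)$ over $f\in S$ and using $\sum_{f\in S}(b_f-a_f)\le 1$ yields $\mathbb{E}[S_T]-\mathbb{E}[S_T\mid e\in T]\le 1-x_e$, i.e.\ the lower bound $\mathbb{E}[S_T]-(1-x_e)\le\mathbb{E}[S_T\mid e\in T]$, \emph{not} the stated $\mathbb{E}[S_T]-x_e$; these coincide only when $x_e=\tfrac12$ and the $1-x_e$ version is the weaker one precisely when $x_e<\tfrac12$. This is not a defect you can patch by being cleverer: taking $S=E\setminus\{e\}$, the stated bound reads $n-1-2x_e\le n-2$, i.e.\ it is equivalent to $x_e\ge\tfrac12$, and it genuinely fails for a general $\lambda$-uniform distribution (e.g.\ the uniform spanning tree of $K_5$, where $x_e=\tfrac25$, $\mathbb{E}[S_T]-x_e=3.2$ but $\mathbb{E}[S_T\mid e\in T]=3$). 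So your algebra actually proves the correct general statement with $1-x_e$ in place of $x_e$; to recover the inequality as quoted you must invoke the standing assumption of this paper that every support edge has $x_e\in\{\tfrac12,1\}$, so that $1-x_e\le x_e$. Add that one line (or state the fact with $1-x_e$), and the proof is complete; as written, the claim that the identities ``produce the desired additive bounds'' is not justified for the second chain.
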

When we apply one of these two facts, we will often simply say we are using negative association.

\paragraph{Closure Properties.} A second consequence of real stability is that given an SR distribution $\mu$, the following distributions are SR as well:
\begin{itemize}
    \item \textbf{Projection.} $\mu|_S$, the projection of $\mu$ to the coordinates in some $S \subseteq [n]$. 
    \item \textbf{Conditioning on a binary element to be 0 or 1.} If $z_i \in \{0,1\}$, then $\mu_{\mid z_i=0}$ and $\mu_{\mid z_i=1}$ are SR.
\end{itemize}

\paragraph{Hoeffding's Theorem.}  
For any subset $S \subseteq E$, the law of $S_T$ for $T \sim \mu$ is distributed as the sum of independent Bernoulli random variables (not necessarily all with the same success probability). This is a consequence of the fact that the coefficients of any real rooted polynomial with positive coefficients can be described by a sum of Bernoullis \cite{Pit97,BBL09}. This makes the law of $S_T$ particularly easy to analyze for any $S \subseteq E$, especially when one applies the following theorem of Hoeffding:
\begin{theorem}[{\cite[Corollary 2.1]{Hoe56}}]\label{thm:hoeffding}
Let $g:\{0,1,\dots,n\}\to \R$ and $0\leq q\leq n$ for some integer $n\geq 0$.  Let $B_1,\dots,B_n$ be $n$ independent Bernoulli random variables with success probabilities $p_1,\dots,p_n$, where $\sum_{i=1}^n p_n = q$ that minimizes (or maximizes)
$$ \E{g(B_1+\dots+B_n)}$$
over all such distributions. Then,  $p_1,\dots,p_n\in\{0,x,1\}$ for some $0<x<1$. In particular, if only $m$ of $p_i$'s are nonzero and $\ell$ of $p_i$'s are 1, then the remaining $m-\ell$ are $\frac{q-\ell}{m-\ell}$.
\end{theorem}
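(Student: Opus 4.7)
The plan is to prove this classical theorem of Hoeffding by a two-coordinate perturbation argument combined with an auxiliary smoothing step. By compactness of the feasible set $\{p \in [0,1]^n : \sum_i p_i = q\}$ and continuity of $\mathbb{E}[g(B_1+\dots+B_n)]$ as a polynomial in $p$, an extremizer exists; I will show that one may be chosen with the stated form $p_i \in \{0, x, 1\}$.

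The key observation is that if we fix all coordinates $p_k$ for $k \notin \{i, j\}$ as well as the sum $s = p_i + p_j$, then $\mathbb{E}[g(B_1+\dots+B_n)]$ becomes an affine function of the single scalar $p_i p_j$. To see this, condition on $Y = \sum_{k \ne i,j} B_k$ and note that, using $p_i + p_j = s$, the variable $B_i + B_j$ takes values $0, 1, 2$ with probabilities $1-s+p_ip_j,\, s-2p_ip_j,\, p_ip_j$. Expanding,
\[
\mathbb{E}[g(B_1+\dots+B_n)] = \sum_{y} \mathbb{P}(Y=y)\bigl[(1-s+p_ip_j)g(y) + (s-2p_ip_j)g(y+1) + p_ip_j\, g(y+2)\bigr],
\]
which has the form $A + B \cdot p_i p_j$ for constants $A, B$ depending only on $g$, $s$, and the law of $Y$. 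Subject to $p_i + p_j = s$ and $p_i, p_j \in [0,1]$, the product $p_ip_j$ is a concave quadratic in $p_i$, attaining its maximum at $p_i = p_j = s/2$ and its minimum at an endpoint of the feasible interval, where one of $p_i, p_j$ is forced into $\{0, 1\}$.

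From this the structural statement follows. Suppose $p$ is a minimizer (the argument for maxima is identical up to sign). For any pair $i \ne j$ with $p_i, p_j \in (0,1)$, let $B = B_{ij}$ be the corresponding coefficient. If $B > 0$, a small perturbation strictly reduces $p_i p_j$ and hence the expectation, contradicting minimality; if $B < 0$, a perturbation toward $p_i = p_j$ strictly increases $p_ip_j$ and reduces the expectation, again a contradiction — unless already $p_i = p_j$. Hence for every interior pair, either $B_{ij} = 0$ or $p_i = p_j$. Now select, among all minimizers, one that additionally minimizes the auxiliary potential $\phi(p) := \sum_{i<j:\; p_i, p_j \in (0,1)} (p_i - p_j)^2$. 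If $\phi(p) > 0$, pick any witnessing pair $i, j$; then necessarily $B_{ij} = 0$, so replacing $(p_i, p_j)$ by $(s/2, s/2)$ preserves the minimum value while strictly decreasing $\phi$, contradicting the choice of $p$. Therefore $\phi(p) = 0$, so all coordinates of $p$ lying in $(0,1)$ share a common value $x$. The formula $x = (q-\ell)/(m-\ell)$ then follows from $\sum_k p_k = q$ by substituting the counts of ones and $x$s.

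The main obstacle is the case $B_{ij} = 0$, where the one-step perturbation argument gives no strict improvement and a naive iterative smoothing need not terminate. The auxiliary potential $\phi$ is introduced precisely to sidestep this: by selecting a $\phi$-minimal minimizer up front rather than iterating, the analysis collapses to a single contradiction. A minor technical point is verifying feasibility of the two-coordinate perturbation, but for $p_i, p_j \in (0,1)$ any sufficiently small move in either direction along the segment $p_i + p_j = s$ keeps both coordinates in $[0,1]$, so feasibility is immediate.
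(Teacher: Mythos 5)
The paper does not prove this statement; it is imported verbatim as \cite[Corollary~2.1]{Hoe56} (with a pointer to Lemma~3.6 of \cite{Klei24} only for the derived \cref{lem:cut_even}), so there is no in-paper proof to compare against. Your argument is the standard two-coordinate exchange proof of Hoeffding's result and it is essentially correct: the reduction to an affine function of $p_ip_j$ on the segment $p_i+p_j=s$ is right, the dichotomy ``$B_{ij}=0$ or $p_i=p_j$'' for interior pairs follows, and the auxiliary potential $\phi$ is a clean way to avoid the non-terminating smoothing iteration. Two small points deserve a line each. First, the existence of a $\phi$-minimal minimizer is not immediate from compactness alone, because the index set of $\phi$ depends on which coordinates are interior; one should note that $\phi$ is lower semicontinuous (terms can only disappear in the limit), which suffices. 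Second, the claim that replacing $(p_i,p_j)$ by $(s/2,s/2)$ strictly decreases $\phi$ needs justification for the cross terms: it follows from the identity $(a-c)^2+(b-c)^2 = 2\left(\tfrac{a+b}{2}-c\right)^2 + \tfrac{1}{2}(a-b)^2$, together with the observation that $s/2\in(0,1)$ so the set of interior coordinates, and hence the index set of $\phi$, is unchanged while the term $(p_i-p_j)^2$ drops strictly to zero. With those two remarks added, the proof is complete, and the formula $x=(q-\ell)/(m-\ell)$ is indeed just bookkeeping from $\sum_i p_i=q$.
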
 

A very useful corollary is the following.
\begin{lemma}[Lemma 3.23 of \cite{KKO20}] \label{lem:cut_even}
Let $S\subseteq V$ with $x(\delta(S))=2$ and $|\delta(S)| \le 4$. Then $\P{\delta_T(S)\text{ even}}\geq 13/27$.
\end{lemma}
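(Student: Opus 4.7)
The plan is to reduce $\delta_T(S)$ to a sum of at most four independent Bernoulli variables with total mean $2$, and then apply a parity identity together with Hoeffding's theorem to collapse the problem to a small finite optimization.

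First, I would use the strong Rayleigh property of the max entropy distribution. The generating polynomial $g_\mu$ is multivariate real stable, so projecting to coordinates in $\delta(S)$ and then specializing all these variables to a single indeterminate $z$ yields a real-rooted univariate polynomial. Such a polynomial, whose coefficients are nonnegative and sum to $1$, factors as $\prod_{i=1}^n\bigl((1-p_i)+p_i z\bigr)$ for some $p_1,\ldots,p_n\in[0,1]$ with $n\le|\delta(S)|\le 4$; reading off the mean gives $\sum_i p_i = x(\delta(S)) = 2$. After padding with zero-probability terms if necessary, we may take $n=4$, so $\delta_T(S)$ has the same distribution as $B_1+B_2+B_3+B_4$ with $B_i\sim\mathrm{Bernoulli}(p_i)$ independent.

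Next, I would apply the identity
\begin{equation*}
\P{\delta_T(S)\text{ even}} \;=\; \frac{1+\E{(-1)^{\sum_i B_i}}}{2} \;=\; \frac{1+\prod_{i=1}^4(1-2p_i)}{2},
\end{equation*}
and minimize the right-hand side over feasible $(p_1,\ldots,p_4)$ with $\sum_i p_i=2$ and $p_i\in[0,1]$. This is exactly an instance of \cref{thm:hoeffding} applied to $g(k)=\ind{k\text{ even}}$ with $n=4$ and $q=2$, which reduces the search to $p_i\in\{0,x,1\}$ for some $x\in(0,1)$.

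Finally, letting $\ell$ denote the number of $p_i$ equal to $1$ and $k$ the number equal to $x$, the constraints $\ell+kx=2$ and $\ell+k\le 4$ leave only a small handful of configurations. Evaluating $\tfrac{1}{2}\bigl(1+(-1)^\ell(1-2x)^k\bigr)$ in each case, the minimum of $13/27$ is attained at $(\ell,k,x)\in\{(0,3,2/3),(1,3,1/3)\}$, where $(-1)^\ell(1-2x)^k=-1/27$; every other feasible configuration gives probability at least $1/2$. The main place to be careful is that the upper bound $\ell+k\le 4$ coming from $|\delta(S)|\le 4$ is essential: relaxing it, e.g. allowing $\ell=1$ and $k=4$ (so $x=1/4$), would give a smaller probability $15/32<13/27$ and spoil the bound. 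Otherwise the argument is routine algebra.
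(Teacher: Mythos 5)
Your proof is correct and follows essentially the same route the paper intends: the paper omits the argument precisely because it ``follows straightforwardly from \cref{thm:hoeffding}'' after invoking the standard fact (stated in \cref{subsec:SR}) that for a strongly Rayleigh distribution the law of $\delta_T(S)$ is a sum of at most $|\delta(S)|\le 4$ independent Bernoullis with total mean $x(\delta(S))=2$, which is exactly your reduction, parity identity, and finite case check yielding the extremal configurations $(2/3,2/3,2/3,0)$ and $(1,1/3,1/3,1/3)$ with value $13/27$.
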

The proof is omitted in \cite{KKO20} as it follows straightforwardly from \cref{thm:hoeffding}. One can see Lemma 3.6 in \cite{Klei24} for a proof. 

Finally, we will need the following lemma.
\begin{lemma}[Lemma 3.21 of \cite{KKO20}]
    \label{lem:three-edge}
    Let $S \subseteq E$ with $|S| = 3$. Furthermore, assume that $\mathbb{P}[|S \cap T| \geq 1] = 1.$ Then, $\mathbb{P}[|S \cap T| = 1] \geq \frac{1}{2} \quad \text{and} \quad \mathbb{P}[|S \cap T| = 2] \geq \frac{3}{8}.$
\end{lemma}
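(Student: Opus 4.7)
The plan is to apply Hoeffding's theorem (\cref{thm:hoeffding}) to represent $|S \cap T|$ as a sum of three independent Bernoulli random variables, and then reduce to a two-parameter optimization. Since $\mu$ is $\lambda$-uniform and hence strongly Rayleigh, the univariate generating polynomial $f(z) = \sum_{k=0}^{3} \mathbb{P}[|S \cap T| = k]\, z^k$ is real-rooted with non-negative coefficients and degree at most $|S| = 3$; therefore it factors as $f(z) = \prod_{i=1}^{3}(1 - p_i + p_i z)$ with $p_i \in [0,1]$, and so $|S \cap T|$ has the same law as $B_1 + B_2 + B_3$ for independent Bernoullis $B_i \sim \mathrm{Bernoulli}(p_i)$. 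The hypothesis $\mathbb{P}[|S \cap T| \geq 1] = 1$ forces $f(0) = 0$, so some $p_i$ equals $1$, say $p_3$. Thus $|S \cap T| = 1 + B_1 + B_2$, leaving only two free parameters.

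The second step is to pin down $p_1 + p_2$. In the half-integral context in which this lemma is applied, the three edges of $S$ each carry marginal $x_e = 1/2$, so $\mathbb{E}[|S\cap T|] = p_1 + p_2 + 1 = 3/2$, yielding the linear constraint $p_1 + p_2 = 1/2$. From here the two bounds are a short calculation. Directly, $\mathbb{P}[|S \cap T| = 1] = (1-p_1)(1-p_2) = \tfrac{1}{2} + p_1 p_2 \geq \tfrac{1}{2}$ since $p_1, p_2 \geq 0$. For the other bound, $\mathbb{P}[|S \cap T| = 2] = p_1(1-p_2) + (1-p_1)p_2 = \tfrac{1}{2} - 2 p_1 p_2$, and by AM-GM $p_1 p_2 \leq \tfrac{1}{4}(p_1+p_2)^2 = \tfrac{1}{16}$, which gives $\mathbb{P}[|S \cap T| = 2] \geq \tfrac{1}{2} - \tfrac{1}{8} = \tfrac{3}{8}$.

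The only conceptual obstacle is the Hoeffding/Pitman reduction itself: one must verify that the univariate specialization of the multivariate strongly Rayleigh generating polynomial of $\mu$ is real-rooted and hence admits the Bernoulli-product factorization with $p_i \in [0,1]$. Once this reduction is in place the rest is an elementary two-variable optimization, and as a sanity check the two inequalities are tight at $\{p_1,p_2\} = \{0,\tfrac{1}{2}\}$ and $p_1 = p_2 = \tfrac{1}{4}$ respectively, consistent with the constants $\tfrac{1}{2}$ and $\tfrac{3}{8}$ in the statement.
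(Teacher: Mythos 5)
Your proof is correct. One thing worth flagging explicitly: the lemma as stated in the paper is literally false without the additional hypothesis that $\mathbb{P}[e \in T] = \tfrac{1}{2}$ for each $e \in S$ (e.g.\ $p_1 = p_2 = p_3 = 1$ satisfies all the stated hypotheses but has $\mathbb{P}[|S\cap T| = 1] = 0$). You correctly notice that this hypothesis is implicit in the half-integral setting, where every edge of the support graph carries marginal $\tfrac{1}{2}$, and you use exactly the resulting constraint $p_1+p_2+p_3 = \tfrac{3}{2}$.

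Beyond that, your argument is complete and matches the standard route. The present paper does not reprove the lemma (it cites Lemma 3.21 of \cite{KKO20}), so there is no paper-internal proof to compare against, but your reconstruction is the natural one. The chain is: the projected generating polynomial $f(t) = \sum_{k} \mathbb{P}[|S\cap T|=k]\,t^k$ is real-rooted with non-negative coefficients (a consequence of strong Rayleighness under the substitution $z_e \mapsto t$ for $e\in S$, $z_e\mapsto 1$ otherwise), so $f(t) = \prod_{i=1}^3(1-p_i+p_i t)$ with $p_i\in[0,1]$; the assumption $\mathbb{P}[|S\cap T|\ge 1]=1$ gives $f(0)=0$, so some $p_i=1$; the marginal condition then forces the other two to sum to $\tfrac12$; and the two bounds follow from $(1-p_1)(1-p_2) = \tfrac12 + p_1 p_2 \ge \tfrac12$ and $p_1+p_2-2p_1p_2 = \tfrac12 - 2p_1p_2 \ge \tfrac12 - 2\cdot\tfrac{1}{16} = \tfrac38$ via AM--GM. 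Your sanity check of the extremizers $\{0,\tfrac12\}$ and $\{\tfrac14,\tfrac14\}$ is also consistent with the Hoeffding characterization (\cref{thm:hoeffding}), though your direct AM--GM computation is cleaner here and does not actually need \cref{thm:hoeffding}. No gaps.
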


\subsection{Polynomial Capacity}\label{subsec:capacity}
The \textit{capacity} at $\alpha \in \mathbb{N}_{+}^n$ of a real stable polynomial $p(x_1,\dots,x_n)$ with positive coefficients is defined as:
$$\capacity_\alpha(p) = \inf_{x \in \R_{> 0}^n} \frac{p(x)}{x^\alpha}$$
A classical result of Gurvits \cite{Gur08} relates the capacity of a polynomial to the coefficient of $\prod_{i=1}^n x_i$ for $n$-variate homogeneous polynomials of degree $n$ as follows (where $\mathbf{1}$ is the vector consisting of $n$ 1s):
\begin{theorem}[\cite{Gur15}]\label{thm:gurvits_coefficient_bound}
    Let $p(x_1,\dots,x_n)$ be a homogeneous real stable polynomial of degree $n$ with non-negative coefficients. Then, where $C_i = \min(\deg_p(i),i)$,
    $$\frac{\partial^n}{\partial x_1 \dots \partial x_n} p\big|_{x_1=\dots=x_n=0} \ge \mathrm{cap}_{\mathbf{1}}(p) \prod_{i=2}^n \left(\frac{C_i-1}{C_i}\right)^{C_i-1}$$
\end{theorem}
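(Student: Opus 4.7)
The plan is to prove the bound by induction, chipping away one variable at a time via the operator $D_i p := \partial_{x_i} p\big|_{x_i=0}$. Applied in the order $D_n, D_{n-1}, \dots, D_1$ to $p$, the final result is the scalar $\partial_1 \partial_2 \cdots \partial_n p\big|_{x=0}$ we want to bound from below. Each $D_i$ sends real stable polynomials with non-negative coefficients to real stable polynomials with non-negative coefficients (stability is preserved by $\partial_{x_i}$ via a Hurwitz-style argument and by specialization at $x_i=0$). After the first $n-k$ steps, the intermediate polynomial is homogeneous of total degree $k$ in the variables $x_1,\dots,x_k$, so its degree in $x_k$ is at most $\min(\deg_p(k),k) = C_k$.

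The crux is a sharp univariate capacity inequality: for any real-rooted univariate polynomial $q(t)$ with non-negative coefficients and $\deg q = d$,
\[
q'(0) \;\ge\; \Big(\tfrac{d-1}{d}\Big)^{d-1}\,\inf_{t>0}\frac{q(t)}{t}.
\]
I would prove this by factoring $q(t) = a_d\prod_{j=1}^d (t+r_j)$ with $r_j\ge 0$, normalizing $\inf_{t>0} q(t)/t = 1$, and minimizing $q'(0) = a_d\sum_i \prod_{j\ne i}r_j$ subject to the normalization. A Lagrangian/symmetry argument shows the minimum is attained when all $r_j$ are equal, giving the extremal polynomial $q^\ast(t) = (t+c)^d$; the infimum of $(t+c)^d/t$ occurs at $t = c/(d-1)$ and equals $c^{d-1}d^d/(d-1)^{d-1}$, while $(q^\ast)'(0) = d c^{d-1}$, producing the ratio $((d-1)/d)^{d-1}$. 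Note the factor is $1$ when $d=1$, matching the absence of any loss when eliminating a degree-one variable.

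To lift this to the multivariate setting, for any real stable $P(x_1,\dots,x_m)$ and any $\hat{x}\in\R_{>0}^{m-1}$, the univariate slice $t\mapsto P(t,\hat{x})$ is real-rooted with non-negative coefficients and degree at most $d_1 := \deg_{x_1}(P)$. Applying the univariate lemma to this slice and then dividing both sides by $\hat{x}^{\mathbf{1}}$ yields
\[
\frac{(D_1 P)(\hat{x})}{\hat{x}^{\mathbf{1}}} \;\ge\; \Big(\tfrac{d_1-1}{d_1}\Big)^{d_1-1}\,\inf_{t>0}\frac{P(t,\hat{x})}{t\cdot\hat{x}^{\mathbf{1}}}.
\]
Taking an infimum over $\hat{x}$ on both sides gives $\mathrm{cap}_{\mathbf{1}}(D_1 P)\ge ((d_1-1)/d_1)^{d_1-1}\,\mathrm{cap}_{\mathbf{1}}(P)$ (and one can safely replace $d_1$ by any upper bound, since $((d-1)/d)^{d-1}$ is decreasing in $d$). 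Iterating this estimate through $D_n, D_{n-1}, \dots, D_1$ with the degree bound $C_k$ lets the losses telescope into the product $\prod_{i=1}^n\big((C_i-1)/C_i\big)^{C_i-1}$. Since $C_1 = 1$ (assuming $p$ actually depends on $x_1$, else the statement is trivial), the $i=1$ factor is $1$, and the product reduces to the one in the theorem statement.

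The main obstacle, by some margin, is the sharpness of the univariate inequality: the constant $((d-1)/d)^{d-1}$ is tight (achieved by $(t+1)^d$), so a naive AM-GM on the coefficients of $q$ is too weak, and the argument must genuinely exploit real-rootedness — for instance via convexity of $s \mapsto \log q(e^s)$ or a careful Lagrangian analysis of the factorization, both of which pin down the symmetric extremizer. Once the univariate inequality is established, the multivariate argument is a clean induction that sees the input polynomial only through the per-variable degree statistics $\deg_p(i)$.
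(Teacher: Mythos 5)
This theorem is imported from Gurvits \cite{Gur15}; the paper states it without proof, so there is no internal argument to compare against. Your proposal is, in outline, exactly the standard (Gurvits) proof: eliminate variables one at a time with $D_i p=\partial_{x_i}p\big|_{x_i=0}$ (stability preserved by differentiation and by real specialization), use homogeneity to bound the degree of the variable being eliminated at step $i$ by $C_i=\min(\deg_p(i),i)$, apply the sharp univariate inequality $q'(0)\ge\bigl(\tfrac{d-1}{d}\bigr)^{d-1}\inf_{t>0}q(t)/t$ on positive slices, and telescope the resulting capacity inequalities, using that $\bigl(\tfrac{d-1}{d}\bigr)^{d-1}$ is decreasing in $d$ so an upper bound on the degree suffices; all of that bookkeeping is correct, and the final factor for the last remaining variable is indeed $1$. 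The only soft spot is your justification of the univariate lemma: the Lagrangian/symmetrization sketch can be made rigorous but is heavier than needed. The standard route is to write $q(t)=q(0)\prod_{j=1}^d(1+t/r_j)$ (treating a root at $0$ separately, where the bound is immediate), apply AM--GM to get $q(t)\le q(0)\bigl(1+\tfrac{t}{d}\sum_j r_j^{-1}\bigr)^d$, and optimize over $t$; this yields the sharp constant $\bigl(\tfrac{d-1}{d}\bigr)^{d-1}$ directly, with the extremizer $(t+c)^d$ appearing as the equality case of AM--GM. With that substitution (or a fully worked-out version of your extremal analysis), the proposal is a complete and correct proof of the stated theorem.
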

Note that $\frac{\partial^n}{\partial x_1 \dots \partial x_n} p\big|_{x_1=\dots=x_n=0}$ is exactly the coefficient of $\prod_{i=1}^n x_i$. There are various similar statements in the literature, and we will use the following, first stated as Theorem 5.1 of \cite{Gur15} and restated as follows in \cite{BLP23}:
\begin{theorem}[\cite{Gur15,BLP23}]\label{thm:coeff}
    Let $p$ be a homogeneous real stable polynomial of degree $d$ with positive coefficients. Let $\alpha \in \mathbb{N}_+^n$ such that $\sum_{i=1}^n \alpha_i = d$. For $i < n$, let $d_i$ be the degree of $x_i$ in the polynomial
$$\partial_{i+1}^{\alpha_{i+1}}\dots\partial_n^{\alpha_n} \big|_{x_{i+1}=\dots=x_n=0}$$
and $d_n$ the degree of $x_n$ in $p$. Then, where $p_\alpha$ is the coefficient of the term $\prod_{i=1}^n x^{\alpha_i}$, 
$$\frac{p_\alpha}{\mathrm{cap}_\alpha(p)} \ge \prod_{i=2}^n {d_i \choose \alpha_i} \frac{\alpha_i^{\alpha_i}(d_i - \alpha_i)^{d_i-\alpha_i}}{d_i^{d_i}}$$
\end{theorem}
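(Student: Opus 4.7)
The plan is to prove the theorem by induction on the number of variables $n$, peeling off $x_n$ at each step and accumulating exactly the $i = n$ factor of the product. The base case $n = 1$ is immediate: a homogeneous polynomial of degree $d$ in one variable is $p_\alpha x_1^d$ with $\alpha_1 = d$, so $p_\alpha = \mathrm{cap}_\alpha(p)$ and the (empty) product on the right-hand side equals $1$.

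For the inductive step, I would define
$$q(x_1, \dots, x_{n-1}) \;:=\; \frac{1}{\alpha_n!}\,\partial_{x_n}^{\alpha_n} p\,\Big|_{x_n = 0}.$$
Since differentiation in a coordinate and specialization of that coordinate to a real value both preserve real stability, $q$ is a real stable homogeneous polynomial of degree $d - \alpha_n$ in $n-1$ variables with non-negative coefficients. Coefficient extraction gives $q_{\alpha'} = p_\alpha$ for $\alpha' = (\alpha_1, \dots, \alpha_{n-1})$, and the $d_i$ for $i < n$ coincide with the corresponding quantities defined by the theorem applied to $q$, since they arise from the same higher-indexed differentiations and specializations in either case. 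Applying the inductive hypothesis to $q$ thus reduces the entire theorem to proving the single-step capacity drop
$$\mathrm{cap}_{\alpha'}(q) \;\ge\; \binom{d_n}{\alpha_n}\frac{\alpha_n^{\alpha_n}(d_n - \alpha_n)^{d_n - \alpha_n}}{d_n^{d_n}}\cdot\mathrm{cap}_\alpha(p).$$

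To establish this drop, I would fix an arbitrary $x' \in \mathbb{R}_{>0}^{n-1}$ and set $r(x_n) := p(x', x_n)$. Real stability of $p$ and reality of $x'$ force $r$ to be real-rooted with non-negative coefficients and degree at most $d_n$. By construction, $q(x')$ is the coefficient of $x_n^{\alpha_n}$ in $r$, while $\mathrm{cap}_{\alpha_n}(r) = \inf_{x_n > 0} r(x_n)/x_n^{\alpha_n}$. The key univariate lemma states
$$[x_n^{\alpha_n}]\,r(x_n) \;\ge\; \binom{d_n}{\alpha_n}\frac{\alpha_n^{\alpha_n}(d_n - \alpha_n)^{d_n - \alpha_n}}{d_n^{d_n}}\cdot \mathrm{cap}_{\alpha_n}(r).$$
Dividing by $(x')^{\alpha'}$ and taking the infimum over $x' > 0$ then yields the desired capacity drop using the identity $\mathrm{cap}_\alpha(p) = \inf_{x' > 0}\mathrm{cap}_{\alpha_n}(r)/(x')^{\alpha'}$.

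I expect the main obstacle to be the univariate lemma, which I would prove probabilistically. At an optimizer $x_n^\star > 0$ of $r(x_n)/x_n^{\alpha_n}$, the map $t \mapsto r(x_n^\star t)/r(x_n^\star)$ is the probability generating function of a sum $S$ of $d_n$ independent Bernoullis (using the fact that real-rooted polynomials with non-negative coefficients are PGFs of sums of independent Bernoullis), and the first-order condition $x_n^\star r'(x_n^\star) = \alpha_n\, r(x_n^\star)$ forces $\mathbb{E}[S] = \alpha_n$. The inequality then reduces to $\Pr[S = \alpha_n] \ge \binom{d_n}{\alpha_n}(\alpha_n/d_n)^{\alpha_n}(1 - \alpha_n/d_n)^{d_n - \alpha_n}$ for every sum of $d_n$ independent Bernoullis with mean $\alpha_n$. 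By Hoeffding's extremal principle (\cref{thm:hoeffding}), the minimizer of $\Pr[S = \alpha_n]$ over such sums can be taken with all nonzero success probabilities in $\{x, 1\}$ for a single $x \in (0,1)$, and a direct one-dimensional calculation (using log-concavity of the binomial mass function at its mean) pins the minimum at the uniform configuration $p_1 = \dots = p_{d_n} = \alpha_n/d_n$. This is the classical univariate Gurvits capacity bound, and the rest of the induction is bookkeeping.
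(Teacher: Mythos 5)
This statement is imported from the literature (\cite{Gur15,BLP23}) and the paper gives no proof of it, so there is nothing internal to compare against; your reconstruction follows the standard Gurvits argument from the cited sources and is essentially correct. The induction peeling off one variable at a time, the capacity-drop inequality $\mathrm{cap}_{\alpha'}(q) \ge \binom{d_n}{\alpha_n}\frac{\alpha_n^{\alpha_n}(d_n-\alpha_n)^{d_n-\alpha_n}}{d_n^{d_n}}\cdot\mathrm{cap}_\alpha(p)$ obtained by restricting to a univariate slice $r(x_n)=p(x',x_n)$, and the reduction of the univariate lemma to a Bernoulli statement via the first-order condition $x_n^\star r'(x_n^\star)=\alpha_n r(x_n^\star)$ are all the right moves, and the identification $\Pr[S=\alpha_n]=[x_n^{\alpha_n}]r\,/\,\mathrm{cap}_{\alpha_n}(r)$ at the optimizer is exactly how the probabilistic proofs of this bound go.

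Two spots deserve more care than you give them. First, the interior optimizer $x_n^\star$ need not exist: if the lowest-degree term of $r$ in $x_n$ has exponent $\ge \alpha_n$, the infimum of $r(x_n)/x_n^{\alpha_n}$ is approached as $x_n\to 0^+$. These cases are harmless (the capacity is then at most the coefficient itself, and the constant on the right is at most $1$), but they must be dispatched before invoking the first-order condition. Second, the final extremal claim, that among all sums of $d_n$ independent Bernoullis with mean $\alpha_n$ the probability $\Pr[S=\alpha_n]$ is minimized by the i.i.d.\ configuration, is asserted rather than proved. \Cref{thm:hoeffding} only reduces you to configurations with $\ell$ probabilities equal to $1$ and $m-\ell$ equal to $\frac{\alpha_n-\ell}{m-\ell}$; you still need the monotonicity of $f(N,K)=\binom{N}{K}(K/N)^K((N-K)/N)^{N-K}$ under $N\mapsto N+1$ (fixed $K$) and under $(N,K)\mapsto(N-1,K-1)$, both of which reduce to $(1+1/x)^x$ being increasing. ``Log-concavity of the binomial mass function at its mean'' is not the statement being used. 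With those two points filled in, the proof is complete.
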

Furthermore, the capacity of a real stable polynomial can be bounded using its gradient. In particular, we can apply the following theorem of Gurvits and Leake \cite{GL21} (also see \cite{GLK24}) generalizing \cite{Gur06}. (We do not need the generalization here, but we state the stronger form regardless.)
\begin{theorem}\label{thm:cap}[\cite{GL21}]
    Let $p$ be a real stable polynomial in $n$ variables with non-negative coefficients, and fix any $\alpha \in \mathbb{N}_+^n$. If $p(\mathbf{1})=1$ and $\left\lVert \alpha - \nabla p(\mathbf{1})\right\lVert < 1$, then
    $$\mathrm{cap}_\alpha \ge (1-\left\lVert \alpha - \nabla p(\mathbf{1})\right\lVert)^n$$
\end{theorem}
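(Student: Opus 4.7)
The plan is to reduce the theorem to an explicit univariate computation by exploiting the structure of real stable polynomials, and then lift back to $n$ variables via polarization and one-variable-at-a-time contraction.

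First, I would reformulate capacity as a convex optimization. Letting $\epsilon := \|\alpha - \nabla p(\mathbf{1})\|$ and substituting $x_i = e^{y_i}$ gives
\begin{equation*}
\log \mathrm{cap}_\alpha(p) = \inf_{y \in \R^n} F(y), \qquad F(y) := \log p(e^{y_1}, \dots, e^{y_n}) - \langle \alpha, y \rangle.
\end{equation*}
Because $p(e^y)$ is a positive combination of exponentials, $g(y) := \log p(e^y)$ is convex, with $g(0) = 0$ and $\nabla g(0) = \nabla p(\mathbf{1})$, so $\nabla F(0) = \nabla p(\mathbf{1}) - \alpha$. Convexity of $F$ alone yields $F(y) \geq \langle \nabla p(\mathbf{1}) - \alpha, y\rangle$, which is unbounded below, so the real-stability hypothesis has to carry the quantitative bound.

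Second, I would prove the univariate case directly. If $q$ is a univariate real-rooted polynomial with non-negative coefficients and $q(1)=1$, then factoring $q(x) = c\prod_{i=1}^d (x+\rho_i)$ with $\rho_i \geq 0$ and normalizing each factor gives $q(x) = \prod_{i=1}^d ((1-w_i) + w_i x)$ with $w_i \in [0,1]$, and $q'(1) = \sum_i w_i$. Splitting $\alpha = \sum_i \alpha_i$ with $\alpha_i = w_i + \delta_i$ and $\sum \delta_i = \alpha - q'(1)$, the elementary inequality $\inf_x \prod_i f_i(x) \geq \prod_i \inf_x f_i(x)$ applied to $f_i(x) = ((1-w_i)+w_i x)/x^{\alpha_i}$ gives
\begin{equation*}
\mathrm{cap}_\alpha(q) \;\geq\; \prod_{i=1}^d \mathrm{cap}_{\alpha_i}\!\bigl((1-w_i)+w_i x\bigr) \;\geq\; \prod_{i=1}^d \bigl(1 - |\delta_i|\bigr),
\end{equation*}
where the degree-one bound $\mathrm{cap}_a((1-w)+wx) = (1-w)^{1-a} w^a / ((1-a)^{1-a} a^a) \geq 1 - |a-w|$ follows from a direct optimum calculation plus a short convexity argument around $a=w$. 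Concentrating $\delta$ in a single factor then gives $\mathrm{cap}_\alpha(q) \geq 1 - |\alpha - q'(1)|$.

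Third, to pass from the univariate to the multivariate setting, I would polarize $p$ using the Grace--Walsh--Szeg\H{o} theorem so that without loss of generality $p$ is multilinear, then contract one variable at a time. For each contraction, fixing $x_n = t^*$ optimally and applying the univariate bound to the restricted polynomial $s \mapsto p(e^{y_1},\dots,e^{y_{n-1}},e^s)$ extracts a factor $(1-\epsilon_n)$, where $\epsilon_n$ is the discrepancy attributable to the $n$-th coordinate. Iterating produces $\prod_{i=1}^n(1-\epsilon_i)$, and a convexity argument converts a suitable bound on $(\epsilon_i)_i$ into the claimed $(1-\|\alpha-\nabla p(\mathbf{1})\|)^n$. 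The main obstacle will be the bookkeeping here: each contraction shifts the remaining marginals $\nabla_j p'(\mathbf{1})$ for $j<n$, so the discrepancy budget evolves along the induction. Showing that the per-step losses $(1-\epsilon_i)$ combine to the global factor $(1-\epsilon)^n$ without losing constants requires choosing the contraction point $t^*$ so that the new marginals track $\alpha$ as closely as possible, and is the core technical content of \cite{GL21}.
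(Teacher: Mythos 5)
This theorem is not proved in the paper at all: it is imported verbatim from \cite{GL21}, so there is no internal proof to compare against, and your proposal has to stand on its own as a proof of the cited result. It does not, for two reasons. First, the univariate step rests on the claimed factor-wise bound $\mathrm{cap}_a\bigl((1-w)+wx\bigr)\ge 1-|a-w|$, and this inequality is false. The exact value is $\mathrm{cap}_a\bigl((1-w)+wx\bigr)=\frac{(1-w)^{1-a}w^a}{(1-a)^{1-a}a^a}=e^{-D(a\|w)}$, the exponentiated KL divergence between Bernoulli$(a)$ and Bernoulli$(w)$; for $w$ near $0$ or $1$ this decays much faster than linearly in $|a-w|$. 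Concretely, at $a=0.1$, $w=0.9$ the capacity is $\approx 0.173<0.2=1-|a-w|$, and even for small discrepancies it fails whenever $|a-w|\gtrsim 2w(1-w)$ (e.g. $a=0.9$, $w=0.99$ gives $\approx 0.866<0.91$). So no ``short convexity argument around $a=w$'' exists, and the chain $\mathrm{cap}_\alpha(q)\ge\prod_i(1-|\delta_i|)\ge 1-|\alpha-q'(1)|$ collapses as written (the univariate conclusion may still be true, but it needs a genuinely different, non-factorwise argument; there is also the secondary issue that ``concentrating $\delta$ in a single factor'' can force some $\alpha_j\notin[0,1]$, where the single-factor capacity is $0$).

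Second, the multivariate step is not an argument: you polarize to the multilinear case and then say that controlling how the remaining marginals drift under one-variable-at-a-time contraction ``is the core technical content of \cite{GL21}.'' That is precisely the step that needs to be proved, and deferring it concedes the theorem. It is also not how \cite{GL21} proceeds: their proof is via a \emph{productization} lemma --- for every $x\in\R_{>0}^n$ there is a product of affine linear forms $q_x$ with nonnegative coefficients satisfying $q_x(\mathbf{1})=p(\mathbf{1})$, $\nabla q_x(\mathbf{1})=\nabla p(\mathbf{1})$, and $q_x(x)\le p(x)$ --- which reduces the capacity bound for real stable $p$ to the trivially computable product case in one stroke, with no evolving discrepancy budget to track. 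If you want to salvage your outline, the piece to aim for is exactly such a pointwise domination by a gradient-matching product polynomial, not a per-variable contraction with per-factor linear lower bounds.
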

We will use the following corollary in this work, which follows easily from the above. 
\begin{corollary}\label{cor:capacity}
    Let $p(x_1,\dots,x_n)$ be the generating polynomial of a strongly Rayleigh distribution $\mu$ over ground set $e_1,\dots,e_n$. If $\mathbb{E}[e_i] = 1$ for all $1 \le i \le n$, or equivalently $\nabla p(\mathbf{1}) = \mathbf{1}$, then if $d_i$ is the maximum degree of $x_i$,
    $$p_{\mathbf{1}} \ge \prod_{i=1}^n \frac{d_i (d_i-1)^{d_i-1}}{d_i^{d_i}}$$
\end{corollary}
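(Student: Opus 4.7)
The plan is to chain \cref{thm:cap} with \cref{thm:coeff}, both at $\alpha = \mathbf{1}$. First I would verify the hypotheses of \cref{thm:cap}: since $\mu$ is a probability distribution, $p(\mathbf{1}) = \sum_\kappa \mu(\kappa) = 1$, and the marginal condition $\mathbb{E}[e_i] = 1$ for every $i$ is precisely $\partial_i p(\mathbf{1}) = 1$, i.e.\ $\nabla p(\mathbf{1}) = \mathbf{1}$. Hence $\|\mathbf{1} - \nabla p(\mathbf{1})\| = 0 < 1$, and \cref{thm:cap} yields
$$\mathrm{cap}_{\mathbf{1}}(p) \ge (1 - 0)^n = 1.$$

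Next I would apply \cref{thm:coeff} at the same $\alpha = \mathbf{1}$. The polynomial $p$ is real stable with non-negative coefficients since $\mu$ is strongly Rayleigh, and for the intended applications in the paper $\mu$ is a $\lambda$-uniform distribution on spanning trees of a multigraph, so $p$ is already homogeneous of the correct total degree $\sum_i \alpha_i = n$; if not, one can homogenize via an auxiliary variable without affecting $p_{\mathbf{1}}$ or any of the coordinate degrees $d_i$. Substituting $\alpha_i = 1$, each factor in \cref{thm:coeff} for $i \ge 2$ simplifies as
$$\binom{d_i}{1}\frac{1^1(d_i - 1)^{d_i - 1}}{d_i^{d_i}} = \frac{d_i(d_i - 1)^{d_i - 1}}{d_i^{d_i}},$$
which matches the factors in the claimed product. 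Combining this with $\mathrm{cap}_{\mathbf{1}}(p) \ge 1$ yields the bound. The statement writes the product over $i = 1,\dots,n$ rather than $i = 2,\dots,n$, but this only weakens the right-hand side, since the $i=1$ factor is at most $1$ (and equals $1$ exactly when $d_1 = 1$, under the convention $0^0 = 1$).

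The argument is therefore essentially bookkeeping: plug into the gradient-to-capacity bound and then into the capacity-to-coefficient bound, and simplify. The only point deserving any care is confirming that the homogeneity and degree hypotheses of \cref{thm:coeff} are met by $p$; for every use of this corollary in the paper this is immediate from the fact that the underlying distribution is a $\lambda$-uniform spanning tree distribution on a multigraph. I do not expect any real obstacle beyond ensuring these hypotheses line up.
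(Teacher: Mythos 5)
Your overall route is the same as the paper's: verify $p(\mathbf{1})=1$ and $\nabla p(\mathbf{1})=\mathbf{1}$, get a capacity lower bound of $1$ from \cref{thm:cap}, and feed it into \cref{thm:coeff} with all relevant $\alpha_i=1$. The gap is in how you dispose of the homogeneity hypothesis of \cref{thm:coeff}. Your claim that for the intended applications $p$ is already homogeneous is not correct: the corollary is invoked on projections and symmetrizations of spanning-tree generating polynomials (\cref{lem:bottom-edge-even} and the normal-edge lemma in \cref{sec:degreecut}), and a projection is generally inhomogeneous --- the paper's own tight example $(x_a+x_c)(\tfrac12+x_b)(\tfrac12+x_d)$ is. So homogenization is not optional bookkeeping; it is where the proof actually lives, and your one-line treatment of it does not work as stated. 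After passing to $p^H(z,x_1,\dots,x_n)=z^d p(x_1/z,\dots,x_n/z)$ you cannot keep $\alpha=\mathbf{1}$: the coefficient you want is that of $z^{d-n}x_1\cdots x_n$, so \cref{thm:coeff} must be applied with $\alpha=(d-n,1,\dots,1)$ (with the auxiliary variable placed first, so that the product in the theorem ranges exactly over the original variables), and you then need a lower bound on $\mathrm{cap}_{(d-n,1,\dots,1)}(p^H)$, which your application of \cref{thm:cap} to $p$ at $\mathbf{1}$ does not literally supply. The fix is routine and available in two ways: as in the paper, compute $\nabla p^H(\mathbf{1})=(d-n,1,\dots,1)$ and apply \cref{thm:cap} to $p^H$ directly; or observe the substitution identity $\mathrm{cap}_{(d-n,1,\dots,1)}(p^H)=\mathrm{cap}_{\mathbf{1}}(p)\ge 1$. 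With that repair, together with the harmless observations you already make (each omitted factor is at most $1$, and the degrees $d_i$ appearing in \cref{thm:coeff} are at most the maximum degrees while $d\mapsto\left(\frac{d-1}{d}\right)^{d-1}$ is decreasing), your argument coincides with the paper's proof.
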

\begin{proof}
Let $d$ be the maximum degree of $p$. Let $p^H(z,x_1,\dots,x_n)=z^d(x_1/z,\dots,z_n/d)$ be the homogenization of $p$. Then, $\nabla p^H(\mathbf{1}) = (d-n,1,\dots,1)$. Set $\alpha = (d-n,1,\dots,1)$. By \cref{thm:cap}, $\capacity_{(d-n,1,\dots,1)} = 1$. Applying \cref{thm:coeff} with $\alpha$, noting $\alpha_i=1$ for $i \ge 2$, we obtain:
$$p^H_\alpha \ge \prod_{i=1}^{n} \frac{d_i (d_i-1)^{d_i-1}}{d_i^{d_i}}$$
where we adjust the indices to match the degree of each variable $x_i$. But $p^H_\alpha = p_{\mathbf{1}}$, so the corollary follows.
\end{proof}

\subsection{Structure of Critical Cuts}

In this section, we recall some basic facts about the structure of critical cuts. For proofs, we refer the interested reader to Section 3 of \cite{KKO20}. These facts are used solely to ensure that our case analysis is exhaustive and covers all possible scenarios. 

\begin{fact}[Fact 3.10 \cite{KKO20}]
\label{fact:two-higher-cycle}
Suppose that $S$ is a critical set. If some (contracted) vertex $v \in S$ has two edges to $w := V \setminus S$, then $S$ is a cycle cut.
\end{fact}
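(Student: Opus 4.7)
The plan is a short argument by contradiction. Suppose $S$ is a critical set and some (contracted) vertex $v$ in $S$ has two edges to $w := V\setminus S$ in $G_S$, yet $S$ is a degree cut, i.e.\ $G_S$ contains no proper min cut.

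First I would record the basic arithmetic in $G_S$: every vertex of $G_S$ has degree $4$, since internal contracted vertices are images of tight sets (each with $|\delta| = 4$) and $w$ has $|\delta_{G_S}(w)| = |\delta(S)| = 4$ in the original $4$-regular graph $G$. Moreover $G_S$ inherits $4$-edge-connectivity from $G$, because contraction never lowers edge-connectivity. Consequently, if $v$ has $k$ edges to $w$, then for $A := S\setminus\{v\}$ viewed as a vertex set of $G_S$,
\[
|\delta_{G_S}(A)| \;=\; (|\delta(S)| - k) + (4 - k) \;=\; 8 - 2k,
\]
and $4$-edge-connectivity forces $k \le 2$. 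Hence the hypothesis $k \ge 2$ gives $k = 2$ exactly, and $A$ is a min cut of $G_S$ with $|\delta_{G_S}(A)| = 4$.

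The main step is then to conclude from the existence of this min cut $A$ that $S$ is a cycle cut. I would split on the number of vertices of $G_S$. If $|V(G_S)| \ge 4$, then the complement of $A$ in $V(G_S)$ is $\{v,w\}$, which has size $2$, so $A$ is a proper min cut of $G_S$; this contradicts $S$ being a degree cut, so $S$ must be a cycle cut. If $|V(G_S)| = 3$, write the vertex set as $\{v, u, w\}$; then $v$ has $2$ edges to $w$ and $2$ to $u$, and since $u$ has degree $4$ with $2$ of its edges going to $v$, the remaining $2$ edges of $u$ must go to $w$. So $G_S$ is the $3$-cycle $v\text{-}u\text{-}w\text{-}v$ with every edge doubled, which is exactly the cycle-cut structure described in the preliminaries. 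The remaining case $|V(G_S)| = 2$ is incompatible with the hypothesis, since then $v$ would spend all $4$ of its edges on $w$, giving $4$ (not $2$) edges to $w$.

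I do not anticipate a substantive obstacle: the whole proof is a one-line edge count combined with the dichotomy built into the definition of degree vs.\ cycle cut. The only mild care needed is at the boundary $|V(G_S)| = 3$, where $A$ has just one vertex and so fails to be \emph{proper}; pleasantly, this degenerate case is precisely the smallest instance of a doubled cycle, so it falls into the cycle-cut conclusion directly rather than requiring a separate workaround.
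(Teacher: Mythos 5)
Your core argument is correct and clean: the edge count $|\delta_{G_S}(A)| = 8 - 2k$ for $A = V(G_S)\setminus\{v,w\}$, together with $4$-edge-connectivity and $4$-regularity of $G_S$, pins $k=2$ and exhibits $A$ as a $4$-edge cut. When $|V(G_S)| \ge 4$ this $A$ is a \emph{proper} min cut of $G_S$, which is exactly the negation of the paper's criterion for a degree cut, so the dichotomy forces $S$ to be a cycle cut. The paper itself only cites this fact from \cite{KKO20} without proof, so there is nothing to compare against directly; your argument is a reasonable self-contained one.

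The one place I would push back is the boundary case $|V(G_S)| = 3$, and it is more than a cosmetic issue. There $A$ is a singleton, hence \emph{not} proper, and so the contradiction with ``degree cut'' you rely on in the main case simply does not fire. You correctly deduce the structure (a doubled triangle) and conclude by saying ``this is exactly the cycle-cut structure,'' but the paper's operative \emph{definition} is that $S$ is a cycle cut iff $G_S$ has a proper min cut --- and a $3$-vertex $G_S$ has none, so by that criterion the doubled triangle would be a \emph{degree} cut and the fact would fail. To make this case airtight you would need one of: (a) an argument that a critical set $S$ with only two contracted vertices in $G[S]$ cannot be selected by the algorithm (it is not obvious that this is impossible --- one can construct $4$-regular, $4$-edge-connected instances where a $2$-element set is an uncrossed minimal proper tight set), or (b) an appeal to the source's actual convention that such degenerate sets are classified as cycle cuts, e.g.\ by the ``$G_S$ must form a cycle with two parallel edges between consecutive vertices'' description rather than the proper-min-cut test. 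As written, your proof silently switches from the definitional criterion to the structural description at exactly the point where they diverge, and that gap should be closed explicitly.
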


\begin{fact}[Fact 3.11 \cite{KKO20}]
Suppose that $S$ and $S'$ are two distinct tight sets. Then $|\delta(S) \cap \delta(S')| \leq 2$.
\end{fact}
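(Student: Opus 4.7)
The plan is a case analysis on whether $S$ and $S'$ cross, based on the four-cell partition of $V$ given by $A := S \cap S'$, $B := S \setminus S'$, $C := S' \setminus S$, $D := V \setminus (S \cup S')$. A direct bookkeeping of endpoints shows that an edge lies in both $\delta(S)$ and $\delta(S')$ exactly when its endpoints form a ``diagonal'' pair of the partition, so
\[
\delta(S) \cap \delta(S') \;=\; E(A, D) \,\cup\, E(B, C).
\]
The task is then to bound the total number of such edges in the support multigraph $G$, where every tight set has exactly four boundary edges.

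\textbf{Crossing case} ($A, B, C, D$ all nonempty). I would invoke the two standard identities for the cut function applied to $x$, which follow from edge-by-edge accounting against the partition:
\[
x(\delta(S)) + x(\delta(S')) \;=\; x(\delta(S \cap S')) + x(\delta(S \cup S')) + 2\,x(E(B, C)),
\]
\[
x(\delta(S)) + x(\delta(S')) \;=\; x(\delta(B)) + x(\delta(C)) + 2\,x(E(A, D)).
\]
Because $x(\delta(S)) = x(\delta(S')) = 2$, and each set $A, S \cup S', B, C$ is a nonempty proper subset of $V$ under the crossing assumption (so each of these four cuts has $x$-value at least $2$ by the subtour constraints), both identities force $x(E(A, D)) = x(E(B, C)) = 0$. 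Since every edge of $G$ has $x_e > 0$, this already yields $|\delta(S) \cap \delta(S')| = 0$ in the support graph.

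\textbf{Non-crossing case.} Exactly one of $A, B, C, D$ is empty. By the symmetry $S \leftrightarrow V \setminus S$ (treating a tight set and its complement as the same cut, and excluding the degenerate case $S' = V \setminus S$ which is implicit in ``distinct''), it suffices to handle $A = \emptyset$ (disjoint case) and $B = \emptyset$ (nested case $S \subsetneq S'$). In the disjoint case, $D \neq \emptyset$ and $\delta(S) \cap \delta(S') = E(S, S')$; using that each of $S, S', D$ is a proper nonempty subset, $4$-edge-connectivity of $G$ gives $|\delta(D)| \geq 4$, and then
\[
2|E(S, S')| \;=\; |\delta(S)| + |\delta(S')| - |\delta(D)| \;=\; 8 - |\delta(D)| \;\leq\; 4,
\]
so $|E(S, S')| \leq 2$. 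The nested case is analogous: now $\delta(S) \cap \delta(S') = E(A, D)$, and comparing $|\delta(S)| + |\delta(S')| = 8$ against $|\delta(S' \setminus S)| \geq 4$ yields $|E(A, D)| \leq 2$.

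The substantive content is all in the crossing case (via submodularity/posimodularity); the non-crossing case is a short degree count once one has the $4$-regular, $4$-edge-connected structure of $G$. The one genuine subtlety is the boundary accounting: each auxiliary cut invoked for the $\geq 4$ bound must itself be that of a proper nonempty set, which is exactly why identifying $S$ with $V \setminus S$—and so excluding $S' = V \setminus S$ from ``distinct tight sets''—is essential; otherwise $S$ and $V \setminus S$ would share all four boundary edges, violating the claim.
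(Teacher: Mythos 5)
Your proof is correct. Note that the paper itself does not prove this fact: it is quoted from \cite{KKO20} and the reader is referred to Section~3 of that paper, so there is no in-paper argument to compare against line by line; the argument there is of the same standard flavor as yours, exploiting that tight sets are $4$-edge cuts of the $4$-regular, $4$-edge-connected support graph. Your decomposition into the four cells $A,B,C,D$, the submodularity/posimodularity identities forcing $x(E(A,D))=x(E(B,C))=0$ in the crossing case, and the degree count against $|\delta(D)|\ge 4$ (resp.\ $|\delta(S'\setminus S)|\ge 4$) in the disjoint and nested cases are all sound, and you are right to flag that the literal statement fails for $S'=V\setminus S$, which must be excluded by identifying a tight set with its complementary cut -- exactly how the fact is used for the critical-cut hierarchy.
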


\begin{fact}[Fact 3.12 \cite{KKO20}]
Suppose that $S$ and $S'$ are two critical sets such that $S \subset S'$. Then if $|\delta(S) \cap \delta(S')| = 2$, then $S'$ is a cycle cut.
\end{fact}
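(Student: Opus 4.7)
The plan is to reduce the claim to a direct application of Fact 3.10, which says that if some contracted vertex inside a critical set $S'$ has two parallel edges to $w := V \setminus S'$ in the contracted graph $G_{S'}$, then $S'$ must be a cycle cut. So the goal is to exhibit such a vertex using only the hypothesis $|\delta(S) \cap \delta(S')| = 2$.

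First I would identify the relevant contracted vertex. Since critical sets form a laminar family and $S \subsetneq S'$, the set $S$ lies below $S'$ in the hierarchy, so there is a (possibly equal to $S$) immediate child $S_i$ of $S'$ with $S \subseteq S_i \subsetneq S'$. In $G_{S'}$ this child is collapsed to a single vertex $v_{S_i}$. Every edge in $\delta(S) \cap \delta(S')$ has one endpoint in $S \subseteq S_i$ and the other in $V \setminus S'$, and therefore becomes a parallel edge between $v_{S_i}$ and $w$ after contraction. This gives $|\delta(S_i) \cap \delta(S')| \ge 2$, which combined with the upper bound $|\delta(S_i) \cap \delta(S')| \le 2$ from Fact 3.11 shows that $v_{S_i}$ has exactly two parallel edges to $w$ in $G_{S'}$.

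Finally, applying Fact 3.10 to $S'$ with the contracted vertex $v_{S_i}$ immediately yields that $S'$ is a cycle cut. I do not expect any real obstacle; the only point to verify carefully is that inner contractions within $S_i$ (or within $V \setminus S'$) do not destroy or identify the two boundary edges, which is immediate since those contractions do not touch the opposite side of $\delta(S')$, and parallel edges across $\delta(S')$ are preserved under contraction.
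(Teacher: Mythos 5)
Your proof is correct, and it is essentially the argument behind the cited fact: the paper itself does not reproduce a proof (it defers to Section 3 of \cite{KKO20}), and the standard argument there is exactly your reduction --- the child of $S'$ in the hierarchy containing $S$ becomes a contracted vertex of $G_{S'}$ carrying the two edges of $\delta(S)\cap\delta(S')$ to $w=V\setminus S'$, so \cref{fact:two-higher-cycle} applies. The only inessential extra step is your appeal to Fact 3.11 to get ``exactly two'' parallel edges; having at least two edges to $w$ already suffices for \cref{fact:two-higher-cycle}.
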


\begin{fact}[Fact 3.13 \cite{KKO20}]
\label{fact:cons-cycle-cuts}
Suppose that $S \subset S'$ are two critical cycle cuts. Then any two edges are cycle partners on at most one of these (cycle) cuts.
\end{fact}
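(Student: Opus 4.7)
The plan is to argue by contradiction. Suppose $e, f$ are cycle partners on both $S$ and $S'$. First, I would pin down the geometric configuration. Being cycle partners on $S'$ means $e, f$ are the two parallel edges of $G_{S'}$ joining the contracted vertex $V \setminus S'$ to one of its cycle neighbors, namely a child $u$ of $S'$ in the laminar family. For $e, f$ to also lie in $\delta(S)$, their endpoints inside $u$ must lie in $S$; combined with laminarity of critical cuts and the fact that $u$ is a \emph{maximal} critical subset of $S'$, this forces $S \subseteq u$. The cycle partner property on $S$ then forces both $S$-side endpoints into a single common child $T$ of $S$ (the cycle neighbor of $V \setminus S$ in $G_S$ hit by both edges), so $e, f$ both run from $V \setminus S'$ to $T \subseteq S \subseteq u$.

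Second, I would show that $W := S' \setminus T$ is a min cut by directly counting $\delta(T \cup (V \setminus S'))$: two edges come from $\delta(T) \setminus \{e, f\}$, crossing from $T$ into $S \setminus T$ along the cycle of $G_S$, and two come from $\delta(S') \setminus \{e, f\}$, crossing from $V \setminus S'$ to its other cycle neighbor in $G_{S'}$; the latter miss $T$ by laminarity, since $T \subseteq u$ and the two cycle neighbors of $V \setminus S'$ in $G_{S'}$ are disjoint children of $S'$. This yields $|\delta(T \cup (V \setminus S'))| = 4$.

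Third, I would derive a contradiction using \cref{fact:cut-types}. A direct check shows $W$ crosses $S$: it contains $S \setminus T$ and $S' \setminus S$ but misses $T \subseteq S$. By laminarity, $W$ is not a critical cut, so it must be an interval of some cycle cut $C$. Using $W \subseteq C$ together with laminarity against $S$ and $S'$, I would force $S \subseteq C$ and then either $C = S'$ or $S' \subsetneq C$. In the case $C = S'$, $W$ contains only part of the $S'$-child $u$ containing $S$ (namely $u \setminus T$), so $W$ is not a union of full cycle vertices of $S'$. In the case $S' \subsetneq C$, all of $S'$ sits inside a single cycle vertex $W'$ of $C$, but $W'$ then contains both $T$ and $S \setminus T$ on opposite sides of $W$, again ruining intervalhood. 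A symmetric argument rules out the complement $(V \setminus S') \cup T$ being an interval, since it would force $V \setminus S' \cup T \subseteq C$, which combined with $S' \subseteq C$ gives $C = V$.

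The main obstacle I foresee is the final laminarity bookkeeping that rules out every possible cycle cut $C$; all other steps are direct structural computations leveraging the cycle structure of $G_S$ and $G_{S'}$.
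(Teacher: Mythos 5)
The paper does not prove this fact: it is imported verbatim from \cite{KKO20} as their Fact~3.13, and the paper explicitly defers all proofs in this subsection to Section~3 of \cite{KKO20}. There is therefore no in-text proof to compare your argument against.

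On its own merits, your argument reads as correct. The reduction to the configuration $T \subsetneq S \subseteq u$ via laminarity and maximality of the $S'$-child $u$ is sound (the shared endpoints of $e,f$ inside $S'$ lie in $T \cap u$, and $u \subsetneq S$ would contradict maximality). The edge count $|\delta(S' \setminus T)| = 4$ is right: $\delta(T) \cap \delta(S') = \{e,f\}$, the remaining two edges of $\delta(T)$ run to the other $G_S$-neighbor of $T$ inside $S$, and the remaining two of $\delta(S')$ run to the other $G_{S'}$-neighbor of $V\setminus S'$, which is disjoint from $u \supseteq T$; so each pair lands in $S' \setminus T$. And $W := S' \setminus T$ indeed crosses the critical cut $S$, so $W$ is not critical; by \cref{fact:cut-types} one of $W$, $V\setminus W$ must then be an interval of some cycle cut $C$. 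Your laminarity case analysis goes through: $W \subseteq C$ forces $S \subsetneq C$ and $C \in \{S'\} \cup \{C : S' \subsetneq C\}$ (the option $C \subsetneq S'$ would force $C = W$, a non-critical set, or $C \subseteq T$, which is disjoint from $W$); for $C=S'$ the child $u$ meets $W$ in the proper nonempty set $u\setminus T$, and for $S' \subsetneq C$ one has $W \subseteq S' \subseteq$ a single child of $C$, forcing the interval to be that whole child and hence $W \supseteq S' \supsetneq W$. The complement case forces $C \supseteq V$. One small omission worth filling in: you should explicitly note that $V\setminus W = T \cup (V\setminus S')$ is also not a \emph{critical} cut (it too crosses $S$), not only that it is not an interval; this is immediate. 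With that added, the plan is complete.
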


The following two corollaries are immediate.

\begin{corollary}
\label{cor:degree-types}
    Suppose $S$ is a degree cut. Then $S$ has at most one edge that goes higher.
\end{corollary}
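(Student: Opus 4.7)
The plan is to let $S'$ denote the parent of $S$ in the laminar family $\mathcal{L}$ of critical cuts and to reinterpret ``goes higher'' as a structural statement inside $G_{S'}$. If $S$ has no parent in $\mathcal{L}$ then no edge of $\delta(S)$ can go higher and the claim is trivially true, so assume $S'$ exists. Writing $v_S$ for the contracted vertex representing $S$ in $G_{S'}$ and $w := v_{V\setminus S'}$, an edge $e \in \delta(S)$ goes higher precisely when $e \in \delta(S')$, and in $G_{S'}$ this is equivalent to $e$ being one of the parallel edges joining $v_S$ to $w$. So the number of edges of $\delta(S)$ that go higher is exactly the edge multiplicity of the pair $(v_S, w)$ in $G_{S'}$, and Fact 3.11 already bounds this by $2$.

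The first step is to apply Fact 3.10 to $S'$, with $v_S$ in the role of the contracted vertex ``$v \in S'$''. If two edges of $\delta(S)$ go higher then $v_S$ has two edges to $w$ in $G_{S'}$, and Fact 3.10 forces $S'$ to be a cycle cut. In the clean sub-case where $S'$ itself happens to be a degree cut we are already done, since Fact 3.10 yields an immediate contradiction and hence the multiplicity is at most $1$. The only remaining configuration to rule out is therefore the one where $S'$ is genuinely a cycle cut and $v_S$ sits adjacent to $w$ on its cycle of double edges.

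To dispose of the cycle-cut parent sub-case the plan is to feed the degree-cut hypothesis on $S$ into the contrapositive of Fact 3.10 applied to $S$ itself: in $G_S$ the four edges incident to $v_{V\setminus S}$ must attach to four distinct children of $S$. One then aims to combine this internal rigidity of $G_S$ with the external cycle-cut structure of $G_{S'}$ and with the nested-cut constraints of Facts 3.11--3.13, either by exhibiting a forbidden proper min cut inside $G_S$ (contradicting the degree-cut hypothesis) or by deriving a direct structural contradiction with the algorithm's maintenance of the laminar family. Since the paper flags both corollaries as ``immediate'' from the preceding facts, I expect the closing argument in this sub-case to be short once the right fact is pinpointed; identifying which of Facts 3.11--3.13 actually rules out the configuration with $v_S \sim w$ in a cycle-cut parent is the step where I anticipate the bulk of the thought is needed.
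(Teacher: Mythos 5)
There is a genuine gap, and it sits exactly where you placed your ``bulk of the thought'': the cycle-cut-parent sub-case cannot be closed, because the configuration you are trying to rule out actually occurs. Nothing in Facts 3.10--3.13 (nor in the algorithm) links the \emph{internal} type of $S$ (degree vs.\ cycle, i.e.\ the structure of $G_S$) to how many edges of $\delta(S)$ leave its parent. Concretely, take a cycle cut $S'$ with children $S_1,\dots,S_k$ arranged on a path of doubled edges, where the end child $S_1$ has its two remaining boundary edges going to $V\setminus S'$ (this is exactly the picture in \cref{fig:degree_cycle_cut}, where $g$ and $h$ go higher in $S_1$); the end child $S_1$ may perfectly well be a degree cut internally (e.g.\ $G_{S_1}=K_5$). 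Then $S_1$ is a degree cut with two edges going higher, consistent with all of Facts 3.10--3.13, so no contradiction can be derived in your remaining sub-case, and your plan to ``exhibit a forbidden proper min cut inside $G_S$'' must fail: the hypothesis on $G_S$ simply has no bearing on $\delta^{\uparrow}(S)$.

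The resolution is that the corollary is really a statement about the type of the \emph{parent}, which is how it is used throughout the paper: in \cref{lem:max-inc-bottom-edge} it justifies that when the parent $S'$ of a cut is a degree cut there is at most one edge in $\delta^{\uparrow}(S)$, and in sub-case (iii) of \cref{lem:top-edge-higher} it is applied to a cut $S''$ assumed to be a \emph{top} cut. Read this way (``if the parent of $S$ is a degree cut, then at most one edge of $\delta(S)$ goes higher''), the proof is exactly the one-liner you already have in your second paragraph and nothing more: if two edges of $\delta(S)$ went higher, then the contracted vertex $v_S$ would have two edges to $w=V\setminus S'$ in $G_{S'}$, so \cref{fact:two-higher-cycle} would force the parent $S'$ to be a cycle cut, contradicting that it is a degree cut. (Likewise \cref{cor:cycle-types} is about cuts whose parent is a cycle cut.) So your Fact~3.10 step is the entire intended argument; the third paragraph of your proposal chases a case that is not meant to be excluded and cannot be.
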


\begin{corollary}
\label{cor:cycle-types}
Suppose $S$ is a cycle cut. Then $S$ has either exactly two edges or no edges that go higher.
\end{corollary}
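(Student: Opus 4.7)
The plan is to prove \cref{cor:cycle-types} by case analysis on the type of the parent critical cut $S'$ of $S$ in the laminar hierarchy, writing $w' := V \setminus S'$. Fact 3.11 applied to the distinct tight sets $S$ and $S'$ gives $|\delta^{\uparrow}(S)| = |\delta(S) \cap \delta(S')| \le 2$, so the possibilities are $0$, $1$, or $2$ and it suffices to rule out the value $1$.

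If $S'$ is a cycle cut, then $G_{S'}$ is a cycle with two parallel edges between each pair of consecutive vertices, and the contracted vertex $S$ lies on this cycle with exactly two neighbors, each joined to $S$ by two parallel edges. The edges in $\delta(S) \cap \delta(S')$ correspond exactly to the edges from $S$ to $w'$ in $G_{S'}$; this count is $2$ if $w'$ is one of $S$'s two cycle-neighbors and $0$ otherwise. So $|\delta^{\uparrow}(S)| \in \{0,2\}$ in this case.

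If $S'$ is a degree cut, Fact 3.12 in its contrapositive form excludes $|\delta^{\uparrow}(S)| = 2$, leaving $0$ or $1$. To rule out the value $1$, I would exploit the cycle structure of $S$: the four edges of $\delta(S)$ partition into two pairs of cycle partners in $G_S$, each pair being the two parallel edges in $G_S$ between $w := V \setminus S$ and some child $v_i$ of $S$. Assuming for contradiction that exactly one edge $e$ of such a pair lies in $\delta(S')$ while its partner $e'$ has its outside-$S$ endpoint in $S' \setminus S$, I would consider the tight set $S \setminus v_i$ (which is tight by the direct count $|\delta(S \setminus v_i)| = |\delta(S)| + |\delta(v_i)| - 2|E(v_i,V\setminus S)| = 4$) and combine it with $e$'s endpoint in $V \setminus S'$ to construct an auxiliary tight set whose interaction with $S'$ or with another child of $S'$ violates Fact 3.11 or \cref{fact:two-higher-cycle} applied at the level of $S'$.

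The main obstacle is this final step, ruling out $|\delta^{\uparrow}(S)| = 1$ when the parent is a degree cut. The subtlety is that the two cycle partners at $S$ in $G_S$ need not come from a single LP edge with $x_e = 1$: in the half-integral setting they could be two distinct $x_e = 1/2$ edges whose other endpoints in $V \setminus S$ may a priori lie in different vertices of $G_{S'}$. Hence a purely local parity argument based on shared endpoints in $G$ does not directly suffice, and one must use the degree-cut property of $G_{S'}$ (which by \cref{fact:two-higher-cycle} forbids any child of $S'$ from having two parallel edges to $w'$) together with the cycle structure of $S$ to exclude this mixed configuration.
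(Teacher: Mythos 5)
Your difficulty is not a missing trick but a misreading of the statement, and it leaves your plan with an unfillable hole. You interpret the corollary as a claim about $\delta^{\uparrow}(S)$ for a cut $S$ whose \emph{own} type is a cycle cut, and you correctly reduce to excluding $|\delta^{\uparrow}(S)|=1$ when the parent $S'$ is a degree cut. But that configuration is genuinely realizable (for instance when $G_{S'}$ is a $K_5$, every child of $S'$ --- including one that is internally a cycle cut --- has exactly one edge to $V\setminus S'$), and the paper's own analysis depends on it: Case 2 of \cref{lem:max-inc-bottom-edge} is precisely a cycle cut $S$ whose parent is a degree cut and whose $\delta^{\uparrow}(S)$ consists of a single edge $a$. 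So no auxiliary-tight-set contradiction of the kind you sketch can rule it out, and the step you flag as the ``main obstacle'' cannot be completed.

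The corollary is intended, and used, as a statement about the cuts sitting \emph{inside} a cycle cut, i.e., about its children (the bottom cuts): see the sentence before \cref{lem:bot-edge-higher} (``bottom cuts with edges going higher \ldots have exactly two edges going higher'') and the exhaustiveness claim at the end of \cref{lem:max-inc-bottom-edge}. Read this way it is immediate, by exactly the argument in your cycle-cut-parent paragraph applied one level up: if $S$ is a cycle cut, then $G_S$ is a cycle through the children of $S$ and the contracted vertex $w=V\setminus S$, with two parallel edges between consecutive vertices; an edge of a child goes higher precisely when its other endpoint in $G_S$ is $w$, and $w$ is adjacent to exactly two children, by two parallel edges each, so every child has either two or zero edges that go higher. (Correspondingly, \cref{cor:degree-types} is the contrapositive of \cref{fact:two-higher-cycle} applied to the children of a degree cut.) You therefore already have the needed idea --- your second paragraph --- but the degree-cut-parent branch of your plan is chasing a statement that is false under your reading and should be dropped together with that reading.
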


We will also prove the following simple fact.

\begin{fact}
    \label{fact:cut-not-last}
    Let $S$ be a min cut that is not the last cut for any edge in the support graph. $S$ is always even in the tree $T$.
\end{fact}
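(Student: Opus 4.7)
My plan is to show that the hypothesis forces $S$ to have a very specific shape: a proper interval $S_i \cup \cdots \cup S_j$ of consecutive children strictly inside some cycle cut (with $1 < i \leq j < k$), in which case $\delta(S)$ consists of exactly two companion pairs and $\delta_T(S) = 2$ is forced by the companion property. I would start by invoking \cref{fact:cut-types}, which tells us every min cut is either a critical cut or an interval of a cycle cut, and I would analyze each of those cases separately.

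Suppose first that $S$ is a critical cut, and let $P$ be its parent in the laminar hierarchy. If $P$ is a degree cut, any edge $e \in \delta(S)$ going to a sibling child $S'$ of $P$ is a top edge with $S_e = P$, so its two last cuts are exactly $S$ and $S'$; this makes $S$ a last cut of $e$, contradicting the hypothesis. The only alternative is $\delta(S) \subseteq \delta(P)$, and since both boundaries have cardinality $4$ this forces $\delta(S) = \delta(P)$, so $P \setminus S$ would have no outgoing edges at all, contradicting connectivity of $G$ combined with $S \subsetneq P$. Hence $P$ must be a cycle cut, so $S = S_i$ is one of its children $S_1, \ldots, S_k$. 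The boundary indices $i \in \{1, k\}$ are ruled out by the explicit companion-last-cut formula in the excerpt: for $i=1$, the companion pair between $S_1$ and $S_2$ has last cuts $S_1$ and $S_2 \cup \cdots \cup S_k$, making $S = S_1$ a last cut of those edges. Thus $1 < i < k$, and $\delta(S)$ equals the union of the two companion pairs between $S_i$ and its neighbors $S_{i-1}, S_{i+1}$.

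Suppose instead that $S$ is not a critical cut but is a proper interval $S_i \cup \cdots \cup S_j$ inside a cycle cut $P$, with $i < j$ and $(i,j) \neq (1,k)$. If $i=1$ (symmetrically $j=k$), the companion pair between $S_j$ and $S_{j+1}$ has last cuts $S_1 \cup \cdots \cup S_j = S$ and $S_{j+1} \cup \cdots \cup S_k$, so $S$ is again a last cut of those edges, a contradiction. Hence $1 < i \leq j < k$, and $\delta(S)$ decomposes into the two companion pairs between $S_{i-1}, S_i$ and between $S_j, S_{j+1}$.

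In both surviving cases $\delta(S)$ is a disjoint union of two companion pairs, so by the companion property each pair contributes exactly one edge to $T$, giving $\delta_T(S) = 2$ deterministically, which is even. The main obstacle is the degree-cut-parent subcase of step two: one has to rule out $\delta(S) = \delta(P)$ by combining the cardinality constraint $|\delta(S)| = |\delta(P)| = 4$ with connectivity of $G$ and $S \subsetneq P$. Once that subcase is closed off, the rest of the proof is a direct application of the companion property together with the explicit last-cut formula for bottom edges.
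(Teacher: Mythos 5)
Your proof is correct and takes essentially the same approach as the paper's: invoke \cref{fact:cut-types} to pin down the structure of $S$, rule out the cases where $S$ would serve as a last cut, and conclude that $\delta(S)$ is a disjoint union of two companion pairs so that $\delta_T(S)=2$ deterministically. The paper's version compresses the whole first stage into a single appeal to \cref{fact:cut-types} (asserting outright that the relevant parent is a cycle cut and that $S$ is a strictly interior interval), while you carefully dispose of the degree-cut-parent subcase and of the boundary indices $i\in\{1,k\}$; these are genuine gaps in the paper's sketch that your argument closes. Two small notes: your connectivity contradiction ($\delta(S)=\delta(P)$ forces $P\setminus S$ to have no outgoing edges) is fine, but the shortcut the paper's machinery would suggest is \cref{fact:two-higher-cycle}, which immediately says that a child of a degree cut cannot have two or more edges going higher; and in your second case you write the conclusion as $1<i\le j<k$ even though you have already stipulated $i<j$ there, which is a harmless slip.
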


\begin{proof}
    Let $S'$ be the parent of $S$ in the hierarchy of critical cuts. By \cref{fact:cut-types}, $S'$ must be a cycle cut with child cuts $S_1, \cdots, S_k$ with two edges between $S_i$ and $S_{i+1}$ for $1 \leq i \leq k-1$. Since $S$ is not the last cut of any edge, it must be of the form $S_i \cup \cdots \cup S_j$ for $1 < i < j < k$. 

    Since exactly one of the two companions between $S_{i-1}, S_i$ and $S_j, S_{j+1}$ are in $T$, $|\delta_T(S)| = 2$ and $S$ is even.
\end{proof}

\subsection{Bounding the Expected Number of Even at Last Edges on Min Cuts}

As previously mentioned, for an edge $e$, there cannot be any meaningful lower bound on the probability of $e$ being even at last as there can be edges with $p_e \approx 0$. In contrast, we will show that for any top cut $S$, there are strong lower bounds on $p(\delta^\rightarrow(S))$. This intuitively shows that in expectation, $y(\delta(S))$ can be decreased for every min cut. 

The following lemma can be thought of as as the analog of Lemma 5.3 of \cite{KKO20} (which showed at least one edge in each cut has $p_e \ge \frac{1}{27}$) when adapted to our framework, and uses similar proof ideas. 

\begin{lemma}
    \label{lem:4/27-1/27}
    For any top cut $S$ with no edge going higher, $p(\delta(S)) \geq \frac{4}{27}$. Moreover, if $W \subset \delta(S), |W| = 3$. Then, $p(W) \geq \frac{1}{27}$.
\end{lemma}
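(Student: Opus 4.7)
The plan is to prove that $p_e \geq \frac{1}{27}$ for every edge $e \in \delta(S)$, from which both statements of the lemma follow immediately: summing over the four edges of $\delta(S)$ gives $p(\delta(S)) \geq \frac{4}{27}$, and summing over any three-edge subset $W \subset \delta(S)$ gives $p(W) \geq \frac{3}{27} \geq \frac{1}{27}$.

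Since $S$ is a top cut with no edges going higher, every $e \in \delta(S)$ stays inside the parent degree cut $P$, and its other endpoint lies in a sibling top cut $C_e$ of $P$. The last cuts of $e$ are therefore exactly $S$ and $C_e$, and $p_e = \P{\delta_T(S)\text{ even and }\delta_T(C_e)\text{ even}}$. Using the identity $\delta_T(S) + \delta_T(C_e) \equiv \delta_T(S \cup C_e) \pmod{2}$ together with inclusion-exclusion,
\[
p_e \;=\; \frac{\P{\delta_T(S)\text{ even}} + \P{\delta_T(C_e)\text{ even}} + \P{\delta_T(S \cup C_e)\text{ even}} - 1}{2}.
\]
By \cref{lem:cut_even}, the first two probabilities are each at least $\frac{13}{27}$, so the task reduces to lower-bounding $\P{\delta_T(S \cup C_e)\text{ even}}$.

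Let $m$ denote the number of support-graph edges between $S$ and $C_e$. Half-integrality combined with the subtour constraint $x(\delta(S \cup C_e)) \geq 2$ rules out $m = 3$ (which would force $x(\delta(S \cup C_e)) = 1$), leaving three cases. When $m = 4$, we have $S \cup C_e = V$ and $\delta(S) = \delta(C_e)$, so $p_e = \P{\delta_T(S)\text{ even}} \geq \frac{13}{27}$. When $m = 2$, the union $S \cup C_e$ is itself tight with four boundary edges, so \cref{lem:cut_even} applies and gives $\P{\delta_T(S \cup C_e)\text{ even}} \geq \frac{13}{27}$, whence $p_e \geq \frac{2}{9}$. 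When $m = 1$, the union has $|\delta(S \cup C_e)| = 6$ and $x(\delta(S \cup C_e)) = 3$ and is not tight, so \cref{lem:cut_even} no longer applies directly; this is the main obstacle.

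For the $m = 1$ case I would condition on whether the single shared edge $e$ lies in $T$, analyze the three remaining edges of each of $\delta(S)$ and $\delta(C_e)$ using \cref{lem:three-edge} (whose hypothesis $\P{|W\cap T|\geq 1} = 1$ can be arranged after the conditioning, since $\delta_T(S) \geq 1$ a.s.), and use negative association together with the independence between the $P$-level sampling and the higher-level sampling to lower-bound the parity probability. The principal difficulty is that a blind application of \cref{thm:hoeffding} to a six-Bernoulli sum with total mean $3$ admits a degenerate minimizer in which all Bernoullis are $\{0,1\}$-valued and the sum equals $3$ almost surely, yielding probability $0$ of even parity. Ruling this out requires exploiting that every support edge has marginal $\frac{1}{2}$ and that $G[P]$ is $4$-edge-connected, possibly by invoking the polynomial capacity bound \cref{cor:capacity} to certify a strictly positive lower bound on $\P{\delta_T(S \cup C_e)\text{ even}}$ large enough to push $p_e$ above $\frac{1}{27}$.
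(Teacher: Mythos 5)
Your proposal rests on establishing the per-edge bound $p_e \geq \frac{1}{27}$ for every $e \in \delta(S)$, but this is both unproven where it matters and strategically at odds with what the paper (and its predecessor, Lemma 5.3 of KKO20) actually establishes. The paper is explicit that individual edges can have $p_e \approx 0$, and its statement of \cref{lem:4/27-1/27} only claims $p(W) \geq \frac{1}{27}$ for a \emph{triple} of edges -- if a uniform per-edge bound of $\frac{1}{27}$ were available, the lemma would trivially state $\frac{3}{27}$ instead. The paper's proof accordingly works in aggregate: it derives the per-edge inequality $p_e \geq \frac{13}{27} - \mathbb{P}[\delta_T(v)\text{ odd}] + \mathbb{P}[e \in T \mid \delta_T(v)=1]\cdot \mathbb{P}[\delta_T(v)=1]$, which can be vacuous (even negative) for a single edge, and only becomes useful after summing over $\delta(v)$, where $\sum_{e\in\delta(v)} \mathbb{P}[e \in T \mid \delta_T(v)=1] = 1$ kicks in and \cref{thm:hoeffding} finishes the optimization. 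Your route of bounding each $p_e$ separately via $2p_e = \mathbb{P}[\delta_T(S)\text{ even}] + \mathbb{P}[\delta_T(C_e)\text{ even}] + \mathbb{P}[\delta_T(S\cup C_e)\text{ even}] - 1$ (the identity itself is correct) therefore needs strictly more than the paper proves.

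The concrete gap is your case $m=1$, which is in fact the generic case: inside a degree cut two distinct children sharing two edges would form a proper $4$-edge cut, contradicting the degree-cut property except in degenerate configurations, so $m=2$ essentially never occurs and $m=4$ forces $S \cup C_e = V$. For $m=1$ you need $\mathbb{P}[\delta_T(S \cup C_e)\text{ even}] \geq \frac{1}{9}$ to conclude $p_e \geq \frac{1}{27}$, but $\delta(S\cup C_e)$ is a $6$-edge cut of $x$-value $3$, and no tool in the paper gives a positive lower bound on the even-parity probability of such a cut: \cref{lem:cut_even} does not apply, and, as you yourself note, \cref{thm:hoeffding} admits the degenerate minimizer (three Bernoullis at $1$, three at $0$) with even-parity probability $0$. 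Your suggestion that $4$-edge-connectivity plus \cref{cor:capacity} "possibly" rules this out is not an argument -- \cref{cor:capacity} lower-bounds a specific coefficient (all marginals equal to $1$ after symmetrization), which is not the quantity at hand here, and no estimate is supplied. So the proof is incomplete exactly where the real difficulty of the lemma lies; to repair it you would either have to prove a genuinely new parity bound for $6$-edge cuts (stronger than anything in KKO20 or this paper), or abandon the per-edge strategy and argue in aggregate over the cut as the paper does.
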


\begin{proof}

Let $v$ be the vertex corresponding to $S$ after contraction. For an edge $e \in \delta(S)$, with (contracted) last cuts $u$ and $v$, we have,
    \begin{align*}
        \mathbb{P}[e \text{ Even at last}] &= 1 - \mathbb{P}[\delta_T(v) \text{ is odd} \lor \delta_T(u) \text{ is odd}] \\
        &=1 - \mathbb{P}[\delta_T(u) \text{ is odd}] - \mathbb{P}[\delta_T(v) \text{ is odd}] + \mathbb{P}[\delta_T(v) \text{ is odd} \land \delta_T(u) \text{ is odd}] \\
        &\geq \frac{13}{27} - \mathbb{P}[\delta_T(v) \text{ is odd}] + \mathbb{P}[\delta_T(v) \text{ is odd} \land \delta_T(u) \text{ is odd}]
    \end{align*}
Where in the last line we used \cref{lem:cut_even}. Now we will bound $\mathbb{P}[\delta_T(v) \text{ is odd} \land \delta_T(u) \text{ is odd}]$.
\begin{align*}
    \mathbb{P}[\delta_T(v) \text{ is odd} \land \delta_T(u) \text{ is odd}] &= \mathbb{P}[\delta_T(u) \text{ is odd} \mid \delta_T(v) \text{ is odd}] \cdot \mathbb{P}[\delta_T(v) \text{ is odd}] 
    \\ &\geq \mathbb{P}\Big[\delta_T(u) \text{ is odd} \Bigm| \delta_T(v) = 1\Big] \cdot \mathbb{P}[\delta_T(v) = 1]
\end{align*}
Conditioning on $\delta_T(v) = 1$ is identical to conditioning on $\delta_T(v) \leq 1$, as $\P{\delta_T(v) \ge 1} = 1$. Therefore, by negative association, we have $\mathbb{P}\Big[e \in T \bigm| \delta_T(v) = 1\Big] \leq \frac{1}{2}$. Moreover,
$$\mathbb{P}\Big[\delta_T(u) \text{ is odd} \Bigm| \delta_T(v) = 1\Big] \geq \mathbb{P}\Big[e \in T \Bigm| \delta_T(v) = 1\Big]$$
as the measure conditioned on $\delta_T(v) = 1$ first samples a tree $T'$ from $V - v$ and then chooses an edge in $\delta_T(v)$ according to its conditional probability. Depending on the degree of $u$ in $T'$ being even or odd, we can make $\delta_T(u)$ odd by either conditioning on $e$ being in the tree or out of the tree. By negative association, $\mathbb{P}\Big[e \notin T \Bigm| \delta(v) = 1\Big] \geq \mathbb{P}\Big[e \in T \Bigm| \delta_T(v) = 1 \Big]$, therefore,
\begin{align*}
    \mathbb{P}[\delta_T(v) \text{ is odd} \land \delta_T(u) \text{ is odd}] \geq \mathbb{P}\Big[e \in T \Bigm| \delta_T(v) = 1\Big] \cdot \mathbb{P}[\delta_T(v) = 1]
\end{align*}

Which in turn gives, 
\begin{align*}
    p_e \geq \frac{13}{27} - \mathbb{P}[\delta_T(v) \text{ is odd}] + \mathbb{P}\Big[e \in T \Bigm| \delta_T(v) = 1\Big] \cdot \mathbb{P}[\delta_T(v) = 1]
\end{align*}

As $\sum\limits_{e \in \delta(v)} \mathbb{P}\Big[e \in T \Bigm| \delta_T(v) = 1\Big] = 1$, summing over $\delta(v)$ gives,
\begin{align*}
    p(\delta(v)) &\geq \frac{52}{27} - 4\cdot \mathbb{P}[\delta_T(v) \text{ is odd}] + \mathbb{P}[\delta_T(v) = 1] \\
    &= \frac{52}{27} - 3 \cdot \mathbb{P}[\delta_T(v) = 1] - 4\cdot \mathbb{P}[\delta_T(v) = 3]
\end{align*}

By \cref{thm:hoeffding}, this term attains its minimum when the Bernoullis are $\{1 , \frac{1}{3}, \frac{1}{3}, \frac{1}{3} \}$ at $\frac{4}{27}$. This shows the first claim in the lemma, that $p(\delta(S)) \ge \frac{4}{27}$. 

For the second part, consider the inequality,
\begin{align*}
    p_e \geq \frac{13}{27} - \mathbb{P}[\delta_T(S) \text{ is odd}] + \mathbb{P}\Big[e \in T \Bigm| \delta_T(v) = 1\Big] \cdot \mathbb{P}[\delta_T(v) = 1]
\end{align*}
     Since for every edge in $\delta(S)$ we have $\mathbb{P}\Big[e \in T \Bigm| \delta_T(v) = 1\Big] \leq \frac{1}{2}$, $\mathbb{E}\Big[W_T \Bigm| \delta_T(v) = 1\Big] \geq \frac{1}{2}$. By summing this inequality over all $e \in W$ we get,
\begin{align*}
    p(W) &\geq \frac{39}{27} - 3 \cdot \mathbb{P}[\delta_T(v) \text{ is odd}] + 0.5 \cdot \mathbb{P}[\delta_T(v) = 1] \\
    &= \frac{13}{9} - 2.5 \cdot \mathbb{P}[\delta_T(v) = 1] - 3 \cdot \mathbb{P}[\delta_T(v) = 3]
\end{align*}
By \cref{thm:hoeffding}, this term attains its minimum at $\frac{1}{27}$ which concludes the proof.
\end{proof}

Now, we will prove a similar result for top cuts with an edge that goes higher. But first, we will prove the following simple lemma.

\begin{lemma}
    \label{lem:13/54-edge}
    Let $e$ be a top edge with  (contracted) last cuts $u$ and $v$. Suppose $u$ has an edge going higher, and $v$ does not. Then $e$ is even at last with probability at least $\frac{13}{54}$.
\end{lemma}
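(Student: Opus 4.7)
The plan is to exploit the level-by-level independence of the max-entropy samples drawn by \cref{alg}. Since $e$ is a top edge, $S_e$ is a degree cut and the two last cuts $u,v$ are children of $S_e$. Applying \cref{fact:two-higher-cycle} to $S_e$ and the contracted vertex representing $u$, the hypothesis that $u$ has an edge going higher forces $u$ to have \emph{exactly one} such edge, call it $f\in\delta(u)\cap\delta(S_e)$. The remaining three edges of $\delta(u)$, and (by hypothesis) all four edges of $\delta(v)$, lie in $E(S_e)$.

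Next I would decompose $\delta_T(u)=q_u+r_u$ where $q_u:=|\delta(u)\cap E(S_e)\cap T|$ counts the three non-higher edges of $u$ in the tree and $r_u:=\ind{f\in T}$ is the indicator that the unique higher-going edge of $u$ is chosen. Analogously $\delta_T(v)=q_v:=|\delta(v)\cap T|$. The crucial observation is that $q_u$ and $q_v$ are determined entirely by the max-entropy sample drawn at level $S_e$, while $r_u$ is determined by a strictly higher level of \cref{alg}. Since these samples are independent draws from max-entropy distributions on disjoint edge sets, $r_u$ is independent of $(q_u,q_v)$. Moreover, since the max-entropy distribution at the level containing $f$ preserves marginals and every edge of the support graph $G$ has value $1/2$, we have $\P{r_u=1}=x_f=1/2$.

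With this setup, the lemma follows from a one-line conditional computation by the tower property:
\[
\P{u \text{ even},\ v \text{ even}}
=\E{\ind{q_v \text{ even}}\cdot \P{r_u\equiv q_u \pmod{2}\mid q_u}}
=\tfrac{1}{2}\,\P{q_v \text{ even}}
=\tfrac{1}{2}\,\P{v \text{ even}}\ \ge\ \tfrac{13}{54},
\]
where the second equality uses that $r_u$ is a uniform $\{0,1\}$ variable independent of $q_u$, and the final inequality is \cref{lem:cut_even} applied to $v$ (whose boundary has $x(\delta(v))=2$ and $|\delta(v)|=4$).

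The main obstacle is really just articulating the independence of $r_u$ from $(q_u,q_v)$ rigorously; everything else is bookkeeping. The two hypotheses of the lemma play complementary roles: "$u$ has an edge going higher," together with \cref{fact:two-higher-cycle}, guarantees that the higher-level randomness affecting $\delta_T(u)$ consists of a single unbiased coin flip, while "$v$ has no edge going higher" makes $\delta_T(v)$ measurable with respect to the $S_e$-level sample alone, so that coin can be averaged out freely without disturbing $v$'s parity.
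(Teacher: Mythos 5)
Your proof is correct and takes essentially the same route as the paper's: apply \cref{lem:cut_even} to $v$ (even with probability at least $\tfrac{13}{27}$, since its four boundary edges are all sampled at the level of $S_e$) and use the independence of the single higher-level edge at $u$, which lies in $T$ with probability $\tfrac12$, to fix the parity of $u$, giving $\tfrac{13}{27}\cdot\tfrac12=\tfrac{13}{54}$. Your extra bookkeeping (exactly one higher edge via \cref{fact:two-higher-cycle}, the decomposition $\delta_T(u)=q_u+r_u$) simply makes explicit what the paper's shorter argument leaves implicit.
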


\begin{proof}
    By \cref{lem:cut_even}, $v$ is even at last with probability at least $\frac{13}{27}$. Additionally, as $e$ is in a different level of the hierarchy than the edges in $\delta^\rightarrow(v)$, by letting $e$ be in or out of $T$ respectively, we can fix the parity of $u$ with probability $\frac{1}{2}$. This concludes the proof.
\end{proof}

\begin{lemma}
    \label{lem:7/32-top-cut}
    Let $S$ be a top cut with an edge $e \in \delta(S)$ that goes higher and let $W$ be any two of the edges in $\delta^\rightarrow(S)$. Then, $p(W) \geq \frac{7}{32}$.
\end{lemma}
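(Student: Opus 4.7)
The plan is to split the analysis into cases based on whether the ``other'' last cuts $T_1, T_2$ of $f_1, f_2$ (respectively) have edges going higher. Let $P$ denote the parent of $S$ in the hierarchy; since $S$ is a top cut, $P$ is a degree cut, and $T_1, T_2$ are top-cut siblings of $S$ under $P$. Recall that $f_i$ is even at last iff both $S$ and $T_i$ are even in $T$. In the easy case, where at least one $T_i$ (say $T_1$) has no edge going higher, \cref{lem:13/54-edge} applies to $f_1$ (with $u = S$ having edge $e$ going higher, and $v = T_1$ having no edge going higher), yielding $p(f_1) \geq \tfrac{13}{54}$. Since $\tfrac{13}{54} > \tfrac{7}{32}$ and $p(f_2) \geq 0$, we conclude $p(W) \geq \tfrac{7}{32}$.

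In the hard case where both $T_1, T_2$ have edges going higher, the key observation is that the edge $e \in \delta^{\uparrow}(S)$ is sampled by the max-entropy process on an ancestor of $P$ and is therefore independent of the edges in $G[P]$ (which include $f_1, f_2$ and any remaining edge $f_3 \in \delta^{\rightarrow}(S)$). When $x_e = \tfrac{1}{2}$, this independence combined with the fact that $e$ is Bernoulli$(\tfrac{1}{2})$ yields $\mathbb{P}[S \text{ even}] = \tfrac{1}{2}$ exactly, and the same reasoning gives $\mathbb{P}[T_i \text{ even}] = \tfrac{1}{2}$ whenever $T_i$ has exactly one edge going higher (i.e., $T_i$ is a degree cut, by \cref{cor:degree-types}). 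Under these equalities, one can verify the identity
\[
p(f_i) \;=\; \mathbb{P}[S \text{ even} \wedge T_i \text{ even}] \;=\; \tfrac{1}{2}\,\mathbb{P}[\delta_T(S \cup T_i) \text{ even}],
\]
reducing the problem to bounding the parity of $\delta_T(S \cup T_i)$. By Fact 3.11, $|\delta(S) \cap \delta(T_i)| \in \{1,2\}$. If the intersection equals $2$ for some $i$, then $S \cup T_i$ is tight (since $x(\delta(S \cup T_i)) = 2$), and \cref{lem:cut_even} gives $\mathbb{P}[\delta_T(S \cup T_i) \text{ even}] \geq \tfrac{13}{27}$, hence $p(f_i) \geq \tfrac{13}{54} > \tfrac{7}{32}$. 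The residual sub-case is when both intersections equal $1$; here $S \cup T_i$ is a non-tight cut with $x(\delta(S\cup T_i)) = 3$ and $|\delta(S \cup T_i)| = 6$, and a direct application of \cref{thm:hoeffding} to the six Bernoulli indicators (constrained by $|\delta(P)| = 4$ via \cref{cor:degree-types} and \cref{cor:cycle-types}) supplies the remaining bound.

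The main obstacle is this residual sub-case, together with several corner situations that break the clean identity above. In particular, when $T_i$ is a cycle cut with $|\delta^{\uparrow}(T_i)| = 2$, the two going-higher edges can be correlated (e.g., be companions) in the ancestor process, so $\mathbb{P}[T_i \text{ even}]$ may deviate from $\tfrac{1}{2}$; and when $x_e = 1$ the edge $e$ has a companion also in $\delta^{\uparrow}(S)$, forcing $|\delta^{\rightarrow}(S)| = 2$ and thus $W = \delta^{\rightarrow}(S)$, which requires a separate sum-over-$\delta^\rightarrow$ analysis in the spirit of the proof of \cref{lem:4/27-1/27}. I expect each of these corner cases to be handled by a tailored Hoeffding computation, and the boundary value $\tfrac{7}{32}$ to emerge from an extremal Bernoulli configuration (plausibly one of the form $\{1, \tfrac{1}{2}, \tfrac{1}{2}, 0\}$ or a variant giving $\mathbb{P}[\delta_T(S) = 1] \geq \tfrac{7}{32}$), where the constraints imposed by the degree-cut structure of $P$ and by $|\delta(P)| = 4$ are used to rule out worse configurations.
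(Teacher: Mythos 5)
Your easy case (one of the ``other'' last cuts has no edge going higher) matches the paper and is fine, and your observation that the going-higher edge $e$ is independent of the edges in the parent degree cut, so that $\mathbb{P}[S \text{ even}]=\tfrac12$ and the parity-agreement identity $p(f_i)=\tfrac12\,\mathbb{P}[\delta_T(S\cup T_i)\text{ even}]$ holds, is correct. (Two of your worries are moot: a top cut has at most one edge going higher, so $T_i$ cannot be a cycle cut with two correlated going-higher edges; and an edge with $x_e=1$ would put both parallel copies in $\delta^{\uparrow}(S)$, forcing $S$ to be a bottom cut, so $x_e=\tfrac12$ always in this lemma.)

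The genuine gap is the residual sub-case, which is the heart of the lemma and is exactly the part you leave to ``a direct application of \cref{thm:hoeffding} to the six Bernoulli indicators.'' That step fails as stated: $\delta(S\cup T_i)$ is a cut with $x$-value $3$ and six edges, and the law of $\delta_T(S\cup T_i)$ is only guaranteed to be a sum of independent Bernoullis with total mean $3$; nothing you invoke rules out configurations such as $(1,1,1,0,0,0)$, or (concretely) the two going-higher edges being perfectly anti-correlated while the four inside edges always contribute exactly two, in which case $\delta_T(S\cup T_i)$ is deterministically odd and $p(f_i)=0$. Indeed the paper stresses that single edges can have $p_e\approx 0$, which is precisely why the bound must be proved \emph{jointly} for the two edges of $W$; a per-edge reduction followed by an unconstrained Hoeffding minimization cannot give $\tfrac{7}{32}$, and your guess at the extremal configuration is speculation rather than an argument. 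The paper's actual route in this sub-case is different: it conditions on $e\in T$ (resp.\ $e\notin T$) and applies \cref{lem:three-edge} to the three edges of $\delta^{\rightarrow}(S)$ to make $S$ even with probability at least $\tfrac12$ (resp.\ $\tfrac38$), then uses the going-higher edges $e_a,e_b$ of $S_a,S_b$ --- which are independent of $G_{S'}$ --- to fix the parities of $S_a,S_b$, and finally applies negative association to the conditional probabilities of $\{e_a,e_b\}$, summing over the two edges of $W$ to get $\tfrac14\cdot\tfrac12+\tfrac{3}{16}\cdot\tfrac12=\tfrac{7}{32}$. To complete your proof you would need either this conditioning argument or some substitute constraint on the joint law of $\delta_T(S\cup T_1)$ and $\delta_T(S\cup T_2)$ that lower-bounds the \emph{sum} of their even-parity probabilities by $\tfrac{7}{16}$; as written, that bound is missing.
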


\begin{proof}
    Denote the two edges in $W$ by $a, b$. If the other last cut of any of these two doesn't go higher, by \cref{lem:13/54-edge} $p(W) \geq \frac{13}{54}$ which satisfies the lower bound. Now, assume the other last cut of each of $a$ and $b$ has an edge that goes higher. Call these other last cuts by $S_a, S_b$ and the edges going higher from them $e_a, e_b$. Now, condition on $e \in T$. Since $e$ goes higher in $S$, it's independent from the edges that are inside $S$. Therefore, by \cref{lem:three-edge}, with probability at least $\frac{1}{2}$ exactly one of the edges in $\delta^\rightarrow(S)$ are in $T$. This makes the degree of $S$ even.

    \begin{figure}[!htpb]
    \centering
    \begin{tikzpicture}
    \Vertex[label = $S$, x = 0, y = 2, color = cyan!35, style = {draw = blue!50!cyan!60, line width = 0.7pt}]{S}
    \Vertex[style = {draw = blue!50!cyan!60, line width = 0.7pt}, x = 2, y = 2, color = cyan!35]{C}
    \Vertex[label = $S_a$, x = 0, y = 0, color = cyan!28, style = {draw = blue!50!cyan!60, line width = 0.7pt}]{A}
    \Vertex[label = $S_b$, x = 2, y = 0, color = cyan!28, style = {draw = blue!50!cyan!60, line width = 0.7pt}]{B}

    \Vertex[x=-1.5, y = 3.5,Pseudo]{P}
    \Vertex[x = 3.5, y = 3.5,Pseudo]{P3}
    \Vertex[x=-1.5, y = -1.5,Pseudo]{P1}
    \Vertex[x=3.5, y=-1.5,Pseudo]{P2}

    \Edge[label = $a$, position = left, color = green!40!teal!90](S)(A) \Edge[label = $b$, position = {below left}, color = green!40!teal!90](S)(B) \Edge[ position = {below left}](S)(C)

    \Edge[color=magenta!80!purple!70, position = {above right}, label = $e$](S)(P)
    \Edge[color=magenta!80!purple!70, position = {below right}, label = $e_a$](A)(P1)
    \Edge[color=magenta!80!purple!70, position = {below left}, label = $e_b$](B)(P2)

    \draw [black,line width=1.2pt, dashed] (1,1) ellipse (2.15 and 2.15);
    \node [draw=none] at (1,-1.5) () {$S'$};
\end{tikzpicture}
    \caption{Illustration of \cref{lem:7/32-top-cut} when $S_a$ and $S_b$ both have an edge going higher. The green edges represents the edges in $W$.}
\end{figure}
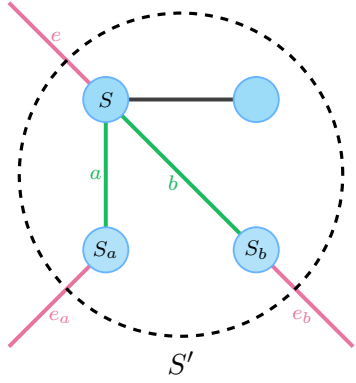

    Similarly, condition on $e \notin T$ which happens with probability $\frac{1}{2}$. By \cref{lem:three-edge}, with probability at least $\frac{3}{8}$ exactly 2 edges in $\delta^\rightarrow(S)$ are in $T$ which makes $S$ even. Thus,
    
    \begin{align*}
        \mathbb{P}[e \in T \, \land \, \delta_T(S) \text{ is even}] = \frac{1}{2} \cdot \frac{1}{2} = \frac{1}{4} 
        \quad \text{and} \quad  \mathbb{P}[e \notin T \, \land \, \delta_T(S) \text{ is even}] = \frac{1}{2} \cdot \frac{3}{8} = \frac{3}{16} 
    \end{align*}

    where we have used the fact that $e$ is independent from the edges in $\delta^\rightarrow(S)$.

    Now, in both cases, depending on the parity of $S_a, S_b$ inside $S$, we can choose $e_a$ (or $e_b$) to be inside or outside $T$ to make $a$ (or $b$) even at last. Since $\mathbb{P}[e_a \in T \mid e \in T] \leq \frac{1}{2} \leq \mathbb{P}[e_a \notin T \mid e \in T]$ and $\mathbb{P}[e_a \notin T \mid e \notin T] \leq \frac{1}{2} \leq \mathbb{P}[e_a \in T \mid e \notin T]$, we have,

    \begin{align*}
        p(W) &\geq \frac{1}{4} \Big(\mathbb{P}[e_a \in T \mid e \in T] + \mathbb{P}[e_b \in T \mid e \in T]\Big) + \frac{3}{16}\Big(\mathbb{P}[e_a \notin T \mid e \notin T] + \mathbb{P}[e_a \notin T \mid e \notin T]\Big) \\
        &\geq \frac{7}{32}
    \end{align*}

    Where the last inequality follows from  negative association.
\end{proof}

For bottom cuts we can prove stronger bounds. In fact, bottom edges inside a cycle cut are all simultaneously even at last. This symmetry enables us to individually bound $p_e$ for each edge $e$.

\begin{lemma}
    \label{lem:bottom-edge-even}
    Every bottom edge is even at last with probability at least $\frac{1}{4}$.
\end{lemma}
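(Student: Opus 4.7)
The plan is to recast the event ``$e$ is even at last'' for a bottom edge $e$ as a parity condition on two cycle-partner pairs, and then bound the joint probability that each pair contributes exactly one tree edge using polynomial capacity (\cref{cor:capacity}).

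First I would unpack what ``even at last'' means combinatorially. Let $S_e$ be the cycle cut containing $e$ with children $S_1,\dots,S_k$ in cyclic order, let $w = V\setminus S_e$, and suppose $e$ lies between $S_i$ and $S_{i+1}$. Denote the two cycle-partner pairs in $\delta(S_e)$ by $\{a_1,a_1'\}$ (between $w$ and $S_1$) and $\{a_k,a_k'\}$ (between $S_k$ and $w$). Every min cut on the cycle of $G_{S_e}$ that contains $e$ is an interval whose two boundary positions must include the one between $S_i$ and $S_{i+1}$; any companion boundary contributes exactly $1$ to $\delta_T(\cdot)$ (since exactly one companion lies in $T$), so the only nontrivial boundaries of such intervals are the two cycle-partner positions. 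Setting $X := |\{a_1,a_1'\}\cap T|$ and $Y := |\{a_k,a_k'\}\cap T|$, this shows that $e$ is even at last if and only if $X=1$ and $Y=1$.

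Next I would express $\Pr[X=1,Y=1]$ as a single coefficient of a 2-variable real stable polynomial. Let $p(x_1,x_{1'},x_k,x_{k'})$ be the generating polynomial of the max-entropy distribution projected onto these four edges; projections of strongly Rayleigh distributions are strongly Rayleigh, so $p$ is real stable. Define
\[
q(x,y) \;:=\; p(x,x,y,y).
\]
Any zero of $q$ with $\mathrm{Im}(x),\mathrm{Im}(y)>0$ would lift to a zero of $p$ with all four arguments in the upper half-plane, contradicting real stability of $p$, so $q$ is real stable. By construction $q$ is the generating polynomial of the $\mathbb Z_{\ge 0}^{2}$-valued random vector $(X,Y)$, and $[xy]\,q = \Pr[X=1,Y=1]$. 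Tightness of the last cut $L = S_1\cup\dots\cup S_i$, combined with the fact that the companion pair at position $(S_i,S_{i+1})$ contributes $x$-value $1$ to $\delta(L)$, forces $x_{a_1}+x_{a_{1'}}=1$, and symmetrically $x_{a_k}+x_{a_{k'}}=1$. Hence $\partial_x q(1,1) = \mathbb E[X]=1$ and $\partial_y q(1,1)=\mathbb E[Y]=1$.

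Finally I would apply \cref{cor:capacity} to $q$. The maximum $x$- and $y$-degrees of $q$ are at most $2$; if either is strictly less (say $d_x<2$), then combined with $\mathbb E[X]=1$ this forces $X\equiv 1$, reducing the claim to $\Pr[Y=1]\ge 1/4$, which follows from \cref{thm:hoeffding} applied to $(a_k,a_{k'})$ (whose indicators are distributed as a sum of two independent Bernoullis of total expectation $1$, so $\Pr[Y=1]\ge 1/2$). Otherwise $d_x=d_y=2$ and \cref{cor:capacity} directly yields
\[
\Pr[X=1,Y=1] \;=\; [xy]\,q \;\ge\; \frac{2\cdot 1^{1}}{2^{2}}\cdot \frac{2\cdot 1^{1}}{2^{2}} \;=\; \frac14,
\]
which completes the proof. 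The main obstacle is the combinatorial first step: carefully enumerating the min cuts on the cycle that contain $e$ and observing that only the two cycle-partner boundary positions matter, so that ``even at last'' really reduces to $X=1$ and $Y=1$. Once this reduction is in hand, the bound follows from a single clean application of polynomial capacity to the symmetrized polynomial $q$.
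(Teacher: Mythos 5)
Your proposal is correct and follows essentially the same route as the paper: reduce ``even at last'' for a bottom edge to the event that each of the two cycle-partner pairs in $\delta(S_e)$ contributes exactly one tree edge, symmetrize the projected generating polynomial to a bivariate real stable polynomial, and apply \cref{cor:capacity} (with $d_1=d_2=2$) to obtain the $\tfrac14$ bound. The extra verification of $\mathbb E[X]=\mathbb E[Y]=1$ and the treatment of the degenerate $d_x<2$ case are fine but unnecessary, since the corollary's formula already yields $1$ when the degree is $1$ and degree $0$ is ruled out by the marginals.
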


\begin{proof}
    Consider a bottom edge $e$, so that $S_e$ is a cycle cut with child cuts $S_1,\dots,S_k$ with two edges between each $S_i$ and $S_{i+1}$ for each $1 \le i \le k-1$. When a tree on $S_e$ is chosen, we will obtain exactly one edge among every pair of adjacent child cuts. The edges in $\delta(S)$ are comprised of two sets of edges, $A = \{a,b\} = \delta(S) \cap \delta(S_1)$ and $B = \{c,d\} = \delta(S) \cap \delta(S_k)$. So, $e$ is even at last exactly when $A_T = B_T = 1$. 

    Project $\mu$ to $\{a,b,c,d\}$. The resulting distribution has generating polynomial $p(x_a,x_b,x_c,x_d)$. Symmetrize so that $p'(x_a,x_b) = p(x_a,x_a,x_b,x_b)$. By \cref{cor:capacity}, where $d_1$ is the degree of $x_a$ and $d_2$ the degree of $x_b$,
    $$\mathbb{P}[A_T=B_T=1] = p'_{(1,1)} \ge \frac{d_1 (d_1-1)^{d_1-1}}{d_1^{d_1}} \cdot \frac{d_2 (d_2-1)^{d_2-1}}{d_2^{d_2}} \ge \frac{1}{4}$$
    as desired, since $A_T \le 2$ and $B_T \le 2$ so $d_1 \le 2$ and $d_2 \le 2$.
\end{proof}

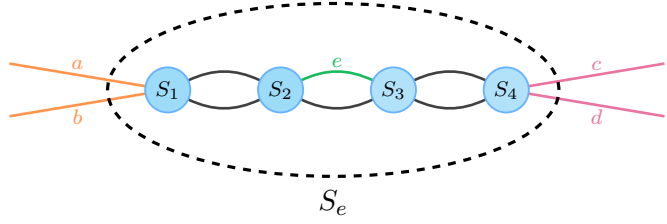
\begin{figure}[!htbp]
    \centering
    \begin{tikzpicture}[baseline=(current bounding box.center)]
            \Vertex[label = $S_1$, x = 0, y = 0, color = cyan!35, style = {draw = blue!50!cyan!60, line width = 0.7pt}]{A}
            \Vertex[label = $S_2$,style = {draw = blue!50!cyan!60, line width = 0.7pt}, x = 1.5, y = 0, color = cyan!35]{B}
            \Vertex[label = $S_3$,x = 3, y = 0, color = cyan!28, style = {draw = blue!50!cyan!60, line width = 0.7pt}]{C}
            \Vertex[label = $S_4$,x = 4.5, y = 0, color = cyan!28, style = {draw = blue!50!cyan!60, line width = 0.7pt}]{D}

            \Vertex[x=-2.4, y=0.4,Pseudo]{P11}
            \Vertex[x=6.9, y=0.4,Pseudo]{P12}
            \Vertex[x=-2.4, y=-0.4,Pseudo]{P13}
            \Vertex[x=6.9, y=-0.4,Pseudo]{P14}

            \Edge[position = below, bend = 25, lw = 1 pt](B)(A)
            \Edge[position = above, bend = 25, lw = 1 pt](A)(B)
            \Edge[position = below, bend = 25, lw = 1 pt](C)(B)
            \Edge[label = $e$, position = above, bend = 25, lw = 1 pt, color = green!40!teal!90](B)(C)
            \Edge[position = above, bend = 25, lw = 1 pt](C)(D)
            \Edge[position = below, bend = 25, lw = 1 pt](D)(C)

            \Edge[label = $a$, position = above, color=orange!80!red!70, lw = 1 pt](A)(P11)

            \Edge[label = $c$, position = above, color=magenta!80!purple!70, lw = 1 pt](D)(P12)

            \Edge[label = $b$, position = below, color=orange!80!red!70, lw = 1 pt](A)(P13)

            \Edge[label = $d$, position = below, color=magenta!80!purple!70, lw = 1 pt](D)(P14)

            \draw [black,line width=1.2pt, dashed] (2.2,0) ellipse (3 and 1.15);
            \node [draw=none] at (2.2,-1.5) () {$S_e$};
    \end{tikzpicture}
    \caption{The structure of $S_e$ at the proof of \cref{lem:bottom-edge-even}.}
\end{figure}
We remark that this lemma is tight: there are instances where this probability is exactly $\frac{1}{4}$, and the relevant edges have generating polynomial $(x_a+x_c)(\frac{1}{2}+x_b)(\frac{1}{2}+x_d)$.

\subsection{Structure of Degree Cuts and $K_5$}

In this section, we show that if a degree cut $S$ satisfies a certain structure, then $G_S$ must be the graph $K_5$. We use this fact in \cref{sec:analysis} to show that if many edges in $\delta^\rightarrow(S)$ have conditions that prevent them from decreasing reasonably, the degree cut $S$ must have a  simple structure. In particular, $G_S$ must be a $K_5$. We note that Gupta et al. \cite{GLLM21} also treated $K_5$ as a special case, and adapted the algorithm to treat these cuts differently. While we do not change the algorithm, we similarly analyze this case separately. 

\begin{lemma}
    \label{lem:K5-triangle}
    Let $S$ be a degree cut with (contracted) cuts $u,v,w \in G[S]$ each with an edge that goes higher. If, $ u,v,w$ forms a triangle, then $G_S$ is $K_5$.
\end{lemma}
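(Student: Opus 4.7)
The plan is to pin down the structure of $G_S$ by combining degree counting with the no-proper-min-cut hypothesis (which holds because $S$ is a degree cut). First I would observe that since $S$ is tight, $|\delta(S)| = 4$ in the support graph $G$, and since $G$ is 4-regular, so is $G_S$. In particular, the contracted vertex $w_S \in G_S$ representing $V \setminus S$ has degree exactly $4$, and the edges of $\delta_{G_S}(w_S)$ are precisely $\delta(S)$. The key translation is that an edge out of $u$ (resp.\ $v, w$) ``goes higher'' exactly when it lies in $\delta(S)$, i.e., when in $G_S$ it is incident to $w_S$. So the hypothesis gives at least one edge between each of $u, v, w$ and $w_S$.

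Next I would rule out parallel edges in $G_S$. Since $u, v, w, w_S$ are four distinct vertices, $|V(G_S)| \geq 4$, so any two-vertex set $\{a, b\}$ is a proper cut. If $G_S$ had two parallel edges between some pair $a, b$, then $\{a, b\}$ would have exactly $4 + 4 - 2\cdot 2 = 4$ edges in its boundary and would therefore be a proper min-cut, contradicting the assumption that $S$ is a degree cut. Hence $G_S$ is simple. In particular, $n_{uv} = n_{uw} = n_{vw} = 1$ (the triangle) and $n_{uw_S} = n_{vw_S} = n_{ww_S} = 1$ (the going-higher edges).

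With simplicity in hand, I would count degrees to show $|V(G_S)| = 5$. Each of $u, v, w$ already has $3$ of its $4$ edges accounted for (two triangle edges plus one edge to $w_S$), leaving exactly one edge going to a vertex outside $\{u, v, w, w_S\}$; likewise $w_S$ has one edge going outside. This forces at least one additional vertex, giving $|V(G_S)| \geq 5$. For the matching upper bound, consider the cut $\{u, v, w, w_S\}$: its boundary consists of exactly those four ``outside'' edges, hence has size $4$. If $|V(G_S)| \geq 6$, the complement has size $\geq 2$, making $\{u, v, w, w_S\}$ a proper min-cut, again contradicting the degree-cut hypothesis. Therefore $|V(G_S)| = 5$.

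With exactly one extra vertex $y$, the four outside edges (one from each of $u, v, w, w_S$) must all land on $y$, which simultaneously saturates $y$'s degree of $4$. Together with simplicity this gives $G_S = K_5$. The main obstacle is ensuring the case analysis is exhaustive: the argument uses the no-proper-min-cut property in exactly two places (once to forbid multi-edges and once to cap $|V(G_S)|$ at $5$), and I would want to verify that every structural option is killed by one of these two cuts rather than slipping through.
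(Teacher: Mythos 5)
Your proof is correct and follows essentially the same route as the paper: the paper considers $R = S \setminus \{u,v,w\}$, notes $|\delta(R)| = 4$, and uses the no-proper-min-cut property of a degree cut to force $|R| = 1$, so $G_S$ is 4-regular on 5 vertices, i.e.\ $K_5$; your cut $\{u,v,w,w_S\}$ is just the complement of that same $R$. The only difference is that you spell out details the paper leaves implicit (ruling out parallel edges and verifying the boundary count), which is fine.
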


\begin{proof}
    Consider the set $R = S \backslash \{ u, v, w \}$. $|\delta (R)| = 4$, so $R$ is a min-cut. However, $R \subsetneq S$, and since there are no proper min-cuts inside a degree-cut, it must be the case that $|R| = 1$. Therefore $G_S$ is a 4-regular graph with 5 vertices which concludes the proof.
\end{proof}

The following results from the fact that setting $\lambda=1$ is the $\lambda$-uniform distribution for $K_4$ when $x_e = \frac{1}{2}$ for all edges. (Note $\lambda$ is unique for a fixed $x$, so this is the only possible distribution.)
\begin{fact}
    \label{fact:K4-dist}
    The max entropy distribution on the graph $K_4$ is the uniform spanning tree distribution.
\end{fact}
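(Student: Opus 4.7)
The plan is to exhibit the uniform spanning tree distribution on $K_4$ as a $\lambda$-uniform distribution whose marginals match $x_e = \tfrac{1}{2}$ for every edge, and then appeal to uniqueness of the max entropy distribution with prescribed marginals. This is really a two-line observation dressed up as a fact, so the proof will be short and the main thing to get right is citing the correct uniqueness statement.

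First I would take $\lambda_e = 1$ on every edge $e$ of $K_4$. By definition of a $\lambda$-uniform distribution, every spanning tree $T$ then has probability proportional to $\prod_{e \in T} \lambda_e = 1$, so $\mu_{\mathbf{1}}$ is literally the uniform distribution over the spanning trees of $K_4$. Next I would compute its marginals. Because $K_4$ is edge-transitive (the action of $S_4$ on its vertices induces a transitive action on edges), every edge lies in the same number of spanning trees and therefore has the same marginal under $\mu_{\mathbf{1}}$. Each spanning tree of $K_4$ contains $n-1 = 3$ edges and $K_4$ has $6$ edges, so the sum of marginals over edges equals $3$, forcing each marginal to equal $\tfrac{1}{2}$. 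This matches the target marginal vector exactly.

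The last step is uniqueness. The max entropy distribution over spanning trees with a prescribed marginal vector lying in the spanning tree polytope is itself $\lambda$-uniform, and any $\lambda$-uniform distribution is uniquely determined by its marginals; both facts are immediate consequences of strict concavity of the entropy functional together with the analysis underlying \cite{AGMOS17}. Since $\mu_{\mathbf{1}}$ is a $\lambda$-uniform distribution whose marginals coincide with $x_e = \tfrac{1}{2}$ for every $e$, it must equal the max entropy distribution with those marginals. I do not anticipate a substantive obstacle: the entire argument reduces to the edge-transitivity count in $K_4$ plus the already-invoked uniqueness of $\lambda$-uniform distributions with fixed marginals.
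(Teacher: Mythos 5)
Your proof is correct and takes essentially the same approach as the paper: set $\lambda_e = 1$ to obtain the uniform spanning tree distribution, observe by the symmetry of $K_4$ that every edge then has marginal $\tfrac{1}{2}$, and appeal to uniqueness of the $\lambda$-uniform distribution with prescribed marginals.
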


We will now complement \cref{lem:K5-triangle} with the following. 

\begin{lemma}
    \label{lem:K_5-edge-even}
    Let $S$ be a degree cut of the support graph $G$ such that $G_S = K_5$. Every edge in $G[S]$ is even at last with probability at least $\frac{1}{4}$. 
\end{lemma}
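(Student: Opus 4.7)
The plan is to reduce the computation to an explicit enumeration over the $16$ spanning trees of $K_4$. Since $S$ is a degree cut with $G_S = K_5$, its four child cuts $u_1, u_2, u_3, u_4$ are the non-$w$ vertices of $K_5$ (where $w := V \setminus S$), the induced subgraph $G[S]$ is $K_4$ with $x_e = \tfrac{1}{2}$ on every edge, and by \cref{fact:K4-dist} the max entropy distribution on $G[S]$ coincides with the uniform spanning tree distribution on $K_4$. Moreover, each $u_i$ has exactly one edge $f_i$ going higher to $w$, also with $x_{f_i} = \tfrac{1}{2}$.

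Fix an edge $e = (u_i, u_j) \in G[S]$. Because the only min cuts of $K_5$ are the singletons, the last cuts of $e$ are $u_i$ and $u_j$ themselves, so $e$ is even at last iff both $\delta_T(u_i)$ and $\delta_T(u_j)$ are even. I would split the boundary degree as $\delta_T(u_i) = d_i + \epsilon_i$, where $d_i := |T \cap \delta(u_i) \cap E(S)|$ is the contribution from the tree sampled inside $G[S]$ and $\epsilon_i := \ind{f_i \in T}$ is the contribution from the higher-level sampling; by the independence of sampling across levels of the hierarchy, the random pair $(d_i, d_j)$ is independent of $(\epsilon_i, \epsilon_j)$.

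The key calculation is to enumerate $P_{a,b} := \P{d_i \equiv a, \, d_j \equiv b \pmod 2}$. The $4$ stars in $K_4$ have every vertex of odd degree ($1$ or $3$) and contribute entirely to $P_{1,1}$. The $12$ paths, grouped by unordered endpoint pair with $2$ orderings per pair, split as follows: the pair $\{u_i, u_j\}$ yields $(1,1)$ (contributing $2$ paths), the complementary pair yields $(0,0)$ (contributing $2$ paths), and the four pairs containing exactly one of $u_i, u_j$ yield $(1,0)$ or $(0,1)$ (contributing $4$ paths each). Therefore $16 \cdot (P_{00}, P_{01}, P_{10}, P_{11}) = (2,4,4,6)$. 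Finally, since $x_{f_i} = x_{f_j} = \tfrac{1}{2}$, the joint law $q_{a,b} := \P{\epsilon_i = a, \, \epsilon_j = b}$ is forced to satisfy $q_{00} = q_{11}$ and $q_{01} = q_{10}$ (from the two marginal identities); writing $q_{00} = q_{11} = \tfrac{1}{2} - r$ and $q_{01} = q_{10} = r$ and substituting into $\sum_{a,b} P_{a,b}\, q_{a,b}$, the coefficient of $r$ evaluates to $-2 + 4 + 4 - 6 = 0$, so the sum is exactly $\tfrac{4}{16} = \tfrac{1}{4}$ regardless of the higher-level joint law. The main obstacle is conceptual rather than computational: one must verify that $u_i, u_j$ really are the last cuts of $e$ (which uses that $K_5$ has no proper min cut) and that $(d_i, d_j) \perp (\epsilon_i, \epsilon_j)$ (which uses independence across hierarchy levels). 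Once these are in place, the bound follows from a single substitution, and the value $\tfrac{1}{4}$ is in fact attained with equality.
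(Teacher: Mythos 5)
Your proposal is correct and follows essentially the same route as the paper: reduce to the uniform spanning tree distribution on $K_4$ via \cref{fact:K4-dist}, enumerate the $16$ trees to get the joint parity counts $(2,4,4,6)/16$, and use independence of the edges going higher from the sampling inside $G[S]$. The only difference is cosmetic: where the paper invokes \cref{thm:hoeffding} to lower-bound the resulting expression by $\frac14$, you observe that the marginals $x_{f_i}=x_{f_j}=\frac12$ force the correlation term to cancel, so the probability equals exactly $\frac14$ regardless of the joint law of the higher edges — a harmless sharpening of the same argument.
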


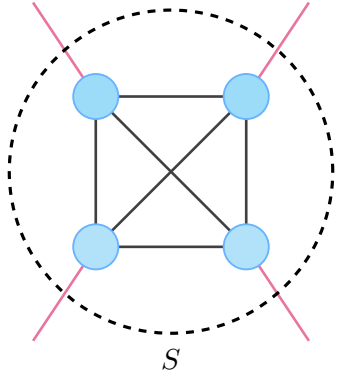
\begin{figure}[!htbp]
    \centering
    \begin{tikzpicture}
    \Vertex[x = 0, y = 2, color = cyan!35, style = {draw = blue!50!cyan!60, line width = 0.7pt}]{S}
    \Vertex[style = {draw = blue!50!cyan!60, line width = 0.7pt}, x = 2, y = 2, color = cyan!35]{A}
    \Vertex[x = 0, y = 0, color = cyan!28, style = {draw = blue!50!cyan!60, line width = 0.7pt}]{B}
    \Vertex[x = 2, y = 0, color = cyan!28, style = {draw = blue!50!cyan!60, line width = 0.7pt}]{C}

    \Vertex[x=-1, y = 3.5,Pseudo]{P}
    \Vertex[x = 3, y = 3.5,Pseudo]{P1}
    \Vertex[x=-1, y = -1.5,Pseudo]{P2}
    \Vertex[x=3, y=-1.5,Pseudo]{P3}

    \Edge[position = below, lw = 1 pt](S)(A) 
    \Edge[position = below, lw = 1 pt](S)(B) 
    \Edge[position = below, lw = 1 pt](S)(C) 
    \Edge[position = below, lw = 1 pt](B)(A) 
    \Edge[position = below, lw = 1 pt](C)(A) 
    \Edge[position = below, lw = 1 pt](B)(C) 

    \Edge[color=magenta!80!purple!70, position = {above right}, fontcolor = black, lw = 1 pt](S)(P)
    
    \Edge[color=magenta!80!purple!70, position = {above left}, fontcolor = black, lw = 1 pt](A)(P1)
    
    \Edge[color=magenta!80!purple!70, position = {below right}, fontcolor = black, lw = 1 pt](B)(P2)
    
    \Edge[color=magenta!80!purple!70, position = {below left}, fontcolor = black, lw = 1 pt](C)(P3)

    \draw [black,line width=1.2pt, dashed] (1,1) ellipse (2.15 and 2.15);
    \node [draw=none] at (1,-1.5) () {$S$};
\end{tikzpicture}
    \caption{A degree cut where $G_S$ is a $K_5$.}
    \label{fig:K4}
\end{figure}

\begin{proof}
    By \cref{fact:K4-dist}, the max entropy distribution samples a uniform spanning tree of $G[S]$. Consider an edge $uv \in G[S]$. There are $4^2 = 16$ labeled spanning trees of $K_4$ in total. For any choice of parities for $u$ and $v$, we count the number of spanning trees satisfying them,
    
    \begin{itemize}
        \item There are $2$ spanning trees where both $u$ and $v$ are even.
        \item There are $4$ spanning trees where $u$ is even and $v$ is odd.
        \item There are $4$ spanning trees where $u$ is odd and $v$ is even.
        \item There are $6$ spanning trees where both $u$ and $v$ are odd.
    \end{itemize}
    
    Since the edges going higher at $u$ and $v$ are independent of the edges in $G[S]$, we can fix the parity of $u$ and $v$ to be even by choosing an appropriate spanning tree of $G[S]$. Let $e_u, e_v$ be the edges going higher in $u$ and $v$. Then \begin{align*}
        \mathbb{P}[u, v \text{ even}] = \frac{6}{16}\mathbb{P}[\{e_u, e_v \}_T =2] + \frac{4}{16} \mathbb{P}[\{e_u, e_v \}_T =1] + \frac{2}{16} \mathbb{P}[\{e_u, e_v \}_T = 0] \geq \frac{1}{4}
\end{align*} where the inequality is by \cref{thm:hoeffding}.

This shows that degree cuts $S$, which $G_S = K_5$, are very favorable as every edge in them is even at last with large probability.
\end{proof}

\section{Analysis}
\label{sec:analysis}

\paragraph{Construction of the O-Join Solution.}
Now, we describe the construction of the O-Join solution in full detail.

For an edge $e$, recall $p_e$ is the probability of $e$ being even at last. For top edges define $\tilde{p}_e = \min(\alpha, p_e)$ and for bottom edges $\tilde{p}_e = \min(\beta, p_e)$ for constants $\alpha,\beta$ we will set momentarily. Note that $\tilde{p}(A)$ for $A \subseteq E$ denotes $\sum\limits_{e \in A} \tilde{p}_e$. For each edge $e$ with $p_e \neq 0$, let $B_e$ be an independent Bernoulli with success probability $\frac{\tilde{p}_e}{p_e}$, if $p_e = 0$, let $B_e = 0$. Moreover, for two bottom edges in the same cycle cut $e,f$, let $B_e = B_f$ with probability one. Note that since these bottom edges are always simultaneously even at last this is well-defined. 

Finally, let $S$ be a critical cut and $e \in \delta^{\rightarrow}(S)$.  By $r_S(e)$ we denote $\frac{\tilde{p}_e}{\tilde{p}(\delta^{\rightarrow}(S))}$. In other words, $r_S(e)$ represents the fraction of even at last edges in $\delta^{\rightarrow}(S)$ that $e$ is responsible for.

We will now construct $y$ in 3 different steps, done sequentially. The construction is done after sampling a tree $T$ and sampling the Bernoullis $B_e$ for every $e \in E$. 
\begin{enumerate}
    \item Let $y_e = \frac{1}{4}$ for each $e \in E$.
    \item For any even at last edge $e$ with $B_e = 1$, reduce $y_e$ by $\tau$.
    \item For each odd min cut $S$, let $\Delta(S) = \max(0, 1 - y(\delta(S)))$. For each edge $e$ and its last cuts $S, S'$, increase $e$ by $\max(r_S(e) \cdot \Delta(S), r_{S'}(e) \cdot \Delta(S'))$.
\end{enumerate}

Throughout the analysis let $\beta = \frac{1}{4}, \alpha = 0.1129032, \tau = \frac{1}{12}$. The third step of our construction along with \cref{fact:cut-not-last}, ensure that $y$ satisfies the O-Join constraint for all odd min cuts. Moreover,  since $\tau = \frac{1}{12}$, the O-Join constraint for every non min cut $S$ is also satisfied, even if all edges in $\delta(S)$ are simultaneously reduced, as such cuts have at least $6$ edges covering them.

Now, we will consider different cases for a min-cut $S$ and show that for every case $\mathbb{E}[y(\delta(S)]$ is at least reduced by $0.00448$ in expectation. For brevity, we do not consider cuts of the final cycle considered in \cref{alg}, as these cuts can easily be seen to obey the bounds described here. 

\begin{lemma}
    \label{lem:min-gain-top-cut}
    Let $S$ be a top cut with an edge that goes higher, then, $\tilde{p}(\delta(S)) \geq 2 \alpha$.
\end{lemma}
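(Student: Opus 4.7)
The plan is to leverage Lemma~\ref{lem:7/32-top-cut}, which guarantees that any pair of edges $W\subseteq\delta^{\rightarrow}(S)$ satisfies $p(W)\ge 7/32$, together with a structural argument that pins down $|\delta^{\rightarrow}(S)|=3$. The capping constant $\alpha\approx 0.113$ is chosen small enough that once three edges in $\delta^{\rightarrow}(S)$ are available, these pairwise lower bounds are enough to force $\tilde p(\delta^{\rightarrow}(S))\ge 2\alpha$.

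The first step is the structural observation. Since $S$ is a top cut, its parent $S'$ in the laminar hierarchy of critical cuts is a degree cut. Combining Fact~3.11 of \cite{KKO20} (which gives $|\delta(S)\cap\delta(S')|\le 2$) with the contrapositive of Fact~3.12 of \cite{KKO20} (if $|\delta(S)\cap\delta(S')|=2$ then $S'$ is a cycle cut) yields $|\delta(S)\cap\delta(S')|\le 1$. Since the hypothesis provides at least one edge going higher, we have equality and therefore $|\delta^{\rightarrow}(S)|=3$.

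The second step is a short case analysis. Label $\delta^{\rightarrow}(S)=\{e_1,e_2,e_3\}$ with $p_{e_1}\le p_{e_2}\le p_{e_3}$ and split on how many $p_{e_i}$'s meet the cap $\alpha$. If the two smallest already do, then $\tilde p(\delta^{\rightarrow}(S))\ge 2\alpha$ trivially, and if the smallest does we even get $3\alpha$. Otherwise $p_{e_2}<\alpha$, and Lemma~\ref{lem:7/32-top-cut} applied to $\{e_1,e_2\}$ gives $p_{e_1}+p_{e_2}\ge 7/32$. If $p_{e_3}\ge\alpha$ this already yields $\tilde p(\delta^{\rightarrow}(S))\ge 7/32+\alpha>2\alpha$ (using $\alpha<7/32$); if instead $p_{e_3}<\alpha$, summing the three pairwise bounds from Lemma~\ref{lem:7/32-top-cut} gives $p_{e_1}+p_{e_2}+p_{e_3}\ge 21/64>2\alpha$. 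The lemma then follows from $\tilde p(\delta(S))\ge \tilde p(\delta^{\rightarrow}(S))$.

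The main obstacle is the structural step. If $|\delta^{\rightarrow}(S)|$ were allowed to be $2$, Lemma~\ref{lem:7/32-top-cut} applied to the only available pair would only give $\tilde p(\delta(S))\ge 7/32<2\alpha$, which is not enough; ruling out two edges going higher, using the fact that the parent of a top cut is a degree cut rather than a cycle cut, is therefore essential. Once that structural reduction is in place, the remainder is a case-by-case verification that the cap $\alpha$ never destroys enough mass to drop the sum below $2\alpha$.
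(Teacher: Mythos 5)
Your proposal is correct and follows essentially the same route as the paper: both reduce to the three edges of $\delta^{\rightarrow}(S)$ (at most one edge of a top cut goes higher, by \cref{cor:degree-types}, i.e.\ Facts 3.11--3.12 of \cite{KKO20}) and then combine the pairwise bound of \cref{lem:7/32-top-cut} with a short case split on which $p_e$'s are capped at $\alpha$, yielding $21/64$, $7/32+\alpha$, or $2\alpha$ in the respective cases. Your only addition is spelling out the structural step that the paper leaves implicit; aside from a minor wording slip ("two smallest" where you really split on whether $p_{e_2}\ge\alpha$), the argument matches the paper's.
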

\begin{proof}
    Let $a,b,c$ be the three edges in $\delta(S)$ that don't go higher. If the probability of being even at last on none of these edges is more than $\alpha$, then,
    \begin{align*}
        \tilde{p}(\delta(S)) = p_a + p_b +p_c = \frac{(p_a + p_b) + (p_b + p_c) + (p_a + p_c)}{2} \overset{(\mathrm{i})}{\geq} \frac{3}{2} \cdot \frac{7}{32} \geq 2\alpha
    \end{align*}
    where inequality~($\mathrm{i}$) follows from \cref{lem:7/32-top-cut}. Otherwise, assume $\tilde{p}(a) = \alpha$, then, by  \cref{lem:7/32-top-cut}, $\tilde{p}(b) + \tilde{p}(c) \geq \min(\alpha, \frac{7}{32}) = \alpha$. Therefore, in all cases $\tilde{p}(\delta(S)) \geq 2\alpha$.
\end{proof}

A crucial consequence of \cref{lem:min-gain-top-cut} is that for any top cut $S$ and an edge $a \in \delta^\rightarrow(S)$, we have $r_S(a) \leq \frac{1}{2}$ as $\tilde{p}_a \leq \alpha$ by definition.

\begin{lemma}
    \label{lem:top-no-edge}    
    Let $S$ be a top cut with no edge that goes higher, then, $$\mathbb{E}[y(\delta(S)] \leq 1 - \left(\alpha + \frac{1}{27}-\frac{\beta}{4}\right) \cdot \tau$$
\end{lemma}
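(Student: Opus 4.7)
The plan is to compute $\mathbb{E}[y(\delta(S))]$ by tracking the expected decrease from step 2 and the expected increase from step 3 separately, starting from $y(\delta(S))=1$ (since $|\delta(S)|=4$ and $y_e=1/4$ after step 1). For each $e\in\delta(S)$, step 2 reduces $y_e$ by $\tau$ with probability exactly $\tilde{p}_e$ (by the definition of the Bernoullis $B_e$), so the expected total decrease is $\tau\,\tilde{p}(\delta(S))$. To lower-bound this, I would invoke \cref{lem:4/27-1/27}: if some edge $e^{*}\in\delta(S)$ has $p_{e^{*}}\geq\alpha$, then $\tilde{p}_{e^{*}}=\alpha$, and the 3-edge bound $p(W)\geq 1/27$ applied to $W=\delta(S)\setminus\{e^{*}\}$ gives $\tilde{p}(\delta(S))\geq \alpha+1/27$; otherwise all $p_e<\alpha$ so $\tilde{p}(\delta(S))=p(\delta(S))\geq 4/27$, which matches $\alpha+1/27$ up to the tiny gap $3/27-\alpha$ that is absorbed by the slack in the final bound.

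The key structural step is analyzing the increase. For each $e\in\delta(S)$ the step-3 increment is $\max(r_S(e)\Delta(S),\,r_{S'_e}(e)\Delta(S'_e))$, where $S'_e$ is the other last cut of $e$. A crucial observation is that $\Delta(S)=0$: because $S$ has no edge going higher, $S$ is a last cut of \emph{every} edge in $\delta(S)$, so whenever $S$ is odd in $T$ none of these edges can be even at last, no edge in $\delta(S)$ is reduced in step 2, and hence $y(\delta(S))=1$; if $S$ is even, $\Delta(S)=0$ by the definition of step 3. So every increase on $\delta(S)$ comes via the other last cut $S'_e$, which is a sibling of $S$ under the parent degree cut $P$. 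By the identical reasoning applied to $S'_e$, $\Delta(S'_e)>0$ requires $S'_e$ to be odd \emph{and} some edge of $\delta^{\uparrow}(S'_e)$ to be even at last, so only siblings of $S$ that have edges going higher contribute.

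For each such sibling $S'$ I would use the crude bound $\mathbb{E}[\Delta(S')\cdot\mathbb{1}_{S'\text{ odd}}]\leq \tau\,\tilde{p}(\delta^{\uparrow}(S'))$, and then route this deficit to $\delta(S)$ via the ratios $r_{S'}(e)$. By the remark right after \cref{lem:min-gain-top-cut}, $\tilde{p}(\delta^{\rightarrow}(S'))\geq 2\alpha$, and hence $r_{S'}(e)=\tilde{p}_e/\tilde{p}(\delta^{\rightarrow}(S'))\leq \tilde{p}_e/(2\alpha)\leq 1/2$. Combining this with $|\delta(S,S')|\leq 2$ (Fact 3.11 of \cite{KKO20}), $\tilde{p}_f\leq\beta$ for $f\in\delta^{\uparrow}(S')$, and the global accounting $\sum_{S'\neq S}|\delta^{\uparrow}(S')|=|\delta(P)|$ (constrained through \cref{cor:degree-types} by $|\delta^{\uparrow}(P)|\leq 1$, which limits how many and of what type the higher-edges out of $P$ can be), a case analysis on how the higher-edges of $P$ distribute among the siblings yields total expected increase on $\delta(S)$ at most $\tau\beta/4$.

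Combining decrease $\geq\tau(\alpha+1/27)$ with increase $\leq \tau\beta/4$ gives $\mathbb{E}[y(\delta(S))]\leq 1-\tau(\alpha+1/27-\beta/4)$, as required. The hardest part is obtaining the sharp $\tau\beta/4$ on the expected increase: the first-pass estimate $\sum_{S'}r_{S'}(e)\tilde{p}(\delta^{\uparrow}(S'))\leq 1\cdot\beta$ is off by a factor of roughly $4$, and the extra factor requires either (i) combining Fact 3.11 with \cref{cor:degree-types} to see that only a very restricted number of higher-edges can exist across all siblings simultaneously, or (ii) refining $\mathbb{E}[\Delta(S')\mathbb{1}_{S'\text{ odd}}]$ by exploiting negative association between the parity of $S'$ and the event that an edge of $\delta^{\uparrow}(S')$ is even at last, since those events involve cuts at different levels of the hierarchy and one expects non-trivial cancellation.
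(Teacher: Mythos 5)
There is a genuine gap in your accounting of the step-3 increase. You claim that ``a case analysis on how the higher-edges of $P$ distribute among the siblings yields total expected increase on $\delta(S)$ at most $\tau\beta/4$,'' justified via $|\delta^{\uparrow}(P)|\le 1$ from \cref{cor:degree-types}. But that corollary only restricts the edge going higher \emph{out of the parent} $P$; it says nothing about how many \emph{children} of $P$ have an edge going higher, since those edges land in $\delta(P)$, which has four edges. Up to four siblings $S_a,\dots,S_d$ can each have one edge going higher (this is exactly the $K_5$-type configuration in \cref{lem:K5-triangle}), in which case every edge of $\delta(S)$ can receive an increase and the total increase can be as large as $\beta\tau$ (or $4\cdot\frac{5\alpha}{16}\tau$), not $\beta\tau/4$. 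Combined with your decrease bound of only $(\alpha+\frac{1}{27})\tau$, two such siblings already give $\alpha+\frac{1}{27}-\frac{\beta}{2}\approx 0.025<\alpha+\frac1{27}-\frac\beta4\approx 0.087$, so your global ``decrease bound plus increase bound'' decomposition cannot close. Moreover, even your per-sibling estimate is too crude as stated: $\mathbb{E}[\Delta(S')\mathbb{1}_{S'\text{ odd}}]\le\tau\,\tilde p(\delta^{\uparrow}(S'))$ with $\tilde p_f\le\beta$ and $r_{S'}(e)\le\frac12$ only gives $\frac{\beta}{2}\tau$ per sibling; to get $\frac{\beta}{4}\tau$ you need the parity factor $\mathbb{P}[S'\text{ odd}\mid f\text{ reduced}]\le\frac12$ (companion swap, for a bottom higher edge) or $\le\frac58$ with $\tilde p_f=\alpha$ (via \cref{lem:three-edge}, for a top higher edge), which you only gesture at in your alternative (ii).

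The missing idea is the per-edge coupling the paper uses: by \cref{lem:13/54-edge}, any edge $a\in\delta(S)$ whose other last cut $S_a$ has an edge going higher satisfies $p_a\ge\frac{13}{54}>\alpha$, hence $\tilde p_a=\alpha$. So every edge that can receive an increase also decreases at the full cap, and each such edge contributes a \emph{net} decrease of at least $(\alpha-\frac{\beta}{4})\tau$; edges whose other last cut has no higher edge receive no increase at all (their $\Delta$ vanishes by your own observation applied to $S_a$). One then does the case analysis on the number $k$ of siblings with a higher edge: for $k\ge 2$ the net gain is at least $2(\alpha-\frac\beta4)\tau$, for $k=1$ the remaining three edges supply $\frac{1}{27}\tau$ via the three-edge part of \cref{lem:4/27-1/27}, giving the claimed worst case, and for $k=0$ there is no increase and $\tilde p(\delta(S))\ge\alpha$. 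Note also that your ``tiny gap absorbed by slack'' remark in the $k=0$-style situation only works because there is no increase there; without \cref{lem:13/54-edge} you cannot rule out the scenario of decrease $\frac{4}{27}$ together with an increase of $\frac{\beta}{4}$, which would fail the bound by about $0.0018$.
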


\begin{proof}
    Let $a,b,c,d$ be the edges in $\delta^\rightarrow(S)$ and denote the other last cuts of these edges respectively by $S_a,S_b,S_c, S_d$.
    For an edge $a$, if $S_a$ has an edge that goes higher, then by Lemma \ref{lem:13/54-edge}, $\tilde{p}_a = \alpha$. Let $e_a$ be the edge in $\delta(S_a)$ that goes higher. 
    
    If $e_a$ is a bottom edge, then 
    \begin{align*}
         \tilde{p}(e_a) = \beta \quad \text{and} \quad \mathbb{P}[S_a\text{ is odd} \mid a \text{ is reduced}] = \frac{1}{2}
    \end{align*}
     since by switching $e_a$ with its companion, we can change the parity of $S_a$ while $e_a$ remains even at last.
     
     To cover for the deficit caused by $e_a$, $a$ increases by at most $\frac{\beta}{4} \cdot \tau$ in expectation. This is because by \cref{lem:min-gain-top-cut}, $a$ is at most responsible for half of $\tilde{p}(\delta^\rightarrow(S_a))$. 
    
    Otherwise, if $e_a$ is a top edge, by  \cref{lem:three-edge} and our construction of the O-Join solution,
    \begin{align*}
        \tilde{p}(e_a) = \alpha \quad \text{and} \quad \mathbb{P}[S_a\text{ is odd} \mid a \text{ is reduced}] \leq \frac{5}{8}
    \end{align*}
     Consequently, $a$ increases by at most $\frac{5\alpha}{16} \cdot \tau \leq \frac{\beta}{4} \cdot \tau$. Therefore, for any edge $a$ such that $S_a$ has an edge that goes higher, $\mathbb{E}[y_a] \leq \frac{1}{4} - (\alpha - \frac{\beta}{4}) \cdot \tau$.

    \begin{itemize}
        \item \textbf{Case 1:} Two or more of $S_a,S_b,S_c, S_d$ have an edge that goes higher. As $\alpha - \frac{\beta}{4} > 0$, 
        \begin{align*}
        \mathbb{E}[y(\delta(S)] \leq 1 - 2 \cdot \left(\alpha - \frac{\beta}{4}\right) \cdot \tau
        \end{align*}

        \item \textbf{Case 2:} Only $S_a$ has an edge that goes higher (see \cref{fig:top-no-edge}). As above, $\mathbb{E}[y_a] \leq \frac{1}{4} - (\alpha - \frac{\beta}{4}) \cdot \tau$. For the remaining three edges, by \cref{lem:4/27-1/27}, $\tilde{p}(\{b,c,d\}) \geq \frac{1}{27}$. Therefore, 
        \begin{align*}
            \mathbb{E}[y(\delta(S)] \leq 1 - \left(\alpha + \frac{1}{27} - \frac{\beta}{4}\right) \cdot \tau
        \end{align*}

        \item \textbf{Case 3:} None of $S_a, S_b, S_c, S_d$ has an edge that goes higher. Since $\alpha \leq \frac{4}{27}$, by Lemma \ref{lem:4/27-1/27}, 
        \begin{align*}
            \mathbb{E}[y(\delta(S)] \leq 1 - \alpha \cdot \tau
        \end{align*}
    \end{itemize}

    Hence, for every case, $\mathbb{E}[y(\delta(S)] \leq 1 - \left(\alpha + \frac{1}{27}-\frac{\beta}{4}\right) \cdot \tau$
\end{proof}

\begin{figure}[!htpb]
        \centering
        \begin{tikzpicture}
            \Vertex[label = $S$, x = 0, y = 0, color = cyan!35, style = {draw = blue!50!cyan!60, line width = 0.7pt}]{S}
            \Vertex[label = $S_a$, style = {draw = blue!50!cyan!60, line width = 0.7pt}, x = -1, y = 1, color = cyan!35]{A}
            \Vertex[label = $S_b$, x = 1, y = 1, color = cyan!28, style = {draw = blue!50!cyan!60, line width = 0.7pt}]{B}
            \Vertex[label = $S_c$, x = 1, y = -1, color = cyan!28, style = {draw = blue!50!cyan!60, line width = 0.7pt}]{C}
            \Vertex[label = $S_d$, x = -1, y = -1, color = cyan!28, style = {draw = blue!50!cyan!60, line width = 0.7pt}]{D}
            
            \Vertex[x=-1.55, y = 3,Pseudo, color = cyan!28, style = {draw = blue!50!cyan!60, line width = 0.7pt}]{Pa}
            
            \Edge[label = $a$, position = {above right},lw = 1 pt](S)(A) 
            \Edge[label = $b$, position = {above left},lw = 1 pt](S)(B) 
            \Edge[label = $c$, position = {below left},lw = 1 pt](S)(C)
            \Edge[label = $d$, position = {below right},lw = 1 pt](S)(D)
            \Edge[color=magenta!80!purple!70, position = {right}, label = $e$,lw = 1 pt](A)(Pa)
            
            \draw [black,line width=1.2pt, dashed] (0,0) ellipse (1.95 and 1.95);
            \node [draw=none] at (0,-2.2) () {$S'$};
        \end{tikzpicture}
        \caption{Illustration of the worst case in \cref{lem:top-no-edge}. $S'$ is a degree cut and $e$ is a bottom edge.}
        \label{fig:top-no-edge}
\end{figure}
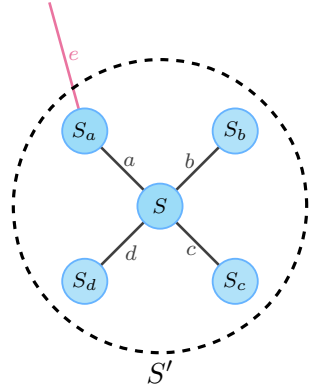

\begin{lemma}
    \label{lem:max-inc-bottom-edge}
    Let $e$ be a bottom edge inside a cycle cut $S$ with child cuts $S_1,\dots,S_k$ with two edges between each $S_i$ and $S_{i+1}$ for each $1 \le i \le k-1$. Also, let $S'$ be the parent of $S$ in the cut hierarchy. Then, we will bound the amount $e$ increases to cover the deficit on its last cuts caused by decreasing edges in $\delta(S)$ in three cases:
    \begin{enumerate}
        \item If $S'$ is a degree cut with no edge going higher, $e$ increases by at most $2\alpha \cdot \tau$.

        \item If $S'$ is a degree cut with an edge going higher, $e$ increases by at most $\left(\frac{3\alpha}{2} + \frac{\beta}{4}\right) \cdot \tau$.

        \item If $S'$ is a cycle cut, $e$ increases by at most $\left(\frac{3\beta}{4}\right) \cdot \tau$.
    \end{enumerate}
    By \cref{cor:degree-types,cor:cycle-types}, the structure of $S'$ falls into one of these three categories.
\end{lemma}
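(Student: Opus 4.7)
The plan is to analyze $\mathrm{inc}(e) = \max\bigl(r_{L_1}(e)\,\Delta(L_1)\,\mathbb{I}[L_1\text{ odd}],\ r_{L_2}(e)\,\Delta(L_2)\,\mathbb{I}[L_2\text{ odd}]\bigr)$, where $L_1 = S_1 \cup \cdots \cup S_i$ and $L_2 = S_{i+1} \cup \cdots \cup S_k$ are the last cuts of $e$. The boundary $\delta(L_j)$ consists of the two edges of $\delta(S)$ on the corresponding side (call them $\{a_1,a_2\}$ for $L_1$ and $\{b_1,b_2\}$ for $L_2$) together with $e$ and its companion $e'$, and the latter two are the only edges of $\delta^{\rightarrow}(L_j)$. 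By \cref{lem:bottom-edge-even} and the definition $\tilde{p}_e = \min(\beta, p_e)$, both $e$ and $e'$ have $\tilde{p} = \beta$, so $r_{L_1}(e) = r_{L_2}(e) = \tfrac{1}{2}$.

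The second key observation is that when $L_j$ is odd, neither $e$ nor $e'$ can have been reduced in Step 2, since $L_j$ is a last cut of both and their being even at last requires $L_j$ to be even. Thus $\Delta(L_j)\cdot\mathbb{I}[L_j\text{ odd}]$ equals $\tau$ times the number of reductions among $\{a_1,a_2\}$ or $\{b_1,b_2\}$. Applying $\max \le \sum$ and linearity of expectation, together with $\mathbb{P}[f\text{ reduced}] \le \tilde{p}_f$, gives the preliminary estimate $\mathbb{E}[\mathrm{inc}(e)] \le \tfrac{\tau}{2}\,\tilde{p}(\delta(S))$. Combining the general facts that $|\delta(X)\cap\delta(Y)| \le 2$ for distinct tight sets and that $|\delta(X)\cap\delta(Y)| = 2$ for $X \subset Y$ forces $Y$ to be a cycle cut, together with \cref{cor:cycle-types} applied to $S$, in Cases 1 and 2 (where $S'$ is a degree cut) one concludes $|\delta^{\uparrow}(S)| = 0$; hence every edge of $\delta(S)$ has $S_f = S'$ and is a top edge with $\tilde{p}_f \le \alpha$. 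This yields $\mathbb{E}[\mathrm{inc}(e)] \le 2\alpha\tau$, handling Case 1 directly and Case 2 since $2\alpha \le \tfrac{3\alpha}{2} + \tfrac{\beta}{4}$ for the chosen values of $\alpha, \beta$.

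Case 3, where $S'$ is a cycle cut, requires more delicate structural analysis. Here $S$ sits as a child on the cycle of $S'$; by \cref{cor:cycle-types} applied to $S$, either $|\delta^{\uparrow}(S)| = 0$ (both cycle neighbors of $S$ in $S'$ are siblings) or $|\delta^{\uparrow}(S)| = 2$ (one neighbor is a sibling and the other is the outside vertex $w' = V \setminus S'$). In the first subcase, $\{a_1,a_2\}$ and $\{b_1,b_2\}$ are both companion pairs of $S'$, so $\{a_1,a_2\}_T = \{b_1,b_2\}_T = 1$ deterministically, forcing $L_1$ and $L_2$ to always be even and $\mathrm{inc}(e) = 0$. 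In the second subcase, one of the two pairs is still a companion pair of $S'$ (killing the corresponding last cut), while the other consists of cycle partners of $S'$ that must be tracked up the hierarchy. The main obstacle is this last subcase: inductively applying \cref{cor:cycle-types} to the ancestors, the cycle-partner pair either becomes a companion pair at some ancestor cycle cut (so $\{b_1,b_2\}_T = 1$ and $\mathrm{inc}(e) = 0$), or it decomposes into edges of two distinct companion pairs inside $G_{S''}$, where a careful coupling of the reductions with the parity of $L_2$---using $\max(A,B) = A + B - \min(A,B)$ together with negative association---is required to establish the $\tfrac{3\beta}{4}\tau$ bound.
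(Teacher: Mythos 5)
Your setup is right and matches the paper's accounting: with $r_{L_j}(e)=\tfrac12$, the observation that $e$ and its companion cannot be reduced when a last cut is odd, and $\max\le\sum$, you correctly get $\mathbb{E}[\mathrm{inc}(e)]\le\tfrac{\tau}{2}\sum_{f\in\delta(S)}\Pr[f\text{ reduced}\wedge\text{its side odd}]$, and Case 1 follows as in the paper. The gap is in Cases 2 and 3. For Case 2 you deduce from \cref{cor:cycle-types} that a cycle cut with a degree-cut parent has $\delta^{\uparrow}(S)=\emptyset$, so all four edges of $\delta(S)$ are top edges; but that corollary (read the way the paper actually uses it, namely about the children of a cycle cut sitting on the path) does not rule out a cycle cut whose parent is a degree cut having exactly \emph{one} edge going higher --- e.g.\ $G_{S'}$ can be a $K_5$ in which the contracted vertex $S$ sends one edge to $V\setminus S'$ --- and this is precisely the configuration Case~2 is written for. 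That edge $a\in\delta^{\uparrow}(S)$ can be a bottom edge with $\tilde p_a=\beta$, and then your estimate only gives $\bigl(\tfrac{3\alpha}{2}+\tfrac{\beta}{2}\bigr)\tau>\bigl(\tfrac{3\alpha}{2}+\tfrac{\beta}{4}\bigr)\tau$. The missing ingredient, which the paper supplies, is that $\Pr[\text{the relevant last cut of }e\text{ is odd}\mid a\text{ reduced}]=\tfrac12$, because one can swap $a$ with its companion to flip that parity while $a$ stays even at last; this is where the $\tfrac{\beta}{4}$ comes from, and your argument has no substitute for it.

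Case 3 is also not established. Your subcase analysis assumes that when a cycle neighbor of $S$ in $S'$ is a sibling, the corresponding doubled pair coincides with one of the two side pairs $\delta(S)\cap\delta(S_1)$ or $\delta(S)\cap\delta(S_k)$, so that its tree count is deterministically $1$ and the corresponding last cut is always even. The opposite is true: if both companions went to the same end child $S_1$, then $S_1\cup C$ would be a tight set crossing $S$, contradicting that $S$ is critical; so a companion pair of $S'$ inside $\delta(S)$ must straddle $\delta(L)$ and $\delta(R)$ (this is what the paper extracts from \cref{fact:cons-cycle-cuts}). Consequently neither last cut has deterministic parity, and the paper's actual mechanism --- the events $\{a\text{ reduced}\wedge L\text{ odd}\}$ and $\{b\text{ reduced}\wedge R\text{ odd}\}$ coincide for the straddling companions, so the max lets you charge that pair only once, while each remaining edge is charged at most $\tfrac{\beta}{4}\tau$ via the same companion-swap parity argument --- is what yields $\tfrac{3\beta}{4}\tau$. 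Finally, in your hardest subcase you explicitly state that ``a careful coupling \dots is required'' without carrying it out, so even on your own terms the proof of Case 3 is incomplete.
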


\begin{figure}[!htbp]
    \centering
    \begin{tikzpicture}[baseline=(current bounding box.center)]
            \Vertex[label = $S_1$, x = 0, y = 0, color = cyan!35, style = {draw = blue!50!cyan!60, line width = 0.7pt}]{A}
            \Vertex[label = $S_2$,style = {draw = blue!50!cyan!60, line width = 0.7pt}, x = 1.5, y = 0, color = cyan!35]{B}
            \Vertex[label = $S_3$,x = 3, y = 0, color = cyan!28, style = {draw = blue!50!cyan!60, line width = 0.7pt}]{C}
            \Vertex[label = $S_4$,x = 4.5, y = 0, color = cyan!28, style = {draw = blue!50!cyan!60, line width = 0.7pt}]{D}

            \Vertex[x = -0.02, y = 2.2,Pseudo]{P1}
            \Vertex[x=4.5, y = 2.2,Pseudo]{P2}
            \Vertex[x=4.8, y= 1.4,Pseudo]{P3}
            \Vertex[x=-1.1, y = -1.4,Pseudo]{P4}

            \Vertex[x=-2.4, y=-0.4,Pseudo]{P11}
            \Vertex[x=6.9, y=-0.4,Pseudo]{P12}

            \Edge[position = below, bend = 25, lw = 1 pt](B)(A)
            \Edge[position = above, bend = 25, lw = 1 pt](A)(B)
            \Edge[label = $f$, position = below, bend = 25, lw = 1 pt](C)(B)
            \Edge[label = $e$, position = above, bend = 25, lw = 1 pt, color = green!40!teal!90](B)(C)
            \Edge[position = above, bend = 25, lw = 1 pt](C)(D)
            \Edge[position = below, bend = 25, lw = 1 pt](D)(C)

            \Edge[color=orange!80!red!70, distance = 0.7,lw = 1 pt](A)(P11)

            \Edge[color=orange!80!red!70, lw = 1 pt](D)(P12)

            \Edge[label = $a$, color=magenta!80!purple!70, position = {left},lw = 1 pt, bend = 10](A)(P1)
            \Edge[label = $b$, color=magenta!80!purple!70, position = {right},lw = 1 pt, bend = -10](D)(P2)

            \draw [black,line width=1.2pt, dashed] (2.2,0) ellipse (3 and 1.15);
            \node [draw=none] at (2.2,-1.5) () {$S$};
            \fill[draw = blue!50!cyan!60, color = cyan!40, line width = 1.2pt, opacity =0.17] (2.24,2.48) ellipse (2.9 and 0.95);

            \draw [black,line width=1.2pt, dashed] (2.15,1.1) ellipse (3.8 and 3.4);
            \node [draw=none] at (2.2,-2.7) () {$S'$};

    \end{tikzpicture}
    \caption{An instance satisfying case 3 in Lemma \ref{lem:max-inc-bottom-edge}. Here, $L = S_1 \cup S_2$ and $R = S_3 \cup S_4$.}
\end{figure}
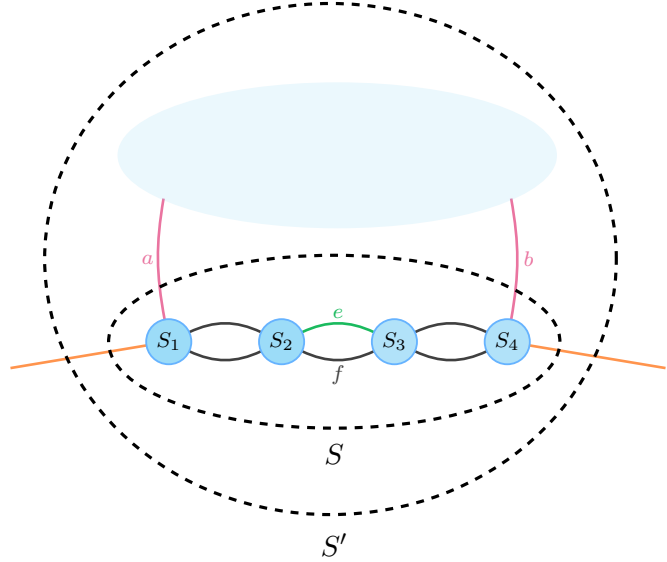

\begin{proof}

Let $f$ be the companion of $e$. $e$ and $f$ have the same set of last cuts, so any increase is divided equally on $e$ and $f$. Now:
\begin{enumerate}
    \item Since every top edge can be reduced with probability at most $\alpha$, $y_e$ can increase by at most $4 \cdot \frac{\alpha}{2} \cdot \tau= 2\alpha \cdot \tau$ to cover the deficit on its last cuts.

    \item Since the parent of $S$ is a degree cut, the edges in $\delta^\rightarrow(S)$ are top edges.  Similar to the previous case, $y_e$ increase by at most $\frac{3}{2} \cdot \alpha \cdot \tau$ because of these top edges.
    
    Let $a$ be the edge in $\delta^\uparrow(S)$ and assume $a \in \delta(S_1)$. If $a$ is a top edge, we can obtain the same bound of $2\alpha \cdot \tau$ we saw in the previous case. Otherwise, assume $a$ is a bottom edge, 
    \begin{align*}
         \tilde{p}(a) = \beta \quad \text{and} \quad \mathbb{P}[S_1\text{ is odd} \mid a \text{ is reduced}] = \frac{1}{2}
    \end{align*}
    as one can swap $a$ and its companion to change the parity of $S_1$ while retaining the even at last property for $a$. Therefore, $y_e$ increase by at most $\frac{\beta}{2} \cdot \frac{1}{2} \cdot \tau$. By the values set for $\alpha$ and $\beta$, $y_e$ can increase by at most $\left(\frac{3 \alpha}{2} \cdot \alpha + \frac{\beta}{4}\right) \cdot \tau$.

    \item The same bound proved for the charge on $e$ because of a bottom edge $a \in \delta(S)$ holds here. Let $L$ and $R$ be the last cuts of $e$, where $L$ includes $S_1$ and $R$ includes $S_k$.
    The key observation is that since $S'$ is a cycle cut, at least one pair of edges in $\delta(S)$ are companions, let $a,b$ denote those edges. By \cref{fact:cons-cycle-cuts}, one of these companions should be in $\delta(L)$ and the other in $\delta(R)$, WLOG, assume $a \in \delta(L)$ and $b \in \delta(R)$. Since companions are always simultaneously reduced, their deficit at the last cuts of $e$ is the same. Moreover, when these companions are reduced, they must be even at last. Therefore, $S$ must be even. This implies that the parity of $S_1, S_k$ is the same. It's straightforward to see the parity of $S_1$ and $S_k$ is the equal to the parity of $L$ and $R$. This implies that the events $\{ a \text{ is reduced} \, \land \,L \text{ is odd} \}$ and  $\{ b \text{ is reduced} \, \land \, R \text{ is odd} \}$ happen simultaneously.
    As increasing $e$ covers both last cuts simultaneously, we only need to account for only one of the companions.
    Each of the remaining two edges in $\delta(S)$ increases $e$ by at most $\frac{\beta}{2}$ if it's a bottom edge and $\alpha$ otherwise. Since $\frac{\beta}{2} \geq \alpha$, the former case is worse.
    This makes the total increase for $e$ at most than $\frac{3 \beta}{4} \cdot \tau$ in expectation.
\end{enumerate}
\end{proof}
  
\begin{lemma}
    \label{lem:bot-no-edge}
    Let $S$ be a bottom cut with no edge that goes higher, then, $$\mathbb{E}[y(\delta(S)] \leq 1 - (3\beta - 6\alpha) \cdot \tau$$
\end{lemma}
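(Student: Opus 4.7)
The plan is to observe that all four edges in $\delta(S)$ are bottom edges lying inside a single cycle cut, and then decompose $\mathbb{E}[y(\delta(S))]$ into its initial value $1$, the expected Step~2 reduction, and the expected Step~3 increase, bounding each separately.

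First I would unpack the structure. Since $S$ is a bottom cut, its parent $C$ in the hierarchy is a cycle cut, so $G_C$ is a cycle on $V \setminus C$ together with the children $S_1,\ldots,S_k$ of $C$, with two parallel edges between consecutive vertices. Since no edge of $\delta(S)$ goes higher, \cref{cor:cycle-types} forces $S = S_i$ for some $2 \le i \le k-1$, so $\delta(S)$ consists of two edges to $S_{i-1}$ and two edges to $S_{i+1}$, all four being bottom edges inside $C$. By \cref{lem:bottom-edge-even} each such edge $e$ satisfies $p_e \ge 1/4 = \beta$, so $\tilde p_e = \beta$; the expected Step~2 reduction is therefore $\tau\beta$ per edge, totalling $4\beta\tau$.

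For Step~3, I would apply \cref{lem:max-inc-bottom-edge} to each $e \in \delta(S)$. Because all four edges live inside the same cycle cut $C$, the same one of the lemma's three cases (determined by the parent of $C$ in the hierarchy) governs all of them uniformly. With $\alpha = 0.1129032$ and $\beta = 1/4$, the three per-edge bounds $2\alpha\tau$, $(3\alpha/2 + \beta/4)\tau$, and $(3\beta/4)\tau$ evaluate to approximately $0.226\tau$, $0.232\tau$, and $0.188\tau$, so the worst case is $(3\alpha/2 + \beta/4)\tau$, yielding a total expected increase at most $4(3\alpha/2 + \beta/4)\tau = (6\alpha + \beta)\tau$. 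Combining,
\[
\mathbb{E}[y(\delta(S))] \;\le\; 1 - 4\beta\tau + (6\alpha + \beta)\tau \;=\; 1 - (3\beta - 6\alpha)\tau,
\]
which is the desired bound, and is in fact tight in the worst case.

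There is no real obstacle here; this is one of the simpler cases. The only subtlety is confirming that all four edges of $\delta(S)$ lie inside the common cycle cut $C$, so that a single case of \cref{lem:max-inc-bottom-edge} applies uniformly to all of them. The companion pairs inside $\delta(S)$ (one pair between $S_{i-1}$ and $S_i$, one between $S_i$ and $S_{i+1}$) correlate the Bernoullis $B_e$, but since the target bound is on expectations alone, linearity absorbs this correlation and no separate treatment is required.
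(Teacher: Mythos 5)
Your proof is correct and takes essentially the same route as the paper: it uses \cref{lem:bottom-edge-even} to get $\tilde{p}_e = \beta$ for each of the four bottom edges in $\delta(S)$ (expected Step-2 decrease $\beta\tau$ per edge) and \cref{lem:max-inc-bottom-edge} to bound each edge's Step-3 increase by $\max\bigl(2\alpha,\tfrac{3\alpha}{2}+\tfrac{\beta}{4},\tfrac{3\beta}{4}\bigr)\cdot\tau = \bigl(\tfrac{3\alpha}{2}+\tfrac{\beta}{4}\bigr)\cdot\tau$, which sums to the stated $1-(3\beta-6\alpha)\tau$. The only cosmetic differences are that the paper just takes the per-edge maximum over the three cases rather than arguing one case applies uniformly, and your appeal to \cref{cor:cycle-types} to place $S$ as an internal child of the cycle cut is a slight mis-citation (that corollary concerns cycle cuts themselves), though the structural claim follows immediately from the cycle-cut structure and is not needed for the bound.
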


\begin{proof}
    By \cref{lem:bottom-edge-even}, for every bottom edge $e$, $\tilde{p}(e) = \frac{1}{4} = \beta$. Additionally, by \cref{lem:max-inc-bottom-edge} for any $e \in \delta(S)$, 
    \begin{align*}
        \mathbb{E}[y(e)] \leq \frac{1}{4} - \left(\beta - \max\Big(2\alpha, \frac{3\alpha}{2} + \frac{\beta}{4}, \frac{3 \beta}{4}\Big)\right) 
        \cdot \tau = \frac{1}{4} - \left(\frac{3\beta}{4} - \frac{3\alpha}{2}\right) \cdot \tau.
    \end{align*}
    Therefore,
    \begin{align*}
        \mathbb{E}[y(\delta(S)] \leq  1 - \left(3\beta - 6 \alpha\right)\cdot \tau
    \end{align*}
\end{proof}

\begin{figure}[!htbp]
        \centering
        \begin{tikzpicture}[scale=0.8]
            \Vertex[label = $S$, x = 2, y = 0, color = cyan!35, style = {draw = blue!50!cyan!60, line width = 0.7pt}]{S}
            \Vertex[style = {draw = blue!50!cyan!60, line width = 0.7pt}, x = 0, y = 0, color = cyan!35]{A}
            \Vertex[x = 4, y = 0, color = cyan!28, style = {draw = blue!50!cyan!60, line width = 0.7pt}]{B}
            
            \Vertex[x = -0.5, y = 2.7,Pseudo]{P1}
            \Vertex[x=4.8, y = -1.4,Pseudo]{P2}
            \Vertex[x=4.8, y= 1.4,Pseudo]{P3}
            \Vertex[x=-1.1, y = -1.4,Pseudo]{P4}
            
            \Edge[label = $a$, position = below, bend = 30, lw = 1 pt](S)(A)
            \Edge[label = $b$, position = above, bend = 30, lw = 1 pt](A)(S)
            \Edge[label = $c$, position = above, bend = 30, lw = 1 pt](S)(B)
            \Edge[label = $d$, position = below, bend = 30, lw = 1 pt](B)(S)
            \Edge[color=magenta!80!purple!70, position = {above right}, label = $e$, distance = 0.74,lw = 1 pt](A)(P1)
            \Edge[color=orange!80!red!70, position = {below right}, label = $f$, distance = 0.7,lw = 1 pt](A)(P4)
            \Edge[color=orange!80!red!70, position = {below left}, label = $g$, distance = 0.7,lw = 1 pt](B)(P2)
            \Edge[color=orange!80!red!70, position = {above left}, label = $h$, distance = 0.7,lw = 1 pt](B)(P3)
            
            \draw [black,line width=1.2pt, dashed] (2,0) ellipse (3.15 and 1.15);
            \node [draw=none] at (2,-1.5) () {$S'$};
            \draw [black,line width=1.2pt, dashed] (2.05,-0.1) ellipse (3.95 and 2.1);
        \end{tikzpicture}
        \caption{Illustration of the worst case in Lemma \ref{lem:bot-no-edge}. The edges $f,h,g$ are top edges while $e$ is a bottom edge.}
        \label{fig:bot-no-edge}
\end{figure}
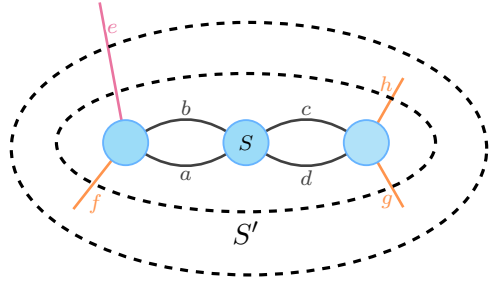

Now that we have dealt with min cuts having no edges going higher, we now turn to those with at least one such edge. The structure of these cuts is known from \cref{cor:degree-types,cor:cycle-types}.

\begin{lemma}
    \label{lem:ratio}
    Let $S$ be a top cut with an edge $e$ that goes higher. Let $a, b, c$ be the edges in $\delta^{\rightarrow}(S)$, and let $S_a$, $S_b$, and $S_c$ be the other last cuts of $a$, $b$, and $c$, respectively. If each of $S_a$, $S_b$, and $S_c$ have an edge that goes higher, then $r_{S_a}(a) \leq \frac{1}{3}$. More generally, $\tilde{p}(\delta^{\rightarrow}(S_a)) = 2\alpha + \tilde{p}_a$
\end{lemma}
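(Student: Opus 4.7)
My plan is to reduce the stated equality $\tilde{p}(\delta^{\rightarrow}(S_a)) = 2\alpha + \tilde{p}_a$ to proving $p_{b'}, p_{c'} \geq \alpha$ for the two edges $b', c' \in \delta^{\rightarrow}(S_a) \setminus \{a\}$. Since $\tilde{p}$ is capped at $\alpha$ for top edges, this gives $\tilde{p}_{b'} = \tilde{p}_{c'} = \alpha$, hence $\tilde{p}(\delta^{\rightarrow}(S_a)) = \tilde{p}_a + 2\alpha$ and $r_{S_a}(a) = \tilde{p}_a/(\tilde{p}_a + 2\alpha) \leq \alpha/(3\alpha) = 1/3$ using $\tilde{p}_a \leq \alpha$.

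I would first let $S'$ be the common parent of $S, S_a, S_b, S_c$ in the hierarchy; $S'$ is a degree cut because $S$ is a top cut with an edge going higher. Let $k$ denote the number of children of $S'$. By \cref{fact:two-higher-cycle}, each child of $S'$ has at most one edge going higher; since $S, S_a, S_b, S_c$ each have an edge going higher by the hypothesis of the lemma and $|\delta(S')| = 4$, these four siblings contribute exactly one edge each to $\delta(S')$ and any further child contributes zero. By definition, $S$'s three non-higher edges $a, b, c$ go to $S_a, S_b, S_c$ respectively. A standard cut check shows that no two siblings share two parallel edges in $G_{S'}$ (otherwise they form a proper min cut, contradicting the degree-cut property), and using this I would also rule out $k = 5$: the lone extra child would need four internal edges but can reach only three other siblings.

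For each $e \in \{b', c'\}$, let $S_e$ be its other last cut. By no-parallels $S_e \neq S$, so $S_e$ is a sibling of $S_a$. If $S_e$ has no edge going higher then \cref{lem:13/54-edge} gives $p_e \geq 13/54 > \alpha$. Otherwise $S_e \in \{S_b, S_c\}$, and I would analyze the ``marked'' set $M = \{v, S, S_a, S_b, S_c\} \subseteq V(G_{S'})$, where $v$ is the contracted vertex for $V \setminus S'$. Counting the four $v$-to-sibling edges, the three edges $a, b, c$ at $S$, and the edge $e$ itself, $M$ has at least eight internal edges, so $|\delta_{G_{S'}}(M)| \leq 5 \cdot 4 - 2 \cdot 8 = 4$. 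Combined with 4-edge-connectivity, $M$ is a min cut of $G_{S'}$, and it is proper whenever $|V(G_{S'})| \geq 7$, which would contradict $S'$ being a degree cut. Hence $k = 4$, and a degree count with no parallels forces $G_{S'} = K_5$; then \cref{lem:K_5-edge-even} applies to $e \in G[S']$ and gives $p_e \geq 1/4 > \alpha$.

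In both sub-cases $p_e \geq \alpha$, so $\tilde{p}_e = \alpha$ and the proof is complete. The main obstacle is the second sub-case: I must show that an edge between two marked siblings at $S'$ forces the structure above $S'$ to collapse into $G_{S'} = K_5$. The key structural observation is that the five marked vertices otherwise constitute a proper min cut of $G_{S'}$, contradicting the absence of proper min cuts inside a degree cut; this is exactly what lets me invoke the strong $K_5$-specific bound from \cref{lem:K_5-edge-even} in the hard case.
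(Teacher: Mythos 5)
Your proposal is correct and follows essentially the same route as the paper: the same reduction to showing the two other edges of $\delta^{\rightarrow}(S_a)$ have $\tilde{p}=\alpha$, the same dichotomy on whether the other last cut has an edge going higher, \cref{lem:13/54-edge} in the easy case, and \cref{lem:K_5-edge-even} after forcing $G_{S'}=K_5$ in the hard case. The only difference is that where the paper simply observes that $S$, $S_a$, and that last cut form a triangle and invokes \cref{lem:K5-triangle}, you re-derive the $K_5$ structure by hand (no parallel siblings, the pigeonhole exclusion of a fifth child, and the proper-min-cut count on $\{v,S,S_a,S_b,S_c\}$), which is a correct but more laborious inline proof of that same structural lemma.
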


\begin{proof}
    Consider the min-cut $S_a$, and let $g \in \delta^{\rightarrow}(S_a) \setminus \{a\}$. Let $R$ be the other last cuts of $g$. Suppose $R$ has an edge that goes higher. Since at most four (contracted) min-cuts can have such an edge inside a degree cut, $R$ must be either $S_b$ or $S_c$.

    In this case, the cuts $S$, $S_a$, and $R$ form a triangle. By \cref{lem:K5-triangle}, $G_S$ forms a $K_5$, and \cref{lem:K_5-edge-even} implies that $g$ (and every edge in $\delta^{\rightarrow}(S_a)$) is even at last with probability $\frac{1}{4} \geq \alpha$. Therefore, $r_{S_a}(a) = \frac{1}{3}$.

    Now, suppose that the other last cut of every edge in $\delta^{\rightarrow}(S_a) \setminus \{a\}$ does not have an edge that goes higher. In this case, by ~\cref{lem:13/54-edge}, these edges are even at last with probability at least $\frac{13}{54} \geq \alpha$. Thus concluding $r_{S_a}(a) \leq \frac{1}{3}$. Moreover, in both cases it is clear that $\tilde{p}(\delta^{\rightarrow}(S_a)) = 2\alpha + \tilde{p}_a$.
\end{proof}

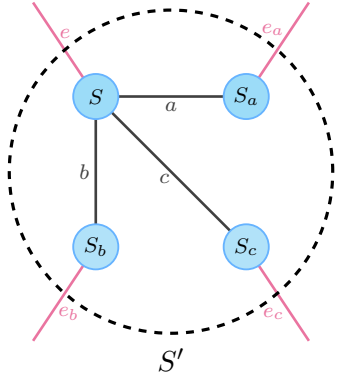
\begin{figure}[!htbp]
    \centering
    \begin{tikzpicture}
    \Vertex[label = $S$, x = 0, y = 2, color = cyan!35, style = {draw = blue!50!cyan!60, line width = 0.7pt}]{S}
    \Vertex[label = $S_a$, , style = {draw = blue!50!cyan!60, line width = 0.7pt}, x = 2, y = 2, color = cyan!35]{A}
    \Vertex[label = $S_b$, x = 0, y = 0, color = cyan!28, style = {draw = blue!50!cyan!60, line width = 0.7pt}]{B}
    \Vertex[label = $S_c$, x = 2, y = 0, color = cyan!28, style = {draw = blue!50!cyan!60, line width = 0.7pt}]{C}

    \Vertex[x=-1, y = 3.5,Pseudo]{P}
    \Vertex[x = 3, y = 3.5,Pseudo]{P1}
    \Vertex[x=-1, y = -1.5,Pseudo]{P2}
    \Vertex[x=3, y=-1.5,Pseudo]{P3}

    \Edge[label = $a$, position = below, lw = 1 pt](S)(A) 
    \Edge[label = $b$, position = left, lw = 1 pt](S)(B) 
    \Edge[label = $c$, position = {below left}, lw = 1 pt](S)(C)

    \Edge[color=magenta!80!purple!70, position = {above right}, label = $e$, lw = 1 pt](S)(P)
    
    \Edge[color=magenta!80!purple!70, position = {above left}, label = $e_a$, lw = 1 pt](A)(P1)
    
    \Edge[color=magenta!80!purple!70, position = {below right}, label = $e_b$, lw = 1 pt](B)(P2)
    
    \Edge[color=magenta!80!purple!70, position = {below left}, label = $e_{c}$, lw = 1 pt](C)(P3)

    \draw [black,line width=1.2pt, dashed] (1,1) ellipse (2.15 and 2.15);
    \node [draw=none] at (1,-1.5) () {$S'$};
\end{tikzpicture}
    \caption{A min cut satisfying the conditions of Lemma \ref{lem:ratio}}
    \label{fig:ratio}
\end{figure}

We will utilize \cref{lem:ratio} to show $\mathbb{E}[y(\delta(S))]$ decreases meaningfully when $S$ is a top cut with an edge going higher.

\begin{lemma}
    \label{lem:top-edge-higher}
    Let $S$ be a top cut with an edge that goes higher, then, $$\mathbb{E}[y(\delta(S)] \leq 1 - \left(\frac{7\alpha}{24}  +\frac{\beta}{12}\right) \cdot \tau$$
\end{lemma}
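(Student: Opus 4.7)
The plan is to write
\[ \mathbb{E}[y(\delta(S))] = 1 - \tau\,\tilde{p}(\delta(S)) + \mathbb{E}\Bigl[\sum_{e' \in \delta(S)} \mathrm{incr}_{e'}\Bigr], \]
lower-bound $\tilde{p}(\delta(S))$ via \cref{lem:min-gain-top-cut}, and upper-bound the step-3 increases using a case analysis driven by \cref{lem:ratio}.

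The decrease is immediate: \cref{lem:min-gain-top-cut} gives $\tilde{p}(\delta(S)) \geq 2\alpha$, so step~2 contributes at least $2\alpha\tau$ of expected decrease. Let $e$ denote the unique edge in $\delta^{\uparrow}(S)$ and $a,b,c \in \delta^{\rightarrow}(S)$ with other last cuts $S_a, S_b, S_c$. Using $\max(X,Y) \leq X + Y$ in the definition of the step-3 increase together with $\sum_{e' \in \delta^{\rightarrow}(S)} r_S(e') = 1$, I get
\[ \sum_{e' \in \delta(S)} \mathrm{incr}_{e'} \leq \Delta(S) + \sum_{e' \in \{a,b,c\}} r_{S_{e'}}(e')\,\Delta(S_{e'}) + \mathrm{incr}_e. \]
The crucial structural observation I would use throughout is that if a critical cut $T$ is odd then no edge of $\delta^{\rightarrow}(T)$ can be reduced, since a reduced edge is even at last and hence has both last cuts even. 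Thus $\mathbb{E}[\Delta(T)] \leq \tau \sum_{f \in \delta^{\uparrow}(T)} \mathbb{P}[f \text{ reduced and } T \text{ odd}]$, and in particular $\mathbb{E}[\Delta(S)] \leq \tau\,\tilde{p}_e \cdot \mathbb{P}[S \text{ odd} \mid e \text{ reduced}]$.

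Next I would split on whether all three of $S_a, S_b, S_c$ have an edge going higher. In the ``all higher'' case, \cref{lem:ratio} gives $r_{S_{e'}}(e') \leq \tfrac{1}{3}$ for each $e' \in \{a,b,c\}$, and $\mathbb{E}[\Delta(S_{e'})]$ is controlled by the unique edge $e_{e'} \in \delta^{\uparrow}(S_{e'})$, exactly as in the proof of \cref{lem:top-no-edge}: if $e_{e'}$ is a bottom edge, the swap-with-companion argument gives $\mathbb{P}[S_{e'}\text{ odd} \mid e_{e'}\text{ reduced}] = \tfrac{1}{2}$ with $\tilde{p}_{e_{e'}} = \beta$; if $e_{e'}$ is a top edge, \cref{lem:three-edge} gives $\mathbb{P}[S_{e'}\text{ odd} \mid e_{e'}\text{ reduced}] \leq \tfrac{5}{8}$ with $\tilde{p}_{e_{e'}} = \alpha$. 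In the ``not all higher'' case, some $S_{e'}$ has $\delta^{\uparrow}(S_{e'}) = \emptyset$, so $\mathbb{E}[\Delta(S_{e'})] = 0$, and \cref{lem:13/54-edge} forces $\tilde{p}_{e'} = \alpha$, producing slack beyond the baseline $2\alpha$ in $\tilde{p}(\delta(S))$ that covers the missing $\tfrac{1}{3}$ savings from that coordinate.

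The remaining term $\mathbb{E}[\mathrm{incr}_e]$ is handled analogously by applying the same principle to the two last cuts $T_1, T_2$ of $e$ (both strictly above $S$): we have $r_{T_i}(e) \leq 1$, and $\mathbb{E}[\Delta(T_i)]$ is bounded via $\delta^{\uparrow}(T_i)$ with its own top/bottom split. Assembling all the pieces and substituting $\alpha = 0.1129032$, $\beta = \tfrac{1}{4}$, $\tau = \tfrac{1}{12}$ yields the claimed $(\tfrac{7\alpha}{24} + \tfrac{\beta}{12})\tau$ net decrease; intuitively the $\tfrac{7\alpha}{24}$ summand aggregates the three $\tfrac{1}{3}$-factors from \cref{lem:ratio} against the top-edge $\tfrac{5}{8}\alpha$ contribution, while the $\tfrac{\beta}{12}$ summand absorbs the worst-case bottom-edge scenario for $e$. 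The main obstacle is bookkeeping: several bounds are simultaneously near tight, so one must carefully identify which sub-case (top vs.\ bottom for $e$ and each $e_{e'}$, and Case~(i) vs.\ Case~(ii) of the split) attains the worst ratio, and verify that the slack from Case~(ii) really absorbs the non-negligible $\mathrm{incr}_e$ contribution.
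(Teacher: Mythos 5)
Your high-level framework is aligned with the paper's: you decompose $\mathbb{E}[y(\delta(S))]$ into the step-2 decrease $\tau\,\tilde{p}(\delta(S))$ and the step-3 increases, use the structural fact that a reduced edge cannot lie in $\delta^{\rightarrow}(T)$ for an odd cut $T$ to control $\Delta(\cdot)$, and pull in \cref{lem:min-gain-top-cut} and \cref{lem:ratio} for the ratios. However, the blanket bound $\mathrm{incr}_{e'} \le r_S(e')\Delta(S) + r_{S_{e'}}(e')\Delta(S_{e'})$ (i.e.\ $\max(X,Y)\le X+Y$) is not tight enough to yield the claimed constant. The binding configuration is: $e$ is a bottom edge, all of $S_a,S_b,S_c$ have an edge going higher, $e_a$ is the companion of $e$ (so $\tilde{p}_a$ can be $0$), and $e_b,e_c$ are bottom edges. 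With your accounting the worst case over $\tilde{p}_a+\tilde{p}_b+\tilde{p}_c\ge 2\alpha$ is $(\tilde p_a,\tilde p_b,\tilde p_c)=(0,\alpha,\alpha)$, giving
\[
\mathbb{E}[y(\delta(S))]\ \le\ 1-\Bigl(2\alpha+\tfrac{\beta}{2}\Bigr)\tau+\Bigl(\tfrac{3\alpha}{2}+\tfrac{\beta}{4}\Bigr)\tau+2\cdot\tfrac{\beta}{6}\tau\ =\ 1-\Bigl(\tfrac{\alpha}{2}-\tfrac{\beta}{12}\Bigr)\tau\ \approx\ 1-0.036\,\tau,
\]
which is strictly weaker than the target $1-\bigl(\tfrac{7\alpha}{24}+\tfrac{\beta}{12}\bigr)\tau\approx 1-0.054\,\tau$.

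What is missing is the correlation argument the paper uses in this sub-case. When $e$, $e_a$, $e_b$, $e_c$ are all bottom edges in the same cycle cut $S''$, the Bernoullis are coupled ($B_e=B_{e_a}=B_{e_b}=B_{e_c}$) and the edges are even at last simultaneously; hence whenever $e_b$ is reduced and $S_b$ is odd, $e$ is also reduced, and swapping $e$ with its companion $e_a$ forces $S$ to be odd with probability $\tfrac{1}{2}$. In that event the increase on $b$ is dominated by the $\Delta(S)$ branch of the $\max$, so the ``extra'' increase $r_{S_b}(b)\Delta(S_b)$ is not paid. This halves the $\tfrac{\beta}{6}$ terms for $b,c$ and recovers $1-(\tfrac{\alpha}{2}+\tfrac{\beta}{12})\tau$. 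The paper also needs a finer split of the remaining sub-cases (at least one of $e_b,e_c$ a top edge, using $\tfrac{1}{3}\cdot\tfrac{5\alpha}{8}$; both bottom but non-companions, forcing $S''$ to be a bottom cut and hence $\mathbb{E}[\mathrm{incr}_e]\le\tfrac{3\beta}{4}\tau$ via the third case of \cref{lem:max-inc-bottom-edge}) to cover every configuration. Relatedly, your bound $r_{T_i}(e)\le 1$ for the last cuts of $e$ is too weak; the paper instead invokes \cref{lem:max-inc-bottom-edge}, which exploits the companion halving and $r\le\tfrac{1}{2}$ from \cref{lem:min-gain-top-cut} to get usable constants. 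Without these refinements the arithmetic does not close.
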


\begin{proof}
    Let $e$ be the edge in $\delta(S)$ that goes higher and let $a, b, c$ be the edges in $\delta^{\rightarrow}(S)$. Furthermore, by $S'$ we denote the parent of $S$ in the hierarchy of critical cuts and let $S_a, S_b, S_c$ be the other last cut of $a, b$ and $c$ in $S'$ respectively. We first analyze the effects of $y_e$ in $y(\delta(S))$. Let $e$ be a bottom edge, then,
    \begin{align*}
         \tilde{p}(e) = \beta \quad \text{and} \quad \mathbb{P}[S\text{ is odd} \mid e \text{ is reduced}] = \frac{1}{2}
    \end{align*}
    When $e$ decreases and $S$ is odd, we have to increase $a, b, c$ accordingly in step (3) of the construction to satisfy the O-Join constraint on $S$. However, we increase $a, b, c$ exactly as much as $e$ had decreased, therefore, $y(\delta(S))$ remains the same. Meanwhile, when $e$ is reduced and $S$ is even, which happens with probability $\frac{\beta}{2}$, $y(\delta(S))$ is reduced by $\tau$. Moreover, by \cref{lem:max-inc-bottom-edge}, $e$ itself increases by at most $\left(\frac{3\alpha}{2} + \frac{\beta}{4}\right) \cdot \tau$, which implies that $y_e$ increases $\mathbb{E}[y(\delta(S))]$ by at most $\left(\frac{3\alpha}{2} - \frac{\beta}{4}\right) \cdot \tau$.

    Now, assume $e$ is a top edge. Similar to the previous scenario, we don't need to consider the increase in $a, b, c$ because of $e$. Moreover, by \cref{lem:three-edge}, we have $\mathbb{P}[S\text{ is even} 
 \mid e \text{ is reduced}] \geq \frac{3}{8}$. So $e$ can decrease $\mathbb{E}[y(\delta(S))]$ by $\frac{3\tilde{p}_e}{8} \cdot \tau$. Furthermore, since $\frac{5\alpha}{8} < \frac{\beta}{2}$, $y_e$ can at most increase $\mathbb{E}[y(\delta(S))]$ by $\beta \cdot \frac{\tilde{p}_e}{2 \alpha} \cdot \tau$, where we have used \cref{lem:min-gain-top-cut}. It is clear that $y_e$ increases $\mathbb{E}[y(\delta(S))]$ most when $\tilde{p}_e = \alpha$, giving a total increase of $\left(\frac{\beta}{2} - \frac{3\alpha}{8}\right) \cdot \tau$. The former case is worse as $\frac{3\alpha}{2}-\frac{\beta}{4} > \frac{\beta}{2}-\frac{3\alpha}{8}$. We are now ready to analyze the possible cases.

\begin{figure}[!htbp]
    \centering
    \begin{tikzpicture}
    \Vertex[label = $S$, x = 0, y = 2, color = cyan!35, style = {draw = blue!50!cyan!60, line width = 0.7pt}]{S}
    \Vertex[label = $S_b$, , style = {draw = blue!50!cyan!60, line width = 0.7pt}, x = 2, y = 2, color = cyan!35]{B}
    \Vertex[label = $S_a$, x = 0, y = 0, color = cyan!28, style = {draw = blue!50!cyan!60, line width = 0.7pt}]{A}
    \Vertex[label = $S_c$, x = 2, y = 0, color = cyan!28, style = {draw = blue!50!cyan!60, line width = 0.7pt}]{C}

    \Vertex[x=-1, y = 3.5,Pseudo]{P}
    \Vertex[x = 3, y = 3.5,Pseudo]{P2}
    \Vertex[x=-1, y = -1.5,Pseudo]{P1}
    \Vertex[x=3, y=-1.5,Pseudo]{P3}

    \Edge[label = $a$, position = left, lw = 1 pt](S)(A) 
    \Edge[label = $b$, position = below, lw = 1 pt](S)(B) 
    \Edge[label = $c$, position = {below left}, lw = 1 pt](S)(C)

    \Edge[color=magenta!80!purple!70, position = {above right}, label = $e$,  lw = 1 pt](S)(P)
    
    \Edge[color=magenta!80!purple!70, position = {below right}, label = $e_b$, lw = 1 pt](B)(P2)
    
    \Edge[color=magenta!80!purple!70, position = {below left}, label = $e_{c}$, lw = 1 pt](C)(P3)

    \draw [black,line width=1.2pt, dashed] (1,1) ellipse (2.15 and 2.15);
    \node [draw=none] at (1,-1.5) () {$S'$};
\end{tikzpicture}
    \caption{Illustration of \textbf{Case 1} in Lemma \ref{lem:top-edge-higher}.}
    \label{fig:top-edge-higher-1}
\end{figure}
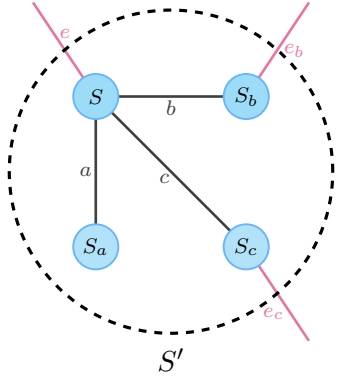

    \begin{itemize}
        \item \textbf{Case 1: At least one of $S_a, S_b, S_c$ does not have an edge going higher.} Let that edge be $a$, now, by \cref{lem:13/54-edge}, $\tilde{p}_a = \alpha$. Let $e_b, e_c$ be the (possible) edges in $S_b, S_c$ going higher (see \cref{fig:top-edge-higher-1}). By  \cref{lem:min-gain-top-cut}, $y_b$ and $y_c$ can increase by at most $\frac{\beta}{2} \cdot \frac{\tilde{p}_b}{2\alpha} \cdot \tau$ and $\frac{\beta}{2} \cdot \frac{\tilde{p}_c}{2\alpha} \cdot \tau$ in step (3) of the construction because of $e_b$ and $e_c$. Moreover, we don't need to consider the increase in $y_a$ as whenever it increases, $e$ has decreased. Thus,
        \begin{align*}
            \mathbb{E}[y(\delta(S))] \leq 1 -\left(\alpha + \tilde{p}_b + \tilde{p}_c + \frac{\beta}{2} - \Big(\frac{3\alpha}{2}+\frac{\beta}{4}\Big) - \frac{\beta}{2} \cdot \frac{\tilde{p}_b}{2\alpha} - \frac{\beta}{2} \cdot \frac{\tilde{p}_c}{2\alpha}\right) \cdot \tau
        \end{align*}
        This term attains its maximum when $\tilde{p}_b + \tilde{p}_c$ is minimized. By Lemma \ref{lem:min-gain-top-cut}, $\tilde{p}_b + \tilde{p}_c \geq \alpha$, therefore, 
        \begin{align*}
            \mathbb{E}[y(\delta(S))] \leq 1 -\left(2\alpha + \frac{\beta}{2} - \Big(\frac{3\alpha}{2}+\frac{\beta}{4}\right) - \frac{\beta}{4}\Big) \cdot \tau = 1 - \frac{\alpha}{2} \cdot \tau
        \end{align*}
        \item \textbf{Case 2: All of $S_a, S_b, S_c$ have an edge going higher and $e$ is a top edge.} Let $e_a, e_b, e_c$ denote these edges going higher. As observed previously, $e$ can increase $\mathbb{E}[y(\delta(S))]$ by at most $\left(\frac{\beta}{2} - \frac{3\alpha}{8}\right) \cdot \tau$. If at least one of $e_a, e_b, e_c$, say $e_a$, is a top edge,
        \begin{align*}
            \mathbb{E}[y(\delta(S))] \leq 1 -\left(\tilde{p}_a + \tilde{p}_b + \tilde{p}_c + \frac{3\alpha}{8} - \frac{\beta}{2} - \frac{5\alpha}{8} \cdot \frac{\tilde{p}_a}{2\alpha + \tilde{p}_a} - \frac{\beta}{2} \cdot \frac{\tilde{p}_b}{2\alpha + \tilde{p}_b} - \frac{\beta}{2} \cdot \frac{\tilde{p}_c}{2\alpha + \tilde{p}_c}\right) \cdot \tau
        \end{align*}
        where we have used \cref{lem:ratio} to bound $r_{S_a}(a), r_{S_b}(b), r_{S_c}(c)$.

        This term is maximized when $\tilde{p}_a = 0$ and $\tilde{p}_b = \tilde{p}_c = \alpha$. Remember that by \cref{lem:min-gain-top-cut}, $\tilde{p}_a + \tilde{p}_b + \tilde{p}_c \geq 2\alpha$, thus,
        \begin{align*}
            \mathbb{E}[y(\delta(S))] \leq 1 - \left(2\alpha + \frac{3\alpha}{8} - \frac{\beta}{2} - \frac{\beta}{6} - \frac{\beta}{6}\right) \cdot \tau = 1 - \left(\frac{19\alpha}{8} - \frac{5\beta}{6}\right) \cdot \tau
        \end{align*}

        Otherwise, if $e_a, e_b, e_c$ are all bottom edges, then $\tilde{p}_a, \tilde{p}_b, \tilde{p}_c = \alpha$ as we can make $S$ even with probability $\frac{13}{27}$ and then fix the parity of the other last cut by swapping bottom edges with their companion. This gives, 
        \begin{align*}
            \mathbb{E}[y(\delta(S))] \leq 1 - \left(3\alpha + \frac{3\alpha}{8} - \frac{\beta}{2} - \frac{\beta}{6} - \frac{\beta}{6} - \frac{\beta}{6}\right) \cdot \tau = 1 - \left(\frac{27\alpha}{8} - \beta \right) \cdot \tau
        \end{align*}

        \item \textbf{Case 3: All of $S_a, S_b, S_c$ have an edge going higher and $e$ is a bottom edge.}

        First, assume the companion of $e$ is not in $\delta(S')$. Now, for all edges $a,b,$ or $c$, one can set the parity of $S_a, S_b$, or $S_c$ to be even with probability $\frac{13}{27}$ and then fix the parity of $S$ by swapping $e$ and it's companion with probability $\frac{1}{2}$. Since $\alpha \leq \frac{13}{54}$, $\tilde{p}_a, \tilde{p}_b, \tilde{p}_c = \alpha$. Putting this together with our previous observations give, 
        \begin{align*}
            \mathbb{E}[y(\delta(S))] \leq 1 - \left(3\alpha + \frac{\beta}{2} - \Big(\frac{3\alpha}{2}+\frac{\beta}{4}\Big) - \frac{\beta}{6} - \frac{\beta}{6} - \frac{\beta}{6}\right) \cdot \tau = 1 - \left(\frac{3\alpha}{2} - \frac{\beta}{4}\right) \cdot \tau
        \end{align*}

        where we have also used \cref{lem:ratio} to bound $r_{S_a}(a),r_{S_b}(b),r_{S_c}(c)$.

        Now, assume $e_a$ is the companion of $e$. By \cref{fact:two-higher-cycle}, $S'$ is a bottom cut, i.e. $S''$, the parent of $S'$ in the hierarchy is a cycle cut.
        
        The aforementioned argument shows $\tilde{p}_b, \tilde{p}_c = \alpha$. However, we can not say anything about $\tilde{p}_a$ as swapping $e$ and $e_a$ changes the parity of $S_a$ and $S$ simultaneously. Note that in the worst case we can assume $\tilde{p}_a = 0$ as $y_a$ reduce $\mathbb{E}[y(\delta(S))]$ by $\left(\tilde{p}_a -\frac{\tilde{p}_a}{2\alpha + \tilde{p}_a} \cdot \frac{\beta}{2}\right) \cdot \tau$, which is minimized at $\tilde{p}_a = 0$.
        
        We consider three possible scenarios.

        \begin{enumerate}[label=(\roman*)]
          \item \textbf{$e_b, e_c$ are also companions in $S''$.}  Assume $e_b$ is reduced and $S_b$ is odd, a key observation is that $e$ is also reduced at the same time. Moreover, by swapping $e$ and $e_a$, one can make the parity of $S$ odd (see \cref{fig:top-edge-higher-i}). In other words, 
            \begin{align*}
                \mathbb{P}[e \text{ is reduced, } S \text{ is odd} \mid e_b \text{ is reduced, } S_b \text{ is odd}] = \frac{1}{2}
            \end{align*}
        The same holds for $e_c$ and $S_c$. This means that half of the increase in $b, c$ does not affect $y(\delta(S))$ as $e$ is reduced at the same time. This gives, 
        \begin{align*}
            \mathbb{E}[y(\delta(S))] \leq 1 - \left(2\alpha+\frac{\beta}{2}-\Big(\frac{3\alpha}{2}+\frac{\beta}{4}\Big)-\frac{\beta}{6}\cdot\frac{1}{2}-\frac{\beta}{6}\cdot\frac{1}{2}\right) \cdot \tau = \left(\frac{\alpha}{2}+\frac{\beta}{12}\right) \cdot \tau
        \end{align*}
        
          \item \textbf{At least one of $e_b, e_c$ is a top edge.} Say $e_c$ is a top edge, by Lemmas \ref{lem:three-edge} and \ref{lem:ratio}, $y_c$ needs to increase by at most $\frac{5 \alpha}{24} \cdot \tau$ in expectation, thus,
          \begin{align*}
              \mathbb{E}[y(\delta(S))] \leq 1 - \left(2\alpha+\frac{\beta}{2}-\Big(\frac{3\alpha}{2}+\frac{\beta}{4}\Big)-\frac{\beta}{6}-\frac{5 \alpha}{24}\right) \cdot \tau = (\frac{7\alpha}{24}+\frac{\beta}{12}) \cdot \tau
          \end{align*}
          \item \textbf{Both $e_b, e_c$ are bottom edges but not companions.} If $S''$ was a top cut, by \cref{cor:degree-types}, at least  one of $e_b, e_c$ would have been a top edge. Therefore, $S''$ must be a bottom cut in this scenario.
          The key observation here is that by Lemma \ref{lem:max-inc-bottom-edge}, $e$ increases by at most $\frac{3\beta}{4}$ here, consequently,
            \begin{align*}
                  \mathbb{E}[y(\delta(S))] \leq 1 - \left(2\alpha + \frac{\beta}{2}-\frac{3\beta}{4}-\frac{\beta}{6}-\frac{\beta}{6}\right) \cdot \tau = \left(2\alpha-\frac{7\beta}{12}\right) \cdot \tau
              \end{align*}
        \end{enumerate}
    \end{itemize}
\end{proof}

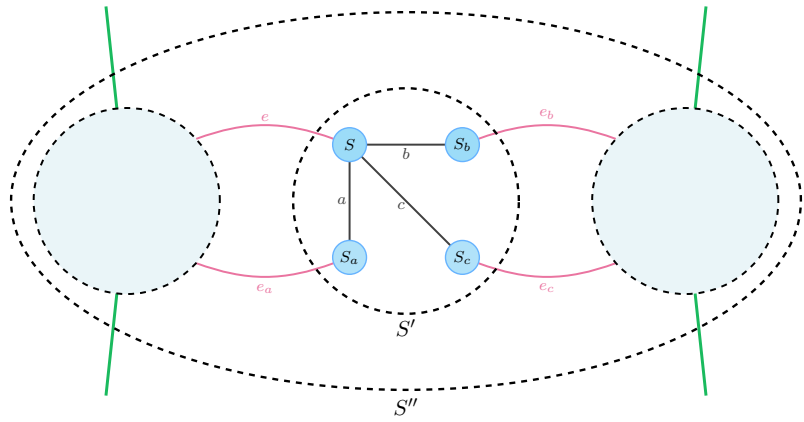
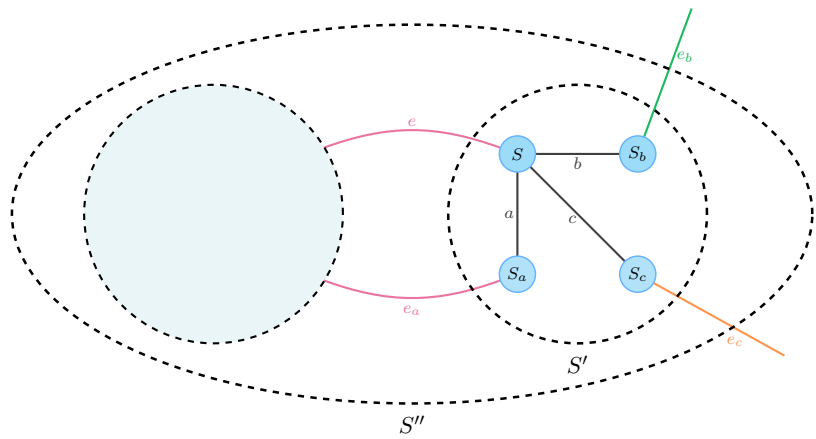
\begin{figure}[!htbp]
    \centering
    
    \begin{subfigure}[t]{\textwidth}
        \centering
        \scalebox{0.75}{
        \begin{tikzpicture}
            \Vertex[label = $S$, x = 0, y = 2, color = cyan!35, style = {draw = blue!50!cyan!60, line width = 0.7pt}]{S}
            \Vertex[label = $S_b$, , style = {draw = blue!50!cyan!60, line width = 0.7pt}, x = 2, y = 2, color = cyan!35]{B}
            \Vertex[label = $S_a$, x = 0, y = 0, color = cyan!28, style = {draw = blue!50!cyan!60, line width = 0.7pt}]{A}
            \Vertex[label = $S_c$, x = 2, y = 0, color = cyan!28, style = {draw = blue!50!cyan!60, line width = 0.7pt}]{C}
            \Vertex[x=-3, y = 2,Pseudo]{P}
            \Vertex[x=-3, y = 0,Pseudo]{P1}
            \Vertex[x = 5, y = 2,Pseudo]{P2}
            \Vertex[x = 5, y=0,Pseudo]{P3}
            \Vertex[x=-3.95, y = 1, size =3.3, style = dashed,opacity =0.25]{left}
            \Vertex[x=5.95, y = 1, size =3.3, style = dashed,opacity =0.25]{right}
            \Vertex[x=-4.35, y = 4.75,Pseudo]{O1}
            \Vertex[x=-4.35, y = -2.75,Pseudo]{O2}
            \Vertex[x = 6.35, y = 4.75,Pseudo]{O3}
            \Vertex[x = 6.35, y=-2.75,Pseudo]{O4}
            \Edge[color=green!40!teal!90](left)(O1)
            \Edge[color=green!40!teal!90](left)(O2)   
            \Edge[color=green!40!teal!90](right)(O3)
            \Edge[color=green!40!teal!90](right)(O4)
            \Edge[label = $a$, position = left, lw = 1 pt](S)(A) 
            \Edge[label = $b$, position = below, lw = 1 pt](S)(B) 
            \Edge[label = $c$, position = {below left}, lw = 1 pt](S)(C)
            \Edge[color=magenta!80!purple!70, position = {above}, label = $e$, lw = 1 pt, bend = -20](S)(P)
            \Edge[color=magenta!80!purple!70, position = {below}, label = $e_a$, lw = 1 pt, bend = 20](A)(P1)
            \Edge[color=magenta!80!purple!70, position = {above}, label = $e_b$, lw = 1 pt, distance = 0.5, bend = 20](B)(P2)
            \Edge[color=magenta!80!purple!70, position = {below}, label = $e_{c}$, lw = 1 pt, distance = 0.5, bend = -20](C)(P3)
            \draw [black,line width=1.2pt, dashed] (1,1) ellipse (2 and 2);
            \draw [black,line width=1.2pt, dashed] (1,1) ellipse (7 and 3.35);
            \node [draw=none] at (1,-1.25) () {$S'$};
            \node [draw=none] at (1,-2.65) () {$S''$};
        \end{tikzpicture}
        }
        \caption{Sub-case (i).}
        \label{fig:top-edge-higher-i}
    \end{subfigure}
    
    \begin{subfigure}[t]{\textwidth}
        \centering
        \scalebox{0.8}{
        \begin{tikzpicture}
            \Vertex[label = $S$, x = 0, y = 2, color = cyan!35, style = {draw = blue!50!cyan!60, line width = 0.7pt}]{S}
            \Vertex[label = $S_b$, , style = {draw = blue!50!cyan!60, line width = 0.7pt}, x = 2, y = 2, color = cyan!35]{B}
            \Vertex[label = $S_a$, x = 0, y = 0, color = cyan!28, style = {draw = blue!50!cyan!60, line width = 0.7pt}]{A}
            \Vertex[label = $S_c$, x = 2, y = 0, color = cyan!28, style = {draw = blue!50!cyan!60, line width = 0.7pt}]{C}
            \Vertex[x=-3.5, y = 2,Pseudo]{P}
            \Vertex[x=-3.5, y = 0,Pseudo]{P1}
            \Vertex[x = 3, y = 4.7,Pseudo]{P2}
            \Vertex[x=4.7, y=-1.5,Pseudo]{P3}
            \Vertex[x=-5.05, y = 1, size =4.3, style = dashed,opacity =0.25]{}
            \Edge[label = $a$, position = left, lw = 1 pt](S)(A) 
            \Edge[label = $b$, position = below, lw = 1 pt](S)(B) 
            \Edge[label = $c$, position = {below left}, lw = 1 pt](S)(C)
            \Edge[color=magenta!80!purple!70, position = {above}, label = $e$, lw = 1 pt, bend = -20](S)(P)
            \Edge[color=magenta!80!purple!70, position = {below}, label = $e_a$, lw = 1 pt, bend = 20](A)(P1)
            \Edge[color=green!40!teal!90, position = {below right}, label = $e_b$, lw = 1 pt, distance = 0.7](B)(P2)
            \Edge[color=orange!80!red!70, position = {below left}, label = $e_{c}$, lw = 1 pt, distance = 0.7](C)(P3)
            \draw [black,line width=1.2pt, dashed] (1,1) ellipse (2.15 and 2.15);
            \draw [black,line width=1.2pt, dashed] (-1.75,1) ellipse (6.65 and 3.15);
            \node [draw=none] at (1,-1.5) () {$S'$};
            \node [draw=none] at (-1.75,-2.5) () {$S''$};
        \end{tikzpicture}
        }
        \caption{Sub-cases (ii) and (iii).}
        \label{fig:top-edge-higher-ii}
    \end{subfigure}

    \label{fig:top-edge-higher-2}
    \caption{Illustration of sub-cases in Lemma \ref{lem:top-edge-higher}}
\end{figure}

We have proved upper bounds on $\mathbb{E}[y(\delta(S))]$ for all top cuts and all bottom cuts with no edge going higher. We next handle bottom cuts with edges going higher. By \cref{cor:cycle-types}, such cuts have exactly two edges going higher.

\begin{lemma}
    \label{lem:bot-edge-higher}
    Let $S$ be a bottom cut with two edges that go higher, then, $$\mathbb{E}[y(\delta(S)] \leq 1 - \left(\frac{2\beta}{3}-\alpha\right) \cdot \tau$$
\end{lemma}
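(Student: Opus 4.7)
The plan is to label the four edges of $\delta(S)$ as $\{a,b,e,f\}$, where $\{a,b\} = \delta^{\rightarrow}(S)$ are the companions in the cycle cut $S'$ (the parent of $S$) connecting $S$ to its cycle-neighbor child of $S'$, and $\{e,f\} = \delta^{\uparrow}(S)$ are the two cycle partners going from $S$ to $w := V \setminus S'$. This structure is forced by the cycle-cut form of $S'$ together with the hypothesis that exactly two edges of $\delta(S)$ go higher: $S$ must be one of the two children of $S'$ that is adjacent to $w$ in the cycle $G_{S'}$. By \cref{lem:bottom-edge-even}, $\tilde{p}_a = \tilde{p}_b = \beta$. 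I will bound $\mathbb{E}[y_g]$ for each $g \in \delta(S)$ individually and then sum, following the template used in the proofs of \cref{lem:top-no-edge} and \cref{lem:top-edge-higher}.

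For $a$ and $b$, I apply \cref{lem:max-inc-bottom-edge} with $S_a = S_b = S'$ in the role of the cycle cut in that lemma. The bound on expected increase is $\iota \cdot \tau$ for $\iota \in \{2\alpha,\,3\alpha/2+\beta/4,\,3\beta/4\}$ depending on whether the grandparent $S''$ of $S$ is (i) a degree cut with no edge going higher, (ii) a degree cut with an edge going higher, or (iii) a cycle cut. Summing yields $\mathbb{E}[y_a+y_b] \le 1/2 - 2(\beta-\iota)\tau$.

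For $e$ and $f$, I split into two sub-cases based on the type of the minimal critical cut $S_e = S_f$ containing both endpoints of $e$ (equivalently, both endpoints of $f$, since they are parallel in $G_{S'}$). In the top-edge sub-case ($S_e$ a degree cut), we have $\tilde{p}_e,\tilde{p}_f \le \alpha$, and the expected increase per edge is bounded using \cref{lem:min-gain-top-cut}, which caps $r_L(e) \le 1/2$ at each top last-cut $L$ of $e$. In the bottom-edge sub-case ($S_e$ a cycle cut), $\tilde{p}_e = \tilde{p}_f = \beta$, and the increase is again controlled by \cref{lem:max-inc-bottom-edge}, this time applied one level higher at $S_e$. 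In each sub-case a tight tally combines with the bound from $a,b$ to give the claimed inequality.

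The main obstacle is the combined worst case $\iota = 3\alpha/2+\beta/4$ (i.e., $S''$ a degree cut with an edge going higher), in which the savings from $\{a,b\}$ alone amount only to $(3\beta/2-3\alpha)\tau$, falling short of the target $(2\beta/3-\alpha)\tau$ by roughly $0.018\,\tau$. Closing this gap requires exploiting structural constraints on $e,f$ in this regime: using \cref{lem:three-edge} and \cref{lem:13/54-edge} to lower bound the parity probabilities of $e,f$'s last cuts, analogously to Case~3 of the proof of \cref{lem:top-edge-higher}, and noting that the edge of $S''$ going higher interacts favorably with the even-at-last probabilities of the higher last cuts of $e$ and $f$. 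The resulting extra $\Omega(\tau)$ savings from $\mathbb{E}[y_e+y_f]$ suffice to establish $\mathbb{E}[y(\delta(S))] \le 1 - (2\beta/3-\alpha)\tau$.
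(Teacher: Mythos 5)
Your proposal identifies the right combinatorial setup (the two companions $a,b$ and the two cycle partners going higher) and correctly computes that naïvely applying \cref{lem:max-inc-bottom-edge} to $a$ and $b$ is insufficient: the worst case $\iota = \tfrac{3\alpha}{2}+\tfrac{\beta}{4}$ yields a saving of only $(3\beta/2-3\alpha)\tau \approx 0.0363\tau$, short of the target $(2\beta/3-\alpha)\tau\approx 0.0538\tau$. But the final paragraph, which is precisely where the work must happen, is a gesture rather than an argument, and the ideas it gestures at are not the ones that close the gap in the paper.

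Three concrete issues. First, the paper does not case on whether $S_e$ is a degree or cycle cut, and indeed your claim that $S_e=S_f$ is unjustified: $e$ and $f$ are parallel in $G_{S'}$ but can have distinct other endpoints in $V\setminus S'$, so their minimal containing critical cuts need not coincide. Second, your argument never looks at the other two edges $g,h\in\delta(S')$ (emanating from the cut $C$ at the far end of the cycle cut $S'$), yet these are central to the paper's analysis: the paper cases on whether $S'$ is a bottom cut or a top cut. In the bottom-cut case, two of $\{e,f,g,h\}$ are companions in the grandparent cycle cut $S''$ (say $e$ and $g$), and the key observation is that they are even at last simultaneously, which forces the parities of $S$ and $C$ to coincide; consequently a single increase on $a,b$ covers the deficits caused by reducing both $e$ and $g$, rather than being charged twice. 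In the top-cut case the paper uses $r_{S'}(e)=\tilde p_e/(\tilde p_e+\tilde p_f+\tilde p_g)$ and tracks the contribution of $g$ to the increase of $a,b$ with care; the extremal configuration there lands exactly at $1-(2\beta/3-\alpha)\tau$. Third, even where you do invoke \cref{lem:max-inc-bottom-edge}, the paper does not apply it to $a$ and $b$ in the hard sub-case; the black-box bound from that lemma is too coarse, and the paper reconstitutes the increase of $a,b$ directly from the reduction probabilities of $e,f,g,h$. So your plan, as written, both takes a different route from the paper and fails to carry it through: the concluding ``extra $\Omega(\tau)$ savings'' is asserted, not derived, and the cited tools (\cref{lem:three-edge}, \cref{lem:13/54-edge}) do not by themselves produce the companion-parity cancellation that makes the argument go.
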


\begin{proof}
    Let $S'$ be the parent of $S$ in the hierarchy, by $e, f$ denote the two edges in $S$ that go higher, denote the other edges in $\delta(S')$ by $g,h$. Additionally, let $C$ be the critical cut inside $S'$ which $g,h$ go higher from and let $a, b$ be the edges in $\delta^{\rightarrow}(S)$. We will now carefully analyze $\mathbb{E}[y(\delta(S))]$.

    \begin{itemize}
        \item \textbf{Case 1:} $S'$ is a bottom cut. By \cref{cor:cycle-types}, at least two of the edge in $\delta(S')$ are companions that don't go higher in $S'$, WLOG, let $e,g$ be these edges. Since they are companions, $e,g$ are even at last simultaneously. Moreover, if $e$ is even at last, then, $S'$ is even. This implies the parity of $C$ and $S$ should be the same and if $S$ is odd, $C$ should be odd too. Therefore, increasing $a, b$ simultaneously covers the deficit caused by reducing $e,g$.
        
        Another key observation is that $a,b$ in total increase exactly as much as $e$ has decreased. This implies that when $e$ (and simultaneously $g$) is even at last and $S$ is odd, $y(\delta(S))$ remains the same and we don't need to consider this scenario. By \cref{lem:max-inc-bottom-edge}, $e$ at most increases by $\left(\frac{3\alpha}{2} + \frac{\beta}{4}\right) \cdot \tau$. Meanwhile, since $\mathbb{P}[S \text{ is even} \mid e \text{ is reduced}] = \frac{1}{2}$, if can also decrease $y(\delta(S))$ by $\frac{\beta}{2} \cdot \tau$ on expectation. The same bound holds for $f$ if it is a bottom edge. Now, assume $f$ is a top edge and let $c, d$ be the edges going higher from the last cuts of $f$. By \cref{lem:min-gain-top-cut}, $f$ can at most contribute to half of the even at last edges in its last cuts. Therefore, in expectation, $f$ can at most increase by $\frac{\beta}{4}$ for each of its last cuts. Recall that one can swap $c$, or $d$, with their companion to fix the parity of the last cuts of $f$ while retaining the even at last property for $c$, or $d$. Since $\frac{\beta}{2} > \frac{3\alpha}{2} - \frac{\beta}{4}$, the latter case is worse.
        
        Finally, consider $h$, if $h$ is a bottom edge, $\tilde{p}_h = \beta$. Moreover, $\mathbb{P}[S \text{ is odd} \mid h \text{ is reduced}] = \frac{1}{2}$. Thus, $a, b$ each increase by $\frac{\beta}{4}$ in step (3) of our construction because of $h$. (Remember that $r_S(a), r_S(b) = \frac{1}{2}$.) If $h$ was a top edge, then $\tilde{p}_h = \alpha$. Since $\alpha \leq \frac{\beta}{2}$, in the worst case $h$ is a bottom edge. Using $\tilde{p}_a, \tilde{p}_b = \beta$ and all the preceding observations gives us,
        \begin{align*}
            \mathbb{E}[y(\delta(S))] \leq 1 - \left(2\beta + \frac{\beta}{2} - \Big(\frac{3\alpha}{2}+\frac{\beta}{4}\Big) - \frac{\beta}{2} - \frac{\beta}{2} \right) \cdot \tau = 1 - \left(\frac{5\beta}{4} - \frac{3\alpha}{2}\right) \cdot \tau 
        \end{align*}

        \item \textbf{Case 2:} $S'$ is a top cut. Since $S'$ is a top cut, by \cref{cor:degree-types}, at least three of $e,f,g,h$ are top edges in $\delta^{\rightarrow}(S)$. First, suppose $e,f,g$ are those edges (see \cref{subfig:h-higher}). Because of $S'$, $e$ (and similarly $f$) needs to increase by at most $\frac{\beta}{2}\cdot r_S(e) \cdot \tau = \frac{\beta}{2}\cdot \frac{\tilde{p}_e}{\tilde{p}_e + \tilde{p}_f +\tilde{p}_g} \cdot \tau$. For its other last cut, it increases by at most $\frac{\beta}{2} \cdot \frac{\tilde{p}_e}{2} \cdot \tau$ by \cref{lem:min-gain-top-cut}. Finally, $g$ can increase $a, b$ in total by at most $\tilde{p}_g$ in expectation. $\mathbb{E}[y(\delta(S))]$ is maximized when $\tilde{p}_e = \tilde{p}_f = \tilde{p}_g = \alpha$ giving
        \begin{align*}
            \mathbb{E}[y(\delta(S))] \leq 1- \left(2\beta - 2 \cdot \Big(\frac{\beta}{4}+\frac{\beta}{6}\Big)- \alpha - \frac{\beta}{2}\right)\cdot \tau = 1 - \left(\frac{2\beta}{3}-\alpha \right) \cdot \tau 
        \end{align*}
        Otherwise, WLOG, $e,g,h$ are top edges (see \cref{subfig:f-higher}). Similar to the previous case, $\mathbb{E}[y(\delta(S))]$ is maximized when $\tilde{p}_e = \tilde{p}_f = \tilde{p}_g = \alpha$. If $f$ is a bottom edge, similar to previous cases, it can decrease $\mathbb{E}[y(\delta(S))]$ by $\left(\frac{3\alpha}{2} - \frac{\beta}{4}\right) \cdot \tau$, so:
        \begin{align*}
            \mathbb{E}[y(\delta(S))] \leq 1-\left(2\beta-\Big(\frac{3\alpha}{2}-\frac{\beta}{4}\Big)-\frac{\beta}{4}-\frac{\beta}{6}-\alpha-\alpha\right)\cdot \tau = 1 - \left(\frac{11\beta}{6}-\frac{7\alpha}{2}\right) \cdot \tau 
        \end{align*}
        However, if $f$ is a top edge it can decrease $\mathbb{E}[y(\delta(S))]$ by at most $\frac{\beta}{2} \cdot \tau$. Meanwhile, the edge that goes higher from $S'$ (if such an edge exists) is a top edge, therefore, to cover for the deficit on $S'$, $e$ needs to increase by at most $\frac{5}{8} \cdot \alpha \cdot \tau$ (where we have used Lemma \ref{lem:three-edge} to show $\mathbb{P}[S\text{ is odd} \mid f \text{ is reduced}] \leq \frac{5}{8}$). This gives
        \begin{align*}
            \mathbb{E}[y(\delta(S))] \leq 1-\left(2\beta-\Big(\frac{\beta}{4}+\frac{5\alpha}{24}\Big)-\frac{\beta}{2}-\alpha-\alpha\right)\cdot \tau = 1 - \left(\frac{5\beta}{4}-\frac{53\alpha}{24}\right) \cdot \tau 
        \end{align*}
        as desired.
    \end{itemize}
        \end{proof}

\begin{figure}[!htbp]
    \begin{subfigure}[t]{0.48\textwidth}
        \begin{tikzpicture}
            \Vertex[x = 2, y = 0, color = cyan!35, style = {draw = blue!50!cyan!60, line width = 0.7pt}]{A}
            \Vertex[label = $S$, style = {draw = blue!50!cyan!60, line width = 0.7pt}, x = 0, y = 0, color = cyan!35]{S}
            \Vertex[label = $C$, x = 4, y = 0, color = cyan!28, style = {draw = blue!50!cyan!60, line width = 0.7pt}]{B}
            
            \Vertex[x = -0.8, y = 1.4,Pseudo]{P1}
            \Vertex[x=4.8, y = -1.4,Pseudo]{P2}
            \Vertex[x=4.8, y= 2.7,Pseudo]{P3}
            \Vertex[x=-0.8, y = -1.4,Pseudo]{P4}
            
            \Edge[label = $a$, position = above, bend = 30, lw = 1 pt](S)(A)
            \Edge[label = $b$, position = below, bend = 30, lw = 1 pt](A)(S)
            \Edge[position = above, bend = 30, lw = 1 pt](B)(A)
            \Edge[ position = below, bend = 30, lw = 1 pt](A)(B)
            \Edge[color=orange!80!red!70, position = {right}, label = $f$, distance = 0.76,lw = 1 pt](S)(P1)
            \Edge[color=orange!80!red!70, position = {below right}, label = $e$, distance = 0.7,lw = 1 pt](S)(P4)
            \Edge[color=orange!80!red!70, position = {below left}, label = $g$, distance = 0.7,lw = 1 pt](B)(P2)
            \Edge[color=magenta!80!purple!70, position = {above left}, label = $h$, distance = 0.7,lw = 1 pt](B)(P3)
            
            \draw [black,line width=1.2pt, dashed] (2,0) ellipse (3.15 and 1.15);
            \node [draw=none] at (2,-1.5) () {$S'$};
            \draw [black,line width=1.2pt, dashed] (2.05,-0.1) ellipse (3.95 and 2.1);
        \end{tikzpicture}
        \caption{The scenario where $h$ is the edge in $\delta(S')$ that goes higher. As $\tilde{p}_g$ goes up, $y_a, y_b$ increase more to cover the deficit caused on $C$ by reducing $g$. Meanwhile, this makes $\tilde{p}(\delta^\rightarrow(S'))$ larger which in turn decreases the amount $y_f$ and $y_e$ go up.}
        \label{subfig:h-higher}
    \end{subfigure}
    \hfill
    \begin{subfigure}[t]{0.48\textwidth}
        \begin{tikzpicture}
            \Vertex[x = 2, y = 0, color = cyan!35, style = {draw = blue!50!cyan!60, line width = 0.7pt}]{A}
            \Vertex[label = $S$, style = {draw = blue!50!cyan!60, line width = 0.7pt}, x = 0, y = 0, color = cyan!35]{S}
            \Vertex[label = $C$, x = 4, y = 0, color = cyan!28, style = {draw = blue!50!cyan!60, line width = 0.7pt}]{B}
            
            \Vertex[x = -0.5, y = 2.7,Pseudo]{P1}
            \Vertex[x=4.8, y = -1.4,Pseudo]{P2}
            \Vertex[x=4.8, y= 1.4,Pseudo]{P3}
            \Vertex[x=-0.8, y = -1.4,Pseudo]{P4}
            
            \Edge[label = $a$, position = above, bend = 30, lw = 1 pt](S)(A)
            \Edge[label = $b$, position = below, bend = 30, lw = 1 pt](A)(S)
            \Edge[position = above, bend = 30, lw = 1 pt](B)(A)
            \Edge[ position = below, bend = 30, lw = 1 pt](A)(B)
            \Edge[color=magenta!80!purple!70, position = {right}, label = $f$, distance = 0.76,lw = 1 pt](S)(P1)
            \Edge[color=orange!80!red!70, position = {below right}, label = $e$, distance = 0.7,lw = 1 pt](S)(P4)
            \Edge[color=orange!80!red!70, position = {below left}, label = $g$, distance = 0.7,lw = 1 pt](B)(P2)
            \Edge[color=orange!80!red!70, position = {above left}, label = $h$, distance = 0.7,lw = 1 pt](B)(P3)
            
            \draw [black,line width=1.2pt, dashed] (2,0) ellipse (3.15 and 1.15);
            \node [draw=none] at (2,-1.5) () {$S'$};
            \draw [black,line width=1.2pt, dashed] (2.05,-0.1) ellipse (3.95 and 2.1);
        \end{tikzpicture}
        \caption{The scenario where $f$ is the edge in $\delta(S')$ that goes higher. When $f$ is a top edge, we lose more on $f$ but less on $e$. On the contrary, when $f$ is a bottom edge, $y_f$ increases less but $y_e$ increase more.}
        \label{subfig:f-higher}
    \end{subfigure}
    \caption{Illustration of two scenarios in \textbf{Case 2} of the proof of Lemma \ref{lem:bot-edge-higher}.}
\end{figure}

Now, we are ready to prove the main lemma of the paper.

\mainlemma*

\begin{proof}
    By \cref{cor:degree-types,cor:cycle-types}, every min-cut falls into one of the \cref{lem:top-no-edge,lem:top-edge-higher,lem:bot-no-edge,lem:bot-edge-higher}. The rest of the proof follows from the values set for $\alpha, \beta, \tau$.
\end{proof}

\section{Improving the Half-Integral Degree Cut Case}\label{sec:degreecut}
Let $x$ be the optimal solution to the subtour LP with $x_e \in \{ 0, \frac{1}{2}, 1\}$ for all $e \in E$ and let $G$ be the support graph associated with $x$. We say $x$ is a \textbf{degree-cut instance} of TSP when $G$ has no proper min-cut: in other words, if $x(\delta(S)) > 2$ for all sets $S$ with $2 \le |S| \le n-2$. Previously, Haddadan and Newman \cite{HN19} gave a $1.5-\frac{1}{42} \approx 1.476$ approximation for half-integral degree cut instances with an even number of vertices. In this section, we improve this to an $1.4671$ approximation and show a $1.4671 + O(\frac{1}{n})$ approximation for half-integral degree cut instances with an odd number of vertices. We only prove the result for the case when $n$ is odd as the other case follows similarly. We note here that our construction of the perturbed point in the odd case is similar to \cite{GLLM21} and does not lead to any significant differences. We use this construction because we believe it may generalize slightly more naturally to the non-half-integral case.

\maindegreetheorem*

We analyze a version of max entropy similar to that seen in \cite{GLLM21}: we first sample a random matching to update $x$ to a new point $x'$ in the spanning tree polytope (as is done in \cite{HN19}) and then apply max entropy on the marginals $x'$. This section serves to show that this ``perturbed" max entropy algorithm is competitive with the combinatorial algorithm of \cite{HN19}. This is some evidence that the perturbed max entropy algorithm may be worth studying further.

The matching polytope is defines as follows:

\begin{equation}
\begin{aligned}
\hspace{2mm} & x(\delta(v)) \le 1 & \forall v \in V \\
\hspace{2mm} & x\left(E(S)\right) \le \frac{|S|-1}{2} & \forall U\subseteq V, |U| \text{ odd}\\
 & x_e\geq  0  &  \forall \, e \in E
\end{aligned}
\label{eq:Ojoinlp}
\end{equation}

For odd $n$, let $\nu = \frac{n-1}{2n} \cdot x$,\footnote{In the case where there are an even number of vertices, we instead let $\nu = \frac{1}{2}x$.} and note that in the degree cut case $x_e = \frac{1}{2}$ for all edges not equal to 0, as an edge with $x_e = 1$ would induce a proper tight set.

\begin{claim}
    $\nu$ is in the matching polytope of $G$. Therefore, it can be expressed as a convex combination of maximum matchings in $G$.
\end{claim}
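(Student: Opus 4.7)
The plan is to verify directly that $\nu$ satisfies all three families of constraints defining the matching polytope in \eqref{eq:Ojoinlp}, and then invoke Edmonds' characterization of its extreme points. The proof should be elementary and rely only on the LP constraints satisfied by $x$, together with the structural observation that $x_e \in \{0,\frac{1}{2}\}$ in the degree cut case.

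First I would dispatch the easy constraints. Non-negativity is immediate, since $\frac{n-1}{2n} > 0$ and $x \ge 0$. For the degree constraints, the subtour LP enforces $x(\delta(v)) = 2$ for every vertex $v$, so
\begin{equation*}
\nu(\delta(v)) \;=\; \frac{n-1}{2n} \cdot x(\delta(v)) \;=\; \frac{n-1}{n} \;\le\; 1.
\end{equation*}

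The main step is the odd-set constraint $\nu(E(S)) \le \frac{|S|-1}{2}$ for $|S|$ odd. I would use the identity
\begin{equation*}
2\,x(E(S)) + x(\delta(S)) \;=\; \sum_{v \in S} x(\delta(v)) \;=\; 2|S|,
\end{equation*}
so $x(E(S)) = |S| - \tfrac{1}{2}x(\delta(S))$. Split into two cases. For $|S| = n$, $\delta(S) = \emptyset$ gives $x(E) = n$, hence $\nu(E) = \frac{n-1}{2} = \frac{|S|-1}{2}$, and the constraint holds with equality. For odd $|S|$ with $1 \le |S| \le n-2$, the subtour bound $x(\delta(S)) \ge 2$ gives $x(E(S)) \le |S|-1$, so
\begin{equation*}
\nu(E(S)) \;\le\; \frac{n-1}{2n}\,(|S|-1) \;\le\; \frac{|S|-1}{2},
\end{equation*}
the last inequality being $\frac{n-1}{n} \le 1$. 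The case $|S| = 1$ is trivial since $E(S) = \emptyset$.

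Finally, by Edmonds' theorem the matching polytope is integral, so $\nu$ is a convex combination of characteristic vectors of matchings of $G$. To upgrade this to \emph{maximum} matchings, I would observe that $\nu(E) = \tfrac{n-1}{2n} \cdot n = \tfrac{n-1}{2}$, which equals $\lfloor n/2 \rfloor$, the maximum possible size of a matching when $n$ is odd. Hence in any such convex combination $\nu = \sum_i \lambda_i \chi_{M_i}$ with $\sum_i \lambda_i = 1$, the equality $\sum_i \lambda_i |M_i| = \frac{n-1}{2}$ combined with $|M_i| \le \frac{n-1}{2}$ forces every $M_i$ appearing with positive weight to be a maximum matching. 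There is no real obstacle here; the only mildly subtle point is the tight case $|S|=n$, which is why the scaling factor $\frac{n-1}{2n}$ (rather than $\frac{1}{2}$) is needed when $n$ is odd.
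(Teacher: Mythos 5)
Your proof is correct and follows essentially the same route as the paper: verify the matching polytope constraints directly from the subtour LP (the odd-set constraints via the degree-sum identity and $x(\delta(S))\ge 2$), then use $\nu(E)=\frac{n-1}{2}$ together with the fact that no matching has size $\frac{n}{2}$ for odd $n$ to force every matching in the decomposition to be maximum. You are merely more explicit than the paper's terse argument, in particular in checking the degree constraints and the tight case $S=V$ where the scaling $\frac{n-1}{2n}$ is exactly what is needed.
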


\begin{proof}
For any proper set $S \subseteq V$, $x(E(S)) \le \frac{|S|-1}{2}$, by the subtour elimination constraints. The fact that all matchings in the convex combination are maximal follows from the fact that $\sum_e \nu_e = \frac{n-1}{2}$, and no matching can have size $\frac{n}{2}$ as $n$ is odd.  
\end{proof}

 Let $M$ be a random matching sampled from $\nu$. Then, by the above claim:

\begin{fact}
    Every vertex $v \in G$ is saturated in $M$ with probability $1 - \frac{1}{n}$.
\end{fact}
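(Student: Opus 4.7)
The plan is a direct computation using the marginals of $M$. Since $M$ is a matching, the events $\{e \in M\}_{e \in \delta(v)}$ are pairwise disjoint (at most one edge of $\delta(v)$ can lie in $M$), so
\[
\PP{M \sim \nu}{v \text{ saturated}} \;=\; \sum_{e \in \delta(v)} \PP{M}{e \in M} \;=\; \nu(\delta(v)),
\]
where the second equality uses the preceding claim: because $\nu$ lies in the matching polytope, by Edmonds' theorem it can be written as a convex combination of matchings, so we may sample $M$ from a distribution whose edge-marginals are exactly $\nu$.

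It then remains to evaluate $\nu(\delta(v))$. Here I would use that we are in the degree cut case, so every edge in the support has $x_e = \tfrac{1}{2}$. Indeed, an edge $uv$ with $x_{uv} = 1$ would make $\{u,v\}$ tight, since
\[
x(\delta(\{u,v\})) \;=\; x(\delta(u)) + x(\delta(v)) - 2x_{uv} \;=\; 2,
\]
and this set is proper whenever $n \ge 4$, contradicting our degree-cut assumption (the cases $n \le 3$ being trivial). Combining the LP degree equality $x(\delta(v)) = 2$ with the definition $\nu = \tfrac{n-1}{2n}\,x$ yields
\[
\nu(\delta(v)) \;=\; \frac{n-1}{2n} \cdot x(\delta(v)) \;=\; \frac{n-1}{2n}\cdot 2 \;=\; 1 - \frac{1}{n},
\]
which is the claimed probability.

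I do not expect any real obstacle here: the fact is essentially an immediate consequence of the definition of $\nu$ and of the previous claim. The only point that requires any care is the observation that the absence of proper tight sets forces $x_e = \tfrac{1}{2}$ on the support, so that $x(\delta(v))$ is exactly $2$ and the scaling factor $\tfrac{n-1}{2n}$ produces the clean value $1 - \tfrac{1}{n}$.
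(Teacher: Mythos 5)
Your proof is correct and is essentially the argument the paper leaves implicit: the claim gives a convex combination of matchings with edge marginals $\nu$, the events $\{e\in M\}$ for $e\in\delta(v)$ are disjoint, and $\nu(\delta(v))=\frac{n-1}{2n}\cdot x(\delta(v))=1-\frac{1}{n}$. One small remark: the detour through $x_e=\tfrac12$ is unnecessary, since $x(\delta(v))=2$ is already a degree constraint of the subtour LP regardless of the edge values.
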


We construct a new point $z$ using the randomly sampled matching $M$. $z$ will be the marginals for sampling a spanning tree plus an edge. For each $e = uv \in G$ define:
\begin{align*}
    z_e = \frac{1}{2} + \frac{2}{3} \ind{e \in M} - \frac{1}{12} \left(\ind{u \text{ is saturated in } M} + \ind{v \text{ is saturated in } M}\right)
\end{align*}

In other words, let $r$ be the single vertex not saturated in $M$. Set $z_e = 1$ for all $e \in M$; for $e \in \delta(r)$, set $z_e = \frac{5}{12}$ and for all the other edges $z_e = \frac{1}{3}$. Let $e^+$ be an arbitrary edge in $M$. For every $e \in G \setminus e^+$, let $z'_e = z_e$ and let $z'_{e^+} = 0$.

\begin{claim}
    \label{lem:degree-z-expected}
    $z'$ is in the spanning tree polytope \eqref{eq:spanningtreelp} of $M$. Furthermore, $\mathbb{E}[z_e] = \frac{1}{2}$ for all $e \in E$.
\end{claim}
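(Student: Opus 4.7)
The two assertions in the claim are independent, so I would prove each by direct computation. For the expected marginals, substituting $\mathbb{P}[e \in M] = \nu_e = (n-1)/(4n)$ (which uses $x_e = 1/2$ on the support in the degree-cut case) and $\mathbb{P}[v \text{ is saturated}] = (n-1)/n$ into the definition of $z_e$ gives
\[
    \mathbb{E}[z_e] \;=\; \tfrac{1}{2} + \tfrac{2}{3}\cdot\tfrac{n-1}{4n} - \tfrac{1}{12} \cdot 2 \cdot \tfrac{n-1}{n} \;=\; \tfrac{1}{2} + \tfrac{n-1}{6n} - \tfrac{n-1}{6n} \;=\; \tfrac{1}{2}.
\]
The coefficients $\tfrac{2}{3}$ and $-\tfrac{1}{12}$ in the definition of $z_e$ are tuned precisely so these two correction terms cancel.

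For the spanning-tree polytope, non-negativity is immediate since $z'_e \in \{0,\tfrac{1}{3},\tfrac{5}{12}\}$. For the equality $z'(E) = n-1$, I would sum the definition of $z_e$ over $E$: using $|E| = 2n$ (since $x_e = 1/2$ on the support and $x(\delta(v))=2$), $|M| = (n-1)/2$, and $G$ being $4$-regular with exactly $n-1$ saturated vertices (contributing $4(n-1)$ to the endpoint-incidence sum), one gets
\[
    z(E) \;=\; \tfrac12 \cdot 2n + \tfrac23 \cdot \tfrac{n-1}{2} - \tfrac{1}{12} \cdot 4(n-1) \;=\; n + \tfrac{n-1}{3} - \tfrac{n-1}{3} \;=\; n,
\]
and zeroing $z'_{e^+}$ (where $e^+\in M$ has $z_{e^+}=1$) yields $z'(E) = n-1$. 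For the subset inequality $z'(E(S)) \le |S|-1$, let $n_S := |E(S)|$, $m_S := |M \cap E(S)|$, and $d_r^{(S)}$ be the degree of $r$ in $E(S)$. A short case split on the three possible values of $z_e$ (namely $1$, $\tfrac{5}{12}$, $\tfrac{1}{3}$) yields
\[
    z(E(S)) \;=\; \tfrac{1}{3}\, n_S + \tfrac{2}{3}\, m_S + \tfrac{1}{12}\,\ind{r \in S}\, d_r^{(S)}.
\]
For proper $S$ (i.e.\ $2 \le |S| \le n-2$), the degree-cut hypothesis rules out any proper tight cut, so $|\delta(S)| \ge 6$; combined with $G$ being $4$-regular, this gives $n_S = 2|S| - |\delta(S)|/2 \le 2|S|-3$. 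Using $m_S \le |S|/2$ (or $(|S|-1)/2$ when $r \in S$, since $r$ is unmatched) and $d_r^{(S)} \le 4$, the above expression balances exactly to $|S|-1$ in both sub-cases. The corner sets $|S| \in \{1, n-1\}$ I would check by direct substitution; the only subtle one is $S = V \setminus \{r\}$, where $|\delta(S)| = 4$ so the degree-cut bound fails, but there $e^+ \in E(S)$ and the saving from $z'_{e^+}=0$ makes up the deficit.

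The main obstacle is the subset-inequality step: the arithmetic balances on the nose in the worst case, so any of the three competing bounds (the degree-cut bound $n_S \le 2|S|-3$, the matching bound on $m_S$, and the $r$-correction) being off by a constant would break the argument. In particular, the $r\in S$ and $r\notin S$ cases must land at the same upper bound $|S|-1$, which is why the $r$-correction term $\tfrac{1}{12} d_r^{(S)}$ is calibrated exactly to absorb the $1/2$-saving from $m_S$ dropping by $1/2$ when $r\in S$. Everything else in the proof is bookkeeping.
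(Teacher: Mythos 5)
Your proposal is correct, and at its core it is the same kind of direct verification the paper gives: both arguments hinge on $|\delta(S)|\ge 6$ for proper cuts of a degree-cut instance plus $4$-regularity, and both check the total $z(E)=n$ and the marginal cancellation $\mathbb{E}[z_e]=\tfrac12$. The bookkeeping differs: the paper evaluates $z(E(S))=\tfrac12\bigl(\sum_{v\in S}z(\delta(v))-z(\delta(S))\bigr)$ using the vertex values $z(\delta(r))=\tfrac53$, $z(\delta(v))=\tfrac{25}{12}$ for terminals and $2$ otherwise, splitting on whether $r\in S$, whereas you decompose edge-wise, $z(E(S))=\tfrac13 n_S+\tfrac23 m_S+\tfrac1{12}\ind{r\in S}\,d_r^{(S)}$, and bound $n_S\le 2|S|-3$ and $m_S$ by matching size; likewise your expectation computation uses linearity on the defining formula rather than the paper's case probabilities. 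Both routes balance to exactly $|S|-1$. One genuine point in your favor: you explicitly treat the sets with $|S|=n-1$, where $|\delta(S)|=4$ and the degree-cut bound is unavailable, and you correctly identify $S=V\setminus\{r\}$ as the one place where the zeroing $z'_{e^+}=0$ is actually needed (since $z(E(V\setminus r))=n-\tfrac53>n-2$), a case the paper's proof, which only discusses proper $S$ and the full edge set, passes over. Only a cosmetic slip: $z'_e$ also takes the value $1$ on matching edges other than $e^+$, not just $\{0,\tfrac13,\tfrac5{12}\}$, but this does not affect non-negativity or any estimate.
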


\begin{proof}
    We begin by computing $\mathbb{E}[z_e] = \frac{1}{2}$ for all edges $e \in G$. With probability $\frac{2}{n}$, one end point of $e$ is $r$, and hence, $z_e = \frac{5}{12}$. Otherwise, with probability $\frac{n-1}{4n}$, $e \in M$ and $z_e = 1$. Otherwise, $z_e = \frac{1}{3}$. This demonstrates: 
    \begin{align*}
        \mathbb{E}[z_e] = \frac{2}{n} \cdot \frac{5}{12} + \frac{n-1}{4n} \cdot 1 + \left( 1 - \frac{2}{n} - \frac{n-1}{4n} \right) \cdot \frac{1}{3} = \frac{1}{2}
    \end{align*}

    Now, we prove $z'$ is in the spanning tree polytope. First, for any propert set $S \subseteq G$, we show $z'(E(S)) \leq |S| - 1$. Since for all $e$, $z'_e \leq z_e$, we show $z$ satisfies these constraints.
    
    By the definition of $z$, $z(\delta(r)) = \frac{5}{3}$, $z(\delta(v)) = \frac{25}{12}$ for a terminal vertex $v$, and, $z(\delta(v)) = 2$ for a non-terminal $v \in G \setminus r$. Now, consiser a proper subset $S \subset G$. Assume $S$ contains $i$ terminal vertices. We compute $z(E(S)) = \frac{  \sum_{v \in S} z(\delta(v)) - z(\delta(S))}{2}$.

    \begin{itemize}
        \item \textbf{Case 1}: $r \notin S$. Since $G$ is a degree-cut instance, $|\delta(S)| \geq 6$, $i$ of which have $z_e = \frac{5}{12}$. Therefore, $z(\delta(S)) \geq (6-i) \cdot \frac{1}{3} + i \cdot \frac{5}{12} = 2 + i \cdot \frac{1}{12}$. Moreover, $\sum_{v \in S} z(\delta(v)) = (|S|-i) \cdot 2 + i \cdot \frac{25}{12}$ which concludes $z(E(S)) \leq |S| - 1$.

        \item  \textbf{Case 2}: $r \in S$. Similarly, $|\delta(S)| \geq 6$, therefore, $z(\delta(S)) \geq 2$. Meanwhile, $\sum_{v \in S} z(\delta(v)) \leq \frac{5}{3} + 4 \cdot \frac{25}{12} + (|S|- 4) \cdot 2 = 2 \cdot |S|$. Concluding $z(E(S)) \leq |S| -1$
    \end{itemize}

    Finally, it suffices to show $z(E) = n$ as $z'(E) = z(E)-1$. Recall that for $r$, $z(\delta(r)) = \frac{5}{3}$, for each terminal vertex $v$, $z(\delta(v)) = \frac{25}{12}$, and, for every non-terminal vertex $u$, $z(\delta(u)) = 2$.

    \begin{align*}
        z(E) = \frac{1}{2} \cdot \left( \frac{5}{3} + 4 \cdot \frac{25}{12} + (n-5) \cdot 2 \right) = n
    \end{align*}
\end{proof}

Now, we will sample a tree $T$ from the max entropy distribution over spanning trees with marginals equal to $z'_e$. Finally, we add $e^+$ to $T$ so that $\P{e \in T} = z_e$ for all $e \in E$. 

\begin{definition}
    Let $M$ be the maximum matching sampled according to $\nu$. Let $r$ be the single vertex not saturated by $M$, we refer to this vertex as the root. Moreover, we refer to the vertices adjacent to $r$ in $G$ as terminals.
\end{definition}

\begin{definition}
    An edge $e = uv \in G$ is called normal if $e \in M$ and $u, v$ are not terminal vertices.
\end{definition}

\begin{lemma}
    For every edge $e \in E$, $\P{e \text{ even at last} \mid e \text{ normal}} \ge \frac{16}{81}$.
\end{lemma}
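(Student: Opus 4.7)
The plan is to first condition on a matching $M$ under which $e = uv$ is normal, and then reduce the event ``$e$ is even at last'' to an event involving only the six neighboring edges of $e$. Because $u$ and $v$ are non-terminals, the three edges $a_1, a_2, a_3 \in \delta(u) \setminus \{e\}$ and three edges $b_1, b_2, b_3 \in \delta(v) \setminus \{e\}$ each satisfy $z_{a_i} = z_{b_i} = 1/3$: none of them lies in $M$ (since $u, v$ are already saturated by $e$), and neither endpoint of any such edge is $r$ (since $u, v$ are non-terminals). Moreover, $e$ is in $T$ with probability one regardless of whether $e = e^+$. Setting $A_T := |T \cap \{a_1, a_2, a_3\}|$ and $B_T := |T \cap \{b_1, b_2, b_3\}|$, the degrees $\delta_T(u) = 1 + A_T$ and $\delta_T(v) = 1 + B_T$ are simultaneously even exactly when $A_T$ and $B_T$ are both odd, so the event $\{A_T = 1, B_T = 1\}$ implies $e$ is even at last. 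It therefore suffices to show $\P{A_T = 1,\ B_T = 1} \ge 16/81$.

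The second step is to express this probability as a coefficient of a real stable bivariate polynomial. Let $p(x_{a_1}, \ldots, x_{b_3})$ be the generating polynomial of the projection of $\mu_{z'}$ to $A \cup B$; by the projection closure of strongly Rayleigh distributions, $p$ is real stable with non-negative coefficients and is multi-affine. Define $q(y, z) := p(y, y, y, z, z, z)$. Substituting equal values for several variables preserves real stability, so $q$ is also real stable with non-negative coefficients, and a direct expansion shows
\[
q(y, z) = \sum_{i, j} \P{A_T = i,\ B_T = j}\, y^i z^j.
\]
In particular $q(1, 1) = 1$, the coefficient of $yz$ in $q$ equals $\P{A_T = 1,\ B_T = 1}$, and $\partial_y q(\mathbf{1}) = \mathbb{E}[A_T] = 1 = \mathbb{E}[B_T] = \partial_z q(\mathbf{1})$. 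The maximum degrees $d_y$ and $d_z$ of $y$ and $z$ in $q$ are each at most three.

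The final step is to apply \cref{cor:capacity} to $q$; although that corollary is stated for generating polynomials of SR distributions on a binary ground set, its proof uses only real stability of $q$, non-negative coefficients, $q(\mathbf{1}) = 1$, and $\nabla q(\mathbf{1}) = \mathbf{1}$, all of which hold here. The resulting bound is
\[
\P{A_T = 1,\ B_T = 1} \;\ge\; \frac{d_y(d_y - 1)^{d_y - 1}}{d_y^{d_y}} \cdot \frac{d_z(d_z - 1)^{d_z - 1}}{d_z^{d_z}} \;\ge\; \left(\frac{4}{9}\right)^2 = \frac{16}{81},
\]
where the final inequality uses that $d \mapsto d(d - 1)^{d - 1}/d^d$ is decreasing on $\{1, 2, 3\}$ and equals $4/9$ at $d = 3$. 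The main technical obstacle is confirming that the symmetrization step produces a polynomial to which \cref{cor:capacity} applies verbatim, but this reduces to checking that real stability is preserved under identifying variables and that the capacity argument only uses the normalization conditions listed above.
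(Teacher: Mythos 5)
Your proof is correct and follows essentially the same route as the paper: reduce the question to $\P{A_T = 1, B_T = 1}$ for the two triples of edges at the endpoints of $e$, project the spanning-tree distribution to those six edges, symmetrize to a bivariate real stable polynomial $q$, and apply \cref{cor:capacity} with $d_y, d_z \le 3$ to get $(4/9)^2 = 16/81$. You fill in a few details the paper leaves implicit (that $e \in T$ always, that $\{A_T = B_T = 1\}$ is a sub-event of "even at last", and that \cref{cor:capacity} applies to the symmetrized polynomial because its proof uses only real stability, non-negativity, normalization, and the gradient condition), but these are accurate clarifications rather than a different method.
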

\begin{proof}
    Consider the normal edge $e = uv$ and let $A = \delta(u) \setminus e$, $B = \delta(v) \setminus e$. By the the definition of normal edges, any $f \in A \cup B$ has $z_e = \frac{1}{3}$. 
    Let $A = \{ a, b, c \}$ and $B = \{f, g, h \}$, note that $A$ and $B$ are disjoint. Project $\mu$ to $\{ a, b, c, f, g, h \}$ to get a distribution with generating polynomial $p(x_a, x_b, x_c, x_f, x_g, x_h)$. Symmetrize $p$ to get $p'(t_a, t_b) = p(t_a, t_a, t_a, t_b, t_b, t_b)$. 
    Since $|A| = |B| = 3$, the degree of $t_a$ and $t_b$ is at most $3$ in $p'$. Applying \cref{cor:capacity}, where $d_1$ is the degree of $t_a$ and $d_2$ the degree of $t_b$, yields,
    \begin{align*}
        \mathbb{P}[A_T=B_T=1] = p'_{(1,1)} \ge \frac{d_1 (d_1-1)^{d_1-1}}{d_1^{d_1}} \cdot \frac{d_2 (d_2-1)^{d_2-1}}{d_2^{d_2}} \ge \frac{16}{81}
    \end{align*}
\end{proof}

 \paragraph{Construction of the $O$-join Solution.} Now, we describe the construction of the $O$-join solution.

First, let $y_e = \frac{z_e}{2}$. Now, for any normal edge $e$ that is even at last, update $z_e$ to $\frac{1}{6}$. Finally, let $y_e = \frac{1}{4}$ for $e \in \delta(r)$.

\begin{claim}
    $y_e$ is in the $O$-join polytope for $T$.
\end{claim}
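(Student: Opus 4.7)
The plan is to verify the $O$-Join feasibility constraints $y(\delta(S)) \geq 1$ for every cut $S$ with $\delta_T(S)$ odd, by combining a uniform pointwise lower bound on $y$ with a short case analysis for singleton cuts. Since $|O|$ is even, the constraints for $S$ and $V \setminus S$ coincide, so it suffices to consider, say, $|S| \leq n/2$.

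First, I would establish that $y_e \geq \frac{1}{6}$ for every edge $e$. Reading off the construction: matching edges $e$ have $z_e = 1$ and hence $y_e = \frac{1}{2}$, unless $e$ is normal and even at last, in which case $y_e = \frac{1}{6}$; edges in $\delta(r)$ are overwritten to $y_e = \frac{1}{4}$ in the final step (this is an increase, since initially $z_e = \frac{5}{12}$ gives $y_e = \frac{5}{24}$); every remaining edge has $z_e = \frac{1}{3}$, so $y_e = \frac{1}{6}$. No conflict arises because no edge of $\delta(r)$ lies in $M$ (as $r$ is unsaturated) or is normal (as its $r$-endpoint is adjacent to a terminal).

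Next, for any proper cut $S$ with $2 \leq |S| \leq n-2$, the degree-cut assumption together with $G$ being $4$-regular, $4$-edge-connected and Eulerian forces $|\delta(S)| \geq 6$. The uniform lower bound then yields $y(\delta(S)) \geq 6 \cdot \frac{1}{6} = 1$, disposing of all non-singleton cuts at once.

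It remains to handle singletons $\{v\}$ with $v \in O$. For $v = r$, $|\delta(r)| = 4$ and each edge contributes $\frac{1}{4}$, giving exactly $1$. For $v$ a terminal, $v$ has the edge $vr$ with $y = \frac{1}{4}$, its matching partner (not normal since $v$ is terminal, so $y = \frac{1}{2}$), and two further edges of value $\frac{1}{6}$, summing to $\frac{13}{12}$. The main obstacle—the tightest case—is a non-terminal, non-root $v \in O$. Here $v$ is saturated by a unique matching edge $vu$ and has three non-matching, non-$\delta(r)$ edges. The key observation is that in the degree-cut case the last cuts of $vu$ are precisely $\{u\}$ and $\{v\}$, so ``even at last'' forces $\delta_T(v)$ to be even; since $v \in O$, this rules out the reduction of $y_{vu}$, giving $y_{vu} = \frac{1}{2}$. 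Combined with the three edges of value $\frac{1}{6}$, this yields $y(\delta(v)) = \frac{1}{2} + 3 \cdot \frac{1}{6} = 1$, matching the bound exactly. All $O$-Join constraints are therefore satisfied.
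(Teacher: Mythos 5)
Your proof is correct and follows essentially the same route as the paper: a uniform bound $y_e \ge \frac{1}{6}$ handles all proper cuts (which have $|\delta(S)|\ge 6$ in the degree-cut case), the root is checked directly, and for an odd singleton $u \ne r$ the incident matching edge cannot be even at last (its last cuts being the two endpoints), so it is not reduced and $y(\delta(u)) \ge \frac{z(\delta(u))}{2} \ge 1$ — which is exactly the paper's argument, just spelled out in more detail.
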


\begin{proof}
    Consider a proper $S \subset V$. Since $G$ is a degree cut instance, $|\delta(S)| \geq 6$. In our construction, for every edge $e \in E$, $y_e \geq \frac{1}{6}$. Therefore, $y(\delta(S)) \geq 1$. $y(\delta(r)) = 1$ by definition. Now, consider a single vertex $u \in G \setminus r$. When $u$ is odd, $y(\delta(u)) = \frac{z(\delta(u))}{2} \geq 1$. Thus every $O$-join constraint is satisfied.
\end{proof}

Finally, we analyze the expected cost of $y$.

\begin{lemma}
    \label{lem:degree-y-expected}
    For any vertex $u \in V$, $\mathbb{E}[y(\delta(u))] \leq \frac{227}{243} + O(\frac{1}{n})$.
\end{lemma}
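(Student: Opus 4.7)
The plan is to compute $\mathbb{E}[y_e]$ for each $e \in \delta(u)$ by splitting into the three cases of the construction of $y$, and then sum over $\delta(u)$. First observe that the two override events are on disjoint sets of edges: if $e \in \delta(r)$ then $e \notin M$ (the root is unmatched), so $e$ cannot be normal. Writing
\begin{align*}
    y_e = \tfrac{1}{4}\,\mathbb{I}[e \in \delta(r)] + \tfrac{1}{6}\,\mathbb{I}[e \text{ is normal and even at last}] + \tfrac{z_e}{2}\,\mathbb{I}[\text{otherwise}],
\end{align*}
using that $z_e = 5/12$ on $\delta(r)$ and $z_e = 1$ on normal edges, and invoking $\mathbb{E}[z_e] = 1/2$ from \cref{lem:degree-z-expected}, a short algebraic calculation yields
\begin{align*}
    \mathbb{E}[y_e] = \tfrac{1}{4} + \tfrac{1}{24}\,\mathbb{P}[e \in \delta(r)] - \tfrac{1}{3}\,\mathbb{P}[e \text{ normal and even at last}].
\end{align*}
Summing over the four edges of $\delta(u)$,
\begin{align*}
    \mathbb{E}[y(\delta(u))] = 1 + \tfrac{1}{24}\sum_{e \in \delta(u)}\mathbb{P}[e \in \delta(r)] - \tfrac{1}{3}\sum_{e \in \delta(u)}\mathbb{P}[e \text{ normal, even at last}].
\end{align*}

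I would next bound the two sums. The first is $O(1/n)$: since $\mathbb{P}[w = r] = 1/n$ for every vertex $w$, each term $\mathbb{P}[e = uv \in \delta(r)] \le \mathbb{P}[u = r] + \mathbb{P}[v = r] \le 2/n$. For the second, note the only edge of $\delta(u)$ that can lie in $M$ is the matching partner $e^* = uu'$ of $u$, so no other edge of $\delta(u)$ is normal; hence the sum equals $\mathbb{P}[e^* \text{ exists, normal, and even at last}]$. For $e^*$ to be normal we need $r \notin \{u\} \cup N(u) \cup N(u')$. Since this set has size at most $9$ and $r$ is uniform over $V$ (and the coupling between $u'$ and $r$ is controlled by $\mathbb{P}[u' \in N(r)] \le \sum_{w \in N(u)} \mathbb{P}[w \in N(r)] = 16/n$ since $G$ is $4$-regular), a union bound gives $\mathbb{P}[e^* \text{ normal}] \ge 1 - O(1/n)$.

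Finally, applying the preceding lemma edge by edge and using that the events $\{e \text{ normal}\}$ for $e \in \delta(u)$ are mutually exclusive (at most one edge of $\delta(u)$ lies in $M$),
\begin{align*}
    \sum_{e \in \delta(u)}\mathbb{P}[e \text{ normal, even at last}] \ge \tfrac{16}{81}\sum_{e \in \delta(u)}\mathbb{P}[e \text{ normal}] = \tfrac{16}{81}\,\mathbb{P}[e^* \text{ normal}] \ge \tfrac{16}{81} - O\!\left(\tfrac{1}{n}\right).
\end{align*}
Putting it together,
\begin{align*}
    \mathbb{E}[y(\delta(u))] \le 1 + O\!\left(\tfrac{1}{n}\right) - \tfrac{1}{3}\cdot\tfrac{16}{81} + O\!\left(\tfrac{1}{n}\right) = \tfrac{227}{243} + O\!\left(\tfrac{1}{n}\right),
\end{align*}
as desired. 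The only nontrivial point is the coupling bound $\mathbb{P}[e^* \text{ normal}] \ge 1 - O(1/n)$ above, since $u'$ itself is a function of $M$; this is the main (minor) obstacle and is handled by the routine union bound over the at-most-$4$ possible values of $u'$ sketched above.
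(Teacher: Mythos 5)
Your proof is correct and, in fact, is precisely the alternative the authors allude to: immediately after the lemma the paper remarks that ``the above lemma can also be proved by analyzing the value of $y_e$ for each edge,'' and that is exactly what you do. The paper's own proof is a per-vertex case analysis, conditioning on whether $r$ is at graph distance at least three from $u$ (probability $\ge 1-17/n$, in which case $\mathbb{E}[y(\delta(u))]\le \frac{2}{3}\cdot\frac{16}{81}+\frac{65}{81}=\frac{227}{243}$), and then handling $u=r$, $u$ a terminal, and the remaining boundary cases, each contributing $O(1/n)$. You instead derive the exact per-edge identity
\begin{align*}
\mathbb{E}[y_e] = \tfrac{1}{4} + \tfrac{1}{24}\,\mathbb{P}[e\in\delta(r)] - \tfrac{1}{3}\,\mathbb{P}[e\text{ normal and even at last}],
\end{align*}
using $\mathbb{E}[z_e]=\tfrac12$, the disjointness of the two override events, and the values $z_e=\tfrac{5}{12}$ on $\delta(r)$ and $z_e=1$ on normal edges, then sum over $\delta(u)$. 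Both routes are driven by the same two ingredients --- the $\tfrac{16}{81}$ conditional even-at-last bound and a union bound showing the matched edge of $u$ is normal with probability $1-O(1/n)$ --- but your decomposition makes the arithmetic cleaner and isolates exactly which events push $y_e$ above or below the baseline $\tfrac14$. The one step that needs care, which you correctly flag and handle, is that $u'$ is itself a function of $M$: bounding $\mathbb{P}[r\in N(u')]$ requires the union bound over the at most four candidates $w\in N(u)$, giving $\le 16/n$ and hence $\mathbb{P}[e^*\text{ normal}]\ge 1-21/n$. Everything checks out.
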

\begin{proof}
    Consider the vertex $u$. Since $G$ is a $4$-regular graph, with probability at least $1 - \frac{17}{n}$, the shortest path from $u$ to $r$ contains at least three edges. Then, the matching edge $e$ that saturates $u$ is a normal edge and thus $y_e = \frac{1}{6}$ with probability at least $\frac{16}{81}$. Moreover, every other edge $f \in \delta(u)$ also has $y_f = \frac{1}{6}$ and $y(\delta(u)) = \frac{2}{3} \cdot \frac{16}{81} + \frac{65}{81} = \frac{227}{243} $. Similarly, with probability $\frac{1}{n}$, $r = u$ and $y(\delta(u)) = 1$. Otherwise, with probability $\frac{4}{n}$, $u$ is a terminal vertex and $y(\delta(u)) = \frac{1}{2} + \frac{1}{4} + 2 \cdot \frac{1}{6} = \frac{13}{12}$. Finally, if neither of these cases occur, $y(\delta(u)) \leq \frac{1}{2} + 3 \cdot \frac{1}{6} = 1$. Thus,
    \begin{align*}
        \mathbb{E}[y(\delta(u))] \leq \left(\frac{n-17}{n}\right)\cdot \frac{227}{243} + \frac{1}{n} + \frac{4}{n} \cdot \frac{13}{12} + \frac{12}{n} = \frac{227}{243} + O\left(\frac{1}{n}\right)
    \end{align*}
\end{proof}
 
The above lemma can also be proved by analyzing the value of $y_e$ for each edge. We analyze $y(\delta(u))$ instead simply to match the rest of this work. We are now ready to prove \cref{thm:degree-main}.

\begin{proof}[Proof of \cref{thm:degree-main}]
    By \cref{lem:degree-y-expected} and complementary slackness, the optimal $O$-join solution for $T$ has expected cost at most $\left( 0.4671 + O(\frac{1}{n}) \right) \cdot c(x)$. Moreover, by \cref{lem:degree-z-expected}, the expected cost of $T$ is at most $c(x)$, giving us a Eulerian circuit of expected cost at most $\left( 1.4671 + O(\frac{1}{n}) \right) \cdot c(x)$.
\end{proof}
We did not optimize the constant in the $O(\frac{1}{n})$ term. Second, we note that when $n$ is even, the term $O(\frac{1}{n})$ is not present.

\printbibliography

\end{document}